\crefname{appsec}{Appendix}{Appendices}
\crefname{sappsec}{Supplemental Appendix}{Supplemental Appendices}
\newtheorem{theorem}{Theorem}[section]
\newtheorem{lemma}{Lemma}[section]
\newtheorem{assumption}{Assumption}[section]
\theoremstyle{definition}
\newtheorem{remark}{Remark}[section]
\newtheorem{example}{Example}[section]
\DeclareMathOperator{\E}{\mathbf{E}}
\DeclareMathOperator{\tr}{\mathrm{tr}}
\DeclareMathOperator{\var}{\mathrm{var}}
\DeclareMathOperator*{\argmin}{arg\,min}
\title{Optimal Shrinkage Estimation of Fixed Effects in Linear Panel Data Models}
\author{Soonwoo Kwon\thanks{email:
    \texttt{\href{mailto:soonwoo\_kwon@brown.edu}{soonwoo\_kwon@brown.edu}}. I
    thank Donald Andrews and Timothy Armstrong for continuous support and
    helpful comments throughout the project. I also thank Jason Abaluck, Joseph
    Altonji, Ian Ball, Barbara Biasi, Xiaohong Chen, Jiaying Gu, John Eric
    Humphries, Yuichi Kitamura, Roger Koenker, Koohyun Kwon, Adam McCloskey,
    Cormac O'Dea, Vitor Possebom, Jonathan Roth, Nicholas Snashall-Woodhams, Suk
    Joon Son, Edward Vytlacil and Conor Walsh for helpful comments. Seminar
    participants at CU Boulder, University College London, Harvard/MIT,
    Pennsylvania State University, University of Toronto, Indiana University,
    and Princeton University have provided invaluable feedback. Juan Yamin Silva
    provided excellent research assistance. The New York City Department of
    Education generously provided the data used in the empirical
    section.}\\Brown University}
\date{September 5, 2025}
\begin{document}

\maketitle

\begin{abstract}
  Shrinkage methods are frequently used to improve the precision of least
  squares estimators of fixed effects. However, widely used shrinkage estimators
  guarantee improved precision only under strong distributional assumptions. I
  develop an estimator for the fixed effects that obtains the best possible mean
  squared error within a class of shrinkage estimators. This class includes
  conventional shrinkage estimators and the optimality does not require
  distributional assumptions. The estimator has an intuitive form and is easy to
  implement. Moreover, the fixed effects are allowed to vary with time and to be
  serially correlated, in which case the shrinkage optimally incorporates the
  underlying correlation structure. I also provide a method to forecast fixed
  effects one period ahead in this setting.
\end{abstract}

\newpage

\section{Introduction}
\label{sec:introduction}

Applied economists are often interested in unit-specific effects in linear panel
data models.\footnote{Readers are referred to \cite{walters2024empirical} for an
overview.} Estimation in such settings typically requires a large number of
unit-specific fixed effects. However, relatively small sample sizes at the unit
level yield noisy estimates of these effects. Empirical Bayes (EB) methods,
which shrink the least squares estimates, are frequently used to improve their
precision.

However, commonly used EB approaches guarantee such precision improvements only
under stringent assumptions. Moreover, existing methods that relax these
assumptions are typically limited to settings in which the true effects are
independent across units. %
This restriction reduces their applicability in contexts where disaggregated
effects or effects on multiple outcomes are of interest.\footnote{While there is
  a conceptual distinction between disaggregated effects and effects on multiple
  outcomes, they both lead to the same econometric setting with non-exchangeable
  effects within each unit. For brevity, I use the term disaggregated effects to
  refer to both.} For instance, it has been noted by
\cite{chetty2014MeasuringImpactsTeachers} that allowing for a time drift is
crucial in the context of teacher value-added (TVA). Yet, since value-added is
likely correlated across years within a single teacher, methods based on the
assumption of independent effects cannot be directly
applied.\footnote{Other examples where such disaggregated effects are of
  interest include insurance company-level effects
  (\citealp{abaluck2020MortalityEffectsChoicea}), time-drifting college
  counselor effects (\citealp{mulhern2023beyond}) and teacher effect on multiple
  outcomes (\citealp{rose2022effects}).} The goal of the paper is to provide an
estimation method that 1) guarantees precision improvement under weak
conditions and 2) applies to settings with disaggregated effects. The proposed
method is derived by extending the analysis by \cite{xie2012SUREEstimatesHeteroscedastica}.

Consider the canonical EB approach, which assumes:\vspace{-5pt}
\begin{equation}
  \label{eq:eb_model}
  \theta_{j} \overset{i.i.d.}{\sim} N(0,
  \lambda) \,\,\, \text{and} \,\,\, y_{j} \vert \theta_{j} \overset{indep}{\sim} N(\theta_{j},
  \sigma^{2}_{j}),\vspace{-5pt}
\end{equation}
where $\theta_{j}$ is the true effect and $y_{j}$ is an estimator of
$\theta_{j}$ with known variance $\sigma^{2}_{j}$. An implicit assumption is
that the true effect $\theta_{j}$ is independent of the variance
$\sigma^{2}_{j}$. The importance of this prior invariance assumption has been
discussed in \cite{xie2012SUREEstimatesHeteroscedastica} and
\cite{chen2022gaussian}, with the latter developing a novel nonparametric EB
method that relaxes it. The canonical EB model (\ref{eq:eb_model}) therefore
relies on three components: normality of $\theta_{j}$, normality of
$y_{j}\vert \theta_{j}$ and independence of $\theta_{j}$ and $\sigma_{j}$. I
refer to these collectively as the \textit{parametric/normal EB
  assumptions}.\footnote{I refer to such assumptions simply as \textit{EB
    assumptions} hereafter.} The posterior mean of $\theta_{j}$ is
$\ \hat{\theta}_{j}(\lambda) := E[\theta_{j} \vert y_{j}] =
\frac{\lambda}{\lambda + \sigma_{j}} y_{j}$, and the EB estimator
$\hat{\theta}_{j}(\hat{\lambda})$ is obtained by replacing the unknown
\textit{hyperparameter} $\lambda$ with an estimator $\hat{\lambda}$ based on the
marginal distribution of $y_{j}$ implied by (\ref{eq:eb_model}). The risk
properties of $\hat{\theta}_{j}(\hat{\lambda})$ are therefore inherently
sensitive to these assumptions.%

I propose an alternative shrinkage estimator with optimality properties that do
not rely on the EB assumptions, while retaining a simple form. The approach also
accommodates settings in which each unit-level effect can be decomposed into
multiple, possibly correlated, disaggregated effects. For convenience, I refer
to this disaggregated dimension as “time.” The proposed shrinkage method
incorporates the underlying correlation structure, in contrast to the canonical
EB estimator, which treats all effects as exchangeable. %
In the context of time-varying fixed effects, I develop an optimal forecasting
method to predict the fixed effect one period ahead. A special case of this
forecasting method coincides with the estimator of
\cite{chetty2014MeasuringImpactsTeachers}.

A simple illustration of the main method begins with a multivariate version of
(\ref{eq:eb_model}):
\begin{equation}
   \label{eq:eb_model_main}
  \theta_{j} \overset{i.i.d.}{\sim} N(0,
  \Lambda) \,\,\, \text{and} \,\,\, y_{j} \vert \theta_{j} \overset{indep}{\sim} N(\theta_{j},
  \Sigma_{j}),
\end{equation}
where $\theta_{j}$ and $y_{j}$ are now $T$-dimensional vectors, and $\Lambda$
and $\Sigma_{j}$ are $T \times T$ matrices, reflecting the presence of
disaggregated effects. As before, $\Sigma_{j}$ is assumed to be known. The
posterior mean of $\theta_{j}$ is
$\hat{\theta}_{j}(\Lambda) := \E[\theta_{j} \vert y_{j}] = \Lambda(\Lambda +
\Sigma_{j})^{-1}y_{j}$. Rather than using the marginal likelihood of $y_{j}$
implied by (\ref{eq:eb_model_main}), as in the EB approach, I follow
\cite{xie2012SUREEstimatesHeteroscedastica} and tune the hyperparameter
$\Lambda$ to minimize an estimate of the mean squared error (MSE). I refer to
this risk estimate as the unbiased risk estimate (URE), and the \textit{URE
  estimator} selects hyperparameters by minimizing the URE.\footnote{Both the
strategy and terminology follow
\cite{xie2012SUREEstimatesHeteroscedastica}. Related work using this approach
includes \cite{xie2016OptimalShrinkageEstimation} and
\cite{brown2018EmpiricalBayesEstimates}.} I show
that this leads to an estimator that has minimum asymptotic MSE within the
class of estimators considered, which in this simple case is
$\{(\hat{\theta}_{j}(\Lambda))_{j=1}^{J}: \Lambda \text{ is positive
  semidefinite}\}$. The estimators in this class take the intuitive form of
linearly shrinking $y_{j}$ according to their precision, and the class includes
widely used estimators such as the conventional EB estimators.

While the class of estimators is motivated by the model in
(\ref{eq:eb_model_main}), the optimality results require only a few
bounded-moment conditions and no distributional assumptions.\footnote{That said,
  the method does rely on (\ref{eq:eb_model_main}), or a variation of it
  introduced in later sections, to restrict the class of estimators.} The key reason is that
the URE remains “close” to the true risk under weak conditions, rendering such
assumptions unnecessary. To establish this optimality, I derive new results in a
multivariate normal means setting, which is a frequentist version of
(\ref{eq:eb_model_main})—that is, the problem of estimating the fixed mean
vectors $\{\theta_{j}\}_{j=1}^{J}$ based on observations
$y_{j} \overset{\mathrm{indep}}{\sim} N(\theta_{j}, \Sigma_{j})$ with known
$\Sigma_{j}$ for $j = 1, \dots, J$. Under heteroskedasticity, no estimator has
been shown to be risk-optimal (in the frequentist sense) unless $T = 1$, a case
handled by \cite{xie2012SUREEstimatesHeteroscedastica}. Allowing for $T > 1$ and
general forms of $\Sigma_{j}$, I derive an estimator that achieves the best
possible MSE within a class of estimators that take the form of posterior means
under (\ref{eq:eb_model_main}).

I use the proposed method to estimate a TVA model of public
schools of New York City. %
I revisit the policy exercise of releasing the bottom 5\% of teachers according
to the estimated fixed effects. I find that, relative to the conventional
methods, the composition of released teachers changes by 25\% when using the
proposed estimation method and by 58\% when using the proposed forecast
method. An out-of-sample exercise shows that the average value-added of the
teachers released under the forecast method is about 20\% lower compared to the
case where the conventional estimator is used. Estimates from forecast method
are shown to reduce the MSE by 35\%, indicating that the choice of estimator
makes a significant difference, and that it is crucial to allow for the
value-added to vary with time.

\bigskip

\noindent \textbf{Notation.} Let $\lVert \cdot \rVert$ denote the Euclidean norm
for both vectors and matrices (i.e., the Frobenius norm in the latter case). For
any matrix $A$, $(A)_{ij}$ denotes its $(i,j)$ entry and $s_{k}(A)$ its $k$th
largest singular value. The set of positive semidefinite $k \times k$ matrices
is denoted by $\mathcal{S}_{k}^{+}$, and the $k \times k$ identity matrix is
denoted by $I_{k}.$ For any $\{W_{ijt}\}$, let
$\overline{W}_{jt} = {n^{-1}_{jt}}\sum_{i=1}^{n_{jt}}W_{ijt}$ denote the
$(j,t)$-level sample average.

\section{URE estimators}
\label{sec:ure-estimators}

\subsection{Fixed effects and the normal means model}
\label{sec:fixed-effects-normal}

I consider the following linear panel data model,
\begin{equation}
  \label{eq:model}
  Y_{ijt} = X_{ijt}'\beta + \theta_{jt} + \varepsilon_{ijt}, 
\end{equation}
where $t = 1, \dots, T$, $j = 1,\dots, J$, and $i = 1, \dots, n_{jt}$. Here,
$\{(Y_{ijt}, X_{ijt}')\}$ denotes the observed data, $\varepsilon_{ijt}$ the
idiosyncratic shock, and $\theta_{jt}$ the time-varying fixed effect, which is
the object of interest.\footnote{I treat $\theta_{jt}$ as random to be
  consistent with the hierarchical models used to motivate the proposed
  estimators. However, since no restriction on its potential dependence
  structure with the observed covariates is imposed, it is still considered a
  fixed effect.}  Typically, $i$ denotes the individual level, $j$ the group level,
and $t$ the time dimension. The time-varying fixed effect for $j$,
$\theta_{j}:=(\theta_{j1},\dots \theta_{jT})'$ is assumed to be exchangeable
across $j$. For the idiosyncratic error terms, assume
$\overline{\varepsilon}_{j} = (\overline{\varepsilon}_{j1}, \dots,
\overline{\varepsilon}_{jT})'$ is independent across $j$ and independent of
$\theta_{j}$, with known variance $\Sigma_{j}$. In practice, a consistent
estimator is plugged in, which does not affect the asymptotic properties under
suitable conditions.

\begin{example}[Teacher value-added]
  In the TVA model, $j$ corresponds to teacher, $t$ to year, and
  $i$ to a student assigned to teacher $j$ in year $t$. The outcome variable
  $Y_{ijt}$ is a measure of student achievement and $X_{ijt}$ is a vector of
  student characteristics. The fixed effect $\theta_{jt}$ is the value-added of
  teacher $j$ in year $t$%
  . Assuming that $\varepsilon_{ijt}$ is i.i.d across all $i$, $j$, and
  $t$ with variance $\sigma^{2}_{\varepsilon}$, the variance of the estimate is
  given by
  $\Sigma_{j} = \sigma^{2}_{\varepsilon} \mathrm{diag}(1/n_{j1}, \dots,
  1/n_{jT})$.
\end{example}

I consider the asymptotic regime where $ J \to \infty$ with $T$ and $n_{jt}$
fixed. This captures the common situation where the number of effects to be
estimated is large, with observations for each fixed effect unit being
relatively small. I assume that a consistent estimator $\hat{\beta}$ of $\beta$
is readily available, which is easy to obtain under standard assumptions (see,
for example, \citealp{wooldridge2010EconometricAnalysisCross} for a
textbook-level discussion). For example, the within estimator is consistent in
the present setting under minor regularity conditions.

Let $y_{jt}$ denote the least squares estimator for the fixed
effects%
:
\begin{equation*}
  \label{eq:1}
  y_{jt} := \overline{Y}_{jt} - \overline{X}_{jt}'\hat{\beta}  =
  \overline{X}_{jt}'(\beta - \hat{\beta}) + \theta_{jt} +
  \overline{\varepsilon}_{jt} = \theta_{jt} + \overline{\varepsilon}_{jt} + O_{p}(J^{-1/2}).
\end{equation*}
To see the connection between this estimator and the normal means model, note
that $y_{j}:= (y_{j1}, \dots, y_{jT})' \to_{d}
\theta_{j}+\overline{\varepsilon}_{j}$ for each $j \leq J$.\footnote{Under a mild boundedness condition on $ {X}_{ijt}$
that ensures $\sup_{j}\lVert \overline{X}_{j} \rVert = O_{p}(1)$, this
convergence is uniform over $j$. %
} Further assuming that $\overline{\varepsilon}_{j}$ follows a normal
distribution (with known variance matrix $\Sigma_{j}$), it follows that
$ \left( \theta_{j}+\overline{\varepsilon}_{j} \right) \vert \theta_{j} \sim
N(\theta_{j}, \Sigma_{j})$ so that
\begin{equation}
  \label{eq:normal_means_model}
  y_{j} \vert \theta_{j} \sim N(\theta_{j}, \Sigma_{j}),
\end{equation}
approximately. Note that even if $\varepsilon_{ijt}$ is homoskedastic, the
different cell sizes $n_{jt}$ lead to heteroskedasticity of the estimators,
$\var (\overline{\varepsilon}_{jt}) = \var(\varepsilon_{ijt})/n_{jt}$. Due to
this connection, I now abstract away from the panel data model and focus on the
problem of estimating $\theta = (\theta_{1}', \dots, \theta_{J}')'$ after
observing the data $\{y_{j}\}_{j=1}^{J}$ that is generated according to
(\ref{eq:normal_means_model}). The variance matrix
$\Sigma_{j} \in \mathcal{S}_{T}^{+}$ is assumed to be known, or consistently
estimable.

\subsection{Class of shrinkage estimators}
\label{sec:class-shrink-estim}

The URE estimators will be shown to be optimal within a class of shrinkage
estimators that can be viewed as Bayes estimators under a certain hierarchical
model. To motivate the class of estimators considered, suppose that the true
effects are drawn from a normal distribution:
\begin{equation}\label{eq:second_level}
  \theta_{j} \overset{\mathrm{indep}}{\sim} N(\mu_{j},\Lambda),
\end{equation}
where the location vector $\mu_{j} \in\mathbf{R}^{T} $ and the variance matrix
$\Lambda \in \mathcal{S}_{T}^{+}$ are unknown. Under
\eqref{eq:normal_means_model} and \eqref{eq:second_level}, the conditional
expectation of $\theta_{j}$ given $y_{j}$\footnote{This is the posterior mean of
  $\theta_{j}$ if we interpret (\ref{eq:normal_means_model}) and
  \eqref{eq:second_level} as a Bayesian model, except for the fact that
  $\mu_{j}$ and $\Lambda$ are treated as unknown.} is
\begin{equation}\label{eq:post_mean}
\hat{\theta}_{j}(\mu_{j}, \Lambda) := \E[\theta_{j} \vert y_{j}] =  \left( I_{T} - \Lambda(\Lambda + \Sigma_{j})^{-1} \right)\mu_{j} +  \Lambda(\Lambda + \Sigma_{j})^{-1}y_{j}
\end{equation}
The unknown parameters $\mu_{j}$ and $\Lambda$ are later tuned to achieve
desirable risk properties. The restriction one imposes on $\mu_{j}$ and
$\Lambda$ determines the class of estimators. I denote by
$\mathcal{M} \subset \mathbf{R}^{J \times T}$ and
$\mathcal{L} \subset \mathcal{S}_{T}^{+}$ the (possibly random) sets that
reflect this restriction. For example, setting $\mathcal{M} = \{(0,\dots,0)'\}$
corresponds to the common practice of shrinking to the origin, where the degree
(and direction) of shrinkage is determined by $\Lambda$. Also, if one takes
$\mathcal{L}$ to be the set of positive semidefinite Toeplitz matrices, the
estimators place more weight on estimates from nearby time periods.\footnote{In
  this case, the dimension of $\Lambda$ is reduced to $T$, compared to
  $T(T+1)/2$ when $\Lambda$ is left unrestricted. Hence, the restrictions can be
  imposed for computational considerations as well.}

Analogous to the univariate case, I refer to $\Lambda(\Lambda +
\Sigma_{j})^{-1}$ as the shrinkage matrix. Unlike the univariate case where $T =
1$, this estimator not only shrinks the magnitude of the deviations from
$\mu_{j}$ but also applies a rotation. That is, it differentially shrinks linear
combinations of the elements of $y_{j}$, rather than each element
individually. This reflects the fact that the optimal amount of shrinkage
depends not only on the variances but also on the covariances, either directly
or through the sample sizes $n_{jt}$.

\begin{example}[Independent effects]
  If $\mathcal{L} = \{\lambda I_{T}: \lambda \geq 0\}$ and
  $\Sigma_{j} = \mathrm{diag}(\sigma^2_{j1}, \dots, \sigma^2_{jT})$, then the
  $t$th component of $\hat{\theta}_{j}(\mu_{j}, \Lambda)$ is given as
\begin{equation*}
 \left( 1 - \frac{\lambda}{\lambda + \sigma^{2}_{jt}} \right)
  \mu_{jt} +  \frac{\lambda}{\lambda + \sigma^{2}_{jt}} y_{jt},
\end{equation*}
which is the form of shrinkage estimators\footnote{More precisely, the
  estimators used in the literature take this form without the time-varying
  component, and thus effects are aggregated at the $j$ level so that the
  subscript $t$ disappears.} used in the literature with a specific choice of
$\lambda$ and $\mu$.
\end{example}

\begin{example}[Perfectly correlated effects]
  Let $\mathbf{1} $ denote the $T$-vector with all elements equal to
  $1$. Consider the case where $\Lambda = \lambda \mathbf{1}\mathbf{1}'$. Let
  $\Sigma_{j} = \sigma^{2} \mathrm{diag}(1/n_{j1}, \dots, 1/n_{jT})$ and denote
 the teacher-level sample size by $n_{j} = \sum_{t=1}^{T}n_{jt}$. Then,
  we have
  $\hat{\theta}(0, \Lambda)= \mathbf{1} \frac{ \lambda}{\sigma^{2}/n_{j} +
    \lambda} \left( \frac{1}{n_{j}}\sum_{t=1}^{T} n_{jt}y_{jt} \right)$.  The
  term $\frac{1}{n_{j}}\sum_{t=1}^{T} n_{jt}y_{jt}$ is a weighted mean of the
  least squares estimators of teacher $j$ in year $t$, and thus is equal to the least
  squares estimator for the teacher-level fixed effect without time drift. This
  is the estimator used in the majority of the TVA
  literature with an appropriate choice of $\lambda$.
\end{example}

Now, writing $\mu = (\mu_{1}', \dots, \mu_{J}')'$ and
$\hat{\theta}(\mu, \Lambda) = (\hat{\theta}_{1}(\mu_{1}, \Lambda)', \dots,
\hat{\theta}_{J}(\mu_{J}, \Lambda)')$, the class of estimators considered is
$$\widehat\Theta(\mathcal{M},\mathcal{L}) :=
\{\hat{\theta}(\mu, \Lambda): \mu \in \mathcal{M} \text{ and }
\Lambda \in \mathcal{L} \}.$$ This is precisely the set of estimators that take
the form of ``posterior means'' given in (\ref{eq:post_mean}), under the
restriction that $\mu \in \mathcal{M}$ and $\Lambda \in \mathcal{L}$. I refer to
$\mu$ and $\Lambda$ as the hyperparameters and $\mathcal{M}$ and $\mathcal{L}$
as the hyperparameter spaces.

I consider three different specifications of $\mathcal{M}$, with increasing
flexibility. The first specification, which is the simplest, takes
$\mathcal{M} = \mathcal{M}_{\mathrm{m}}:= \{(\overline{y}'_{J}, \dots,
\overline{y}_{J}')'\}$, where
$\overline{y}_{J} := \frac{1}{J}\sum_{j=1}^{J}y_{j}$. This corresponds to
shrinking to the grand mean. If the covariates in the panel data model
(\ref{eq:model}) included time dummies, then we have $\overline{y}_{J} = 0$,
which corresponds to the common practice of shrinking the least squares estimates
toward the origin.

The second specification takes
$\mathcal{M} = \mathcal{M}_{\mathrm{g}}:= \{(\mu_{0}', \dots, \mu_{0}')':
\mu_{0} \in \mathcal{B}\}$, which shrinks the data toward a general location
$\mu_{0}$. The restriction that $\mu_{0}$ lies in $\mathcal{B}$ is a technical
condition that ensures a certain boundedness property.\footnote{The set
  $\mathcal{B}$ is defined as
  $\mathcal{B} := \{\mu \in \mathbf{R}: \lvert \mu_{t}\rvert \leq
  q_{1-\tau}(\{\lvert y_{jt}\rvert \}_{j=1}^{J}) \,\, \mathrm{for} \,\, t= 1,
  \dots, T \},$ where $q_{1-\tau}(\{ \lvert y_{jt} \rvert \}_{j=1}^{J})$ denotes
  the $1-\tau$ sample quantile of $\{ \lvert y_{jt} \rvert \}_{j=1}^{J}$. A
  similar idea was used by \cite{brown2018EmpiricalBayesEstimates}. I recommend
  a small $\tau$, such as $\tau = .01$. This restricts $\mu_{0}$ to be smaller
  than the 99th percentile of the data in terms of magnitude. This is a mild
  restriction that ensures the data are not shrunk to somewhere with almost no
  observations nearby.} This estimator shrinks the data $y_{j}$ to a common but
general location $\mu_{0}$, which will later be chosen in a data-driven way to
minimize the risk.

For the third specification, consider a setting where we have additional data,
$Z_{jt} \in \mathbf{R}^{k} $. In the panel data model (\ref{eq:model}), these
are exactly the covariates that cannot be included as regressors because of the
inclusion of $(j,t)$-level fixed effects. I consider the specification
$\mathcal{M} = \mathcal{M}_{\mathrm{cov}} := \{(Z_{1}'\gamma, \dots,
Z_{J}'\gamma): \gamma \in \Gamma\}$, where $Z_{j} = (Z_{j1}, \dots, Z_{jT})'$,
which corresponds to a class of estimators that shrinks the least square
estimates $y_{j}$ to a linear combination of the covariates
$Z_{j}'\gamma$.\footnote{While I only consider the simple case of shrinking
  toward a linear combination of the covariates, the estimator can be extended
  in a straightforward manner to shrink toward more general functions of the
  covariates.} Unlike the other two specifications, each estimate $y_{j}$ is
shrunk to a different location. Again, the restriction that $\gamma \in \Gamma$
is a technical restriction that ensures a certain boundedness property holds for
optimality results.\footnote{The set $\Gamma$ is defined as
  $\Gamma := \{\gamma \in \mathbf{R}^{k}: \lVert \gamma \rVert \leq B \lVert
  \hat{\gamma}^{\mathrm{OLS}} \rVert \}$ where $B$ is a large constant and
  $\hat{\gamma}^{\mathrm{OLS}}$ is the OLS estimator obtained by regressing
  $y_{j}$ on $Z_{j}$. This ensures that the OLS coefficient is a feasible
  choice along with other coefficients with much larger magnitude as well.}

\begin{example}[Teacher value-added]
  In a TVA setting, teacher (or teacher-year) level covariates are
  frequently available. Such covariates cannot be included in (\ref{eq:model}),
  since they are absorbed into the teacher-year fixed effects. However, one can use such
  covariates to improve the precision of the teacher fixed-effect
  estimates by using $\hat{\theta}^{\mathrm{URE}, \mathrm{cov}}$. Frequently available teacher level covariates include, for example,
  gender, tenure, and union status of a teacher. Class size, which is almost
  always available, is also an example of such covariate. Asymptotically, the inclusion
  of such covariates are guaranteed to improve the MSE.%
\end{example}

\subsection{Risk estimate and URE estimators}
\label{sec:risk-estimate}

To evaluate the performance of different estimators, I condition on $\theta$ and
use the compound MSE as the performance criterion. To be specific, writing the
compound loss as
$\ell(\theta, \hat{\theta}) := \frac{1}{J}(\hat{\theta} - \theta)' (\hat{\theta}
- \theta),$ the (conditional) compound MSE is given as
$R(\theta, \hat{\theta}) = \E_{\theta} \ell(\theta, \hat{\theta})$. %
This is a frequentist risk criterion since we condition on the parameters. For
what follows, I treat $\theta$ as fixed with the understanding that all claims
are conditional on a fixed sequence $\{\theta_{j}\}_{j=1}^{\infty}$.\footnote{By considering MSE
  conditional on $\theta$, the optimality results established later on will not
  depend on normality of $\theta$ which was used to motivate the class of estimators.} The
subscript $\theta$ in $\E _{\theta}$ has been used to make clear that the
expectation is conditional on this sequence of $\theta$, but I omit the
subscript unless ambiguous otherwise.

Given this performance criterion, an optimal choice of the hyperparameters is
$ (\tilde{\mu}^{\mathrm{OL}}_{\mathcal{M}, \mathcal{L}},
\tilde{\Lambda}^{\mathrm{OL}}_{{\mathcal{M}, \mathcal{L}}}) := \argmin_{(\mu,
  \Lambda) \in \mathcal{M} \times \mathcal{L}} \ell(\theta, \hat{\theta}(\mu,
\Lambda))$, which gives the oracle loss ``estimator''
$\tilde{\theta}^{\mathrm{OL}}(\mathcal{M}, \mathcal{L}) :=
\hat{\theta}(\tilde{\mu}^{\mathrm{OL}}_{\mathcal{M}, \mathcal{L}},
\tilde{\Lambda}^{\mathrm{OL}}_{{\mathcal{M}, \mathcal{L}}})$. Of course, this is
infeasible because the loss function depends on the unobserved true mean
vectors. However, it turns out that the risk is estimable via Stein's unbiased
risk estimate (SURE). The idea is to choose the hyperparameters by minimizing
this risk estimate.\footnote{The idea of minimizing SURE to choose tuning
  parameters has been around since at least \cite{li1985SteinUnbiasedRisk}, and
  has been introduced to this setting by
  \cite{xie2012SUREEstimatesHeteroscedastica}.} %
The risk estimate is defined as
$\mathrm{URE}(\mu, \Lambda) = \frac{1}{J}\sum_{j}\mathrm{URE}_{j}(\mu_{j},
\Lambda)$, where
\begin{equation*}
   \mathrm{URE}_{j}(\mu_{j}, \Lambda) :=  \tr ( \Sigma_{j}) -
     2 \tr ((\Lambda+\Sigma_{j})^{-1}\Sigma_{j}^{2}) + (y_{j}-\mu_{j})'[(\Lambda+\Sigma_{j})^{-1}\Sigma_{j}^{2}(\Lambda+\Sigma_{j})^{-1}](y_{j}-\mu_{j}).
\end{equation*}
It is easy to show that $URE(\mu, \Lambda)$ is indeed an unbiased estimator of
the true risk as in $R(\theta, \hat{\theta}(\mu, \Lambda))$. The aim is to show
that choosing hyperparameters to minimize $\mathrm{URE}(\mu, \Lambda)$ is as
good as, in terms of asymptotic risk, choosing them by minimizing the true
loss. The URE estimator is given as
$\hat{\theta}^{\mathrm{URE}}(\mathcal{M}, \mathcal{L}) :=
\hat{\theta}(\hat{\mu}^{\mathrm{URE}}_{\mathcal{M}, \mathcal{L}},
\hat{\Lambda}^{\mathrm{URE}}_{{\mathcal{M}, \mathcal{L}}})$, where the
hyperparameters are chosen to minimize $\mathrm{URE}(\mu, \Lambda)$:
$ (\hat{\mu}^{\mathrm{URE}}_{\mathcal{M}, \mathcal{L}},
\hat{\Lambda}^{\mathrm{URE}}_{{\mathcal{M}, \mathcal{L}}}) := \argmin_{(\mu,
  \Lambda) \in \mathcal{M} \times \mathcal{L}} \mathrm{URE}(\mu, \Lambda)$.

In the EB framework, the hierarchical model given by
(\ref{eq:normal_means_model}) and (\ref{eq:second_level}) is taken as the true
data generating process, and the hyperparameters are estimated by using the
marginal distribution of the data implied by this model,
$y_{j} \overset{\text{indep}}{\sim} N(\mu_{j}, \Lambda + \Sigma_{j})$, either by
maximum likelihood or the method of moments. I denote the EB maximum likelihood
estimator (EBMLE) by
$\hat{\theta}^{\mathrm{EBMLE}}(\mathcal{M}, \mathcal{L}) =
\hat{\theta}(\hat{\mu}^{\mathrm{EBMLE}}, \hat{\Lambda}^{\mathrm{EBMLE}})$ where
$(\hat{\mu}^{\mathrm{EBMLE}}, \hat{\Lambda}^{\mathrm{EBMLE}})$ maximizes the
marginal likelihood subject to $\mu \in \mathcal{M}$ and
$\Lambda \in \mathcal{L}$.

For each of the three specifications $ \mathcal{M}_{m}$, $\mathcal{M}_{g}$ and
$\mathcal{M}_{\mathrm{cov}}$, I define the corresponding URE estimators as
$\hat{\theta}^{\mathrm{URE}, \mathrm{m}} :=
\hat{\theta}^{\mathrm{URE}}(\mathcal{M}_{\mathrm{m}}, \mathcal{S}_{T}^{+})$,
$\hat{\theta}^{\mathrm{URE}, \mathrm{g}} :=
\hat{\theta}^{\mathrm{URE}}(\mathcal{M}_{\mathrm{g}}, \mathcal{S}_{T}^{+})$ and
$\hat{\theta}^{\mathrm{URE}, \mathrm{cov}} :=
\hat{\theta}^{\mathrm{URE}}(\mathcal{M}_{\mathrm{cov}}, \mathcal{S}_{T}^{+})$. The
corresponding oracle estimators $\tilde{\theta}^{\mathrm{OL}, \mathrm{m}}$,
$\tilde{\theta}^{\mathrm{OL}, \mathrm{g}}$ and
$\tilde{\theta}^{\mathrm{OL}, \mathrm{cov}}$ are defined analogously.

\section{Optimality of the URE estimators}
\label{sec:optim-results-shrink}

I now show that the URE estimators defined in Section \ref{sec:risk-estimate}
achieve the smallest possible asymptotic MSE among all estimators
in the corresponding classes. I provide these optimality results for
$\hat{\theta}^{\mathrm{URE}, \mathrm{m}}$ and
$\hat{\theta}^{\mathrm{URE}, \mathrm{g}}$. An analogous result for
$\hat{\theta}^{\mathrm{URE}, \mathrm{cov}}$ is deferred to Appendix \ref{appsec:opt-cov},
as the arguments are similar. The main step in establishing such optimality
is to show that the corresponding UREs are uniformly close to the true
risk. Since I use an unbiased estimate of risk, this essentially reduces to a
uniform (weak) law of large numbers (ULLN) argument. I first establish a simple
high-level result for a generic URE estimator and then verify that the
conditions for this result hold for each estimator under appropriate lower-level
conditions.

The following lemma shows that if $\mathrm{URE}(\mu, \Lambda)$ is uniformly close to the
true loss in $L^{1}$, in the sense that
\begin{equation}
  \label{eq:URE_loss_l1}
  \sup_{(\mu, \Lambda) \in
    \mathcal{M} \times \mathcal{L}} \,\,
  \lvert \mathrm{URE}(\mu, \Lambda) -  \ell(\theta,
    \hat{\theta}(\mu, \Lambda)) \rvert \overset{L^{1}}{\to} 0,
\end{equation}
then the URE estimator has asymptotic risk as good as the oracle.

\begin{lemma}\label{lem:L1_oracle_generic} Suppose (\ref{eq:URE_loss_l1}) holds. Then,
    \begin{equation}\label{eq:optimal_risk}
      \underset{J \to \infty}{\lim} \,\big( R(\theta,
        \hat{\theta}^{\mathrm{URE}}(\mathcal{M}, \mathcal{L})) - R(\theta, \tilde{\theta}^{\mathrm{OL}}(\mathcal{M}, \mathcal{L})) \big) = 0.
    \end{equation}
\end{lemma}
\begin{proof}
See Appendix \ref{appsec:proof-theorem-l1-gen}.
\end{proof}
\noindent %

Therefore, establishing uniform consistency in (\ref{eq:URE_loss_l1}) for
each estimator class is a key step. Then, since
$ R(\theta, \tilde{\theta}^{\mathrm{OL}}(\mathcal{M}, \mathcal{L})) =
\min_{\hat{\theta} \in \widehat\Theta(\mathcal{M}, \mathcal{L})} R(\theta,
\hat{\theta})$, it follows from Lemma \ref{lem:L1_oracle_generic} that the URE
estimators are asymptotically optimal within the class of estimators
$\widehat\Theta(\mathcal{M}, \mathcal{L})$. The difference between the URE and
the true loss can be decomposed as
\begin{align}
    & \,\, \mathrm{URE}(\mu, \Lambda) - \ell(\theta, \hat{\theta}(\mu, \Lambda)) \nonumber \\
    =  & \textstyle ( \mathrm{URE}(0, \Lambda) - \ell(\theta, \hat{\theta}(0, \Lambda))) - \frac{2}{J} \sum_{j=1}^{J}\left( \mu_{j}' (\Lambda + \Sigma_{j})^{-1} \Sigma_{j}(y_{j}-\theta_{j}) \right).  \label{eq:URE_minus_loss_genmu}
\end{align}
I show that both terms in the last line converge to $0$ in $L_{1}$, uniformly
over each hyperparameter space I consider. Note that the first term of the last
line does not depend on $\mu$, and thus is common for all three estimators.

The following assumption states that $y_{j}$'s are independent with (uniformly)
bounded fourth moments, and that the smallest eigenvalue of the variance of
$y_{j}$ is bounded away from zero. I write $y_{j} \sim (\theta_{j}, \Sigma_{j})$
to denote that $y_{j}$ follows a distribution such that
$ \E y_{j} = \theta_{j} $ and $ \var (y_{j}) = \Sigma_{j}$. The supremum
$\sup_{j}$ is taken over all $j \geq 1$, and likewise for $\inf_{j}$. Hence, the
assumption imposes conditions on the sequences
$ \left\{ \E \lVert y_{j} \rVert \right\}_{j=1}^{\infty}$ and
$ \left\{s_{T}(\Sigma_{j}) \right\}_{j=1}^{\infty}$.

\begin{assumption}[Independent sampling and boundedness]
  \label{assum:bounded}

  $ \mathrm{(i)} \, y_{j} \overset{\mathrm{indep}}{\sim}
  (\theta_{j}, \Sigma_{j})$,  \\$ \mathrm{(ii)} \, \sup_{j} \E \lVert y_{j} \rVert^{4} <
  \infty$ and  $ \mathrm{(iii)}\, 0 < \textstyle\inf_{j}s_{T}(\Sigma_{j}).$
\end{assumption}

In the case where $\Sigma_{j}$ is diagonal for all
$j$, Assumption \ref{assum:bounded} (iii) boils down to assuming that $\var
(y_{jt})$ is bounded away from zero over $j$ and
$t$. Also, in the case where $\Sigma_{j} = \Sigma$ for all
$j$, the assumption trivially holds as long as $\Sigma$ is
invertible. %
It turns out that Assumption \ref{assum:bounded} is enough to ensure uniform
convergence of the first term of \eqref{eq:URE_minus_loss_genmu}.

To show convergence of the second term of (\ref{eq:URE_minus_loss_genmu}), note
that for both $\hat{\theta}^{\mathrm{URE}, \mathrm{m}}$ and
$\hat{\theta}^{\mathrm{URE}, \mathrm{g}}$, the centering term $\mu_{j}$ does not
depend on $j$, so that I can write $\mu_{0} = \mu_{j}$ for all $j$. Hence, the
required convergence result is
\begin{equation}
  \label{eq:sec_term}
\E\Big[\textstyle\sup_{\mu_{0} \in \mathcal{M}_{0}, \Lambda \in \mathcal{S}^{+}_{T}} \left\lvert   \frac{1}{J} \sum_{j=1}^{J} \mu_{0}' (\Lambda + \Sigma_{j})^{-1}
      \Sigma_{j}(y_{j}-\theta_{j}) \right\rvert \Big] \to 0
\end{equation}
where $\mathcal{M}_{0} = \{\overline{y}_{J}\}$ for
$\hat{\theta}^{\mathrm{URE}, m}$ and $\mathcal{M}_{0} = \mathcal{B}$ for
$\hat{\theta}^{\mathrm{URE}, g}$. It is clear that some form of boundedness
condition on $\mathcal{M}_{0}$ is necessary for such a convergence result to
hold. For $\hat{\theta}^{\mathrm{URE}, \mathrm{m}}$, Assumption
\ref{assum:bounded} (ii) ensures this. For $\hat{\theta}^{\mathrm{URE},
\mathrm{g}}$, we require conditions that guarantee boundedness of $\mathcal{B}$
in a suitable sense, and the following assumption provides such a guarantee.
\begin{assumption}[Bounded sample quantiles]\label{assu:bdd_quan}
$ \limsup_{J}\E  q_{1-\tau}(\{ y_{jt}^{2} \}_{j=1}^{J}) < \infty$.
\end{assumption}
This assumption states that the expectation of the sample quantile of
$\{ y_{jt}^{2} \}_{j=1}^{J}$ is uniformly bounded. This is a rather mild
condition that is satisfied, for example, if the tail probabilities of
$y_{jt}^{2}$ vanish uniformly to 0. A lower level condition that is stronger but
easy to interpret is provided in Appendix \ref{sec:suff-cond-assumpt}. We now
state the optimality of $\hat{\theta}^{\mathrm{URE}, \mathrm{m}}$ and
$\hat{\theta}^{\mathrm{URE}, \mathrm{g}}$.
\begin{theorem}[Optimality of URE estimators]\label{thm:URE_opt_m_g} Suppose Assumption
  \ref{assum:bounded} holds. Then,\\[.5ex]
  (i) $\sup_{\Lambda \in \mathcal{S}_{T}^{+}}  \big|
      \mathrm{URE}(\overline y_{J}, \Lambda) - \ell(\theta, \hat{\theta}(\overline y_{J},
      \Lambda)) \big|  \overset{L^{1}}{\to} 0,$ and \\[.5ex]
      (ii) if Assumption \ref{assu:bdd_quan} holds as well, then $\sup_{\mu \in \mathcal{B}, \Lambda \in \mathcal{S}_{T}^{+}}  \big|
      \mathrm{URE}(\mu, \Lambda) - \ell(\theta, \hat{\theta}(\mu,
      \Lambda)) \big|  \overset{L^{1}}{\to} 0.$
    \end{theorem}
    \begin{proof}
      See Appendix \ref{sec:proof-main-theorems}.\footnote{\label{fn:
          diff_with_xie}The proof technique used in related papers (e.g.,
        \citealp{xie2012SUREEstimatesHeteroscedastica,
          xie2016OptimalShrinkageEstimation})
        does not go through in this setting mainly because 1)
        the matrix hyperparameter $\Lambda$ governs the direction of
        shrinkage as well as the magnitude and 2) normality of
        $y_{j}$ is not assumed. %
      }
    \end{proof}

    The proof of this theorem, provided in Appendix
    \ref{appsec:proof-theorem-URE}, relies on establishing a ULLN over
    independent but non-identically distributed sequences of data, followed by
    verifying uniform integrability to strengthen the mode of convergence from
    convergence in probability to convergence in $L^{1}$. Note that the uniform
    convergence is shown over the largest possible hyperparameter space for
    $\Lambda$, $\mathcal{S}_{T}^{+}$, and thus convergence over any subset
    $\mathcal{L} \subset \mathcal{S}_{T}^{+}$ follows immediately. I note that
    Assumption \ref{assum:bounded} is stronger than necessary. However, this
    stronger assumption is not particularly restrictive and has the clear
    advantages of simplifying the proofs and being easy to
    interpret.%

    By Lemma \ref{lem:L1_oracle_generic}, it follows that
    $\hat{\theta}^{\mathrm{URE},\mathrm{m}}$ and
    $\hat{\theta}^{\mathrm{URE},\mathrm{g}}$ are asymptotically optimal within
    $\widehat\Theta(\mathcal{M}_{m}, \mathcal{S}_{T}^{+})$ and
    $\widehat\Theta(\mathcal{M}_{g}, \mathcal{S}_{T}^{+})$, respectively. Note
    that the optimality of the URE estimators requires only mild conditions on
    the moments of the data, which is in contrast with the EB estimators that
    require stringent distributional assumptions. The EB estimators are optimal
    in the sense of \cite{robbins1964EmpiricalBayesApproach}\footnote{The
      estimator obtains the Bayes risk of the model
      (\ref{eq:normal_means_model}) and (\ref{eq:second_level}).} only when 1)
    the normality assumptions for both the least squares estimator and the true
    fixed effect hold and 2) the true fixed effect and variance of the least
    squares estimator are independent.

    The normality assumption on the true fixed effect is typically difficult to
    justify.\footnote{Some evidence on the violation of such assumption in the
      context of teacher value-added is provided in
      \cite{gilraine2020NewMethodEstimatinga}.} The optimality results here are
    conditional on a sequence of true mean vectors that satisfy a mild
    boundedness condition. The independence between the true fixed effect and
    the variance of the least squares estimator can be easily violated in
    empirical settings as well. Since the variance of the least squares
    estimator is inversely proportional to $n_{jt}$, the assumption is violated
    if the fixed effect depends on $n_{jt}$. If teachers with
    higher value-added teach more students, or if the size of the class is
    related to teaching effectiveness, then such independence is unlikely to
    hold.

    The nonparametric EB literature (e.g.,
    \citealp{jiang2009GeneralMaximumLikelihood},
    \citealp{brown2009NonparametricEmpiricalBayes}) provides an alternative
    method to relax the normality assumption on the true effects. In this
    setting, the distribution of the true fixed effect is left unspecified,
    thereby broadening the class of estimators. This approach, however, is
    complementary to—but does not dominate—the URE approach, for two main
    reasons. First, the risk properties of currently available nonparametric EB
    methods still rely on the independence of the true fixed effect and the
    variance of the least squares estimator (or at least a structured
    relationship between the two, as in \citealp{chen2022gaussian}), as well as
    on a normality assumption for the least squares estimators. Second, existing
    approaches address only the case $T=1$. Even when extended to $T>1$, the
    associated computation is likely to be infeasible for even moderate values
    of $T$.

    The URE estimators can also be shown to dominate the unshrunk (unbiased)
    estimator, $\hat{\theta}^{\mathrm{ub}} =y$, which corresponds to using the
    least squares estimators without any shrinkage in the context of fixed
    effects. Because there is no $\Lambda \in \mathcal{S}_{T}^{+}$ such that
    $\hat{\theta}(\mu_{j},\Lambda) = y$, the estimator
    $\hat{\theta}^{\mathrm{ub}}$ is not included in any of the classes of
    estimators I consider. However, a simple approximation argument can be used
    to establish that
    ${\limsup}_{J \to \infty}\big( R(\theta, \hat{\theta}^{\mathrm{URE},
      \mathrm{m}}) - R(\theta, \hat{\theta}^{\mathrm{ub}}) \big) \leq
    0$.%
    This shows that using
    $\hat{\theta}^{\mathrm{URE},m}$ is at least as good as using the unshrunk
    estimator, providing a strong justification for shrinkage when the goal is
    to improve precision. This is a property that EB methods do not enjoy unless
    the EB assumptions are satisfied.

\section{Forecasting $\theta_{T+1}$}
\label{sec:optim-shrink-pred}

In settings where $t$ represents time, forecasts of future effects are often of
interest. For example, in teacher evaluation and retention, policies based on
forecasts of future effects—rather than past effects—can improve future student
outcomes. To this end, I consider the problem of predicting
$\theta_{T+1} = (\theta_{1,T+1}, \dots, \theta_{J,T+1})'$. The approach is
analogous to the URE framework proposed earlier: I first derive a class of
predictors based on a hierarchical model, and then tune the hyperparameters by
minimizing an unbiased estimate of the prediction error (UPE). The resulting
forecasts are referred to as UPE forecasts.

For simplicity, I consider the case where the fixed effects are demeaned, i.e.,
(\ref{eq:second_level}) with $\mu_{j} = 0$. Write $\Lambda$ and $\Sigma_{j}$ in
the following block matrix form:
\begin{equation*}
  \Lambda =
  \begin{pmatrix}
    \Lambda_{-T} & \Lambda_{T,-T} \\
    \Lambda_{T,-T}' & \lambda_{T}
  \end{pmatrix}, \Sigma_{j} =
  \begin{pmatrix}
    \Sigma_{j,-T} & \Sigma_{j,T,-T} \\
    \Sigma_{j,T,-T}' & \Sigma_{j,T}
  \end{pmatrix}
  =
  \begin{pmatrix}
    \Sigma_{j,1} & \Sigma_{j,1,-1}' \\
    \Sigma_{j,1,-1} & \Sigma_{j,-1}
  \end{pmatrix}
\end{equation*}
where $\Lambda_{-T}$, $\Sigma_{j,-T}$ and $\Sigma_{j,-1}$ are
$(T-1)\times (T-1)$ matrices. 

A recommended choice for the hyperparameter space $\mathcal{L} \subset
\mathcal{S}^{+}_{T}$ is
\begin{equation*}
\mathcal{L} := \Big\{ \Lambda \in \mathcal{S}_{T}^{+}: s_{1}(\Lambda) \leq K
  s_{1}\textstyle\Big( \frac{1}{J}\sum_{j=1}^{J}y_{j}y_{j}' \Big) \Big\} 
\end{equation*}
for some large number $K$ that does not depend on $J$. Under the hierarchical
model,
$\frac{1}{J}\sum_{j=1}^{J}\E y_{j}y_{j}' = \Lambda +
\frac{1}{J}\sum_{j=1}^{J}(\theta_{j}\theta_{j}' + \Sigma_{j})$, and thus
$\frac{1}{J}\sum_{j=1}^{J}y_{j}y_{j}'$ gives a sense of the scale of
$\Lambda$. By scaling this up by $K$, the bound becomes less restrictive. This
makes $\mathcal{L}$ bounded in a certain sense, which is necessary for the
optimality
argument. %

The aim is to tune the hyperparameter in a way that it minimizes prediction
error of predicting $\theta_{T+1} := (\theta_{1,T+1}, \dots, \theta_{J,
  T+1})'$. However, the challenge is that an unbiased estimator of this
prediction error is unavailable because we do not observe data for period
$T+1$. The strategy is to tune the hyperparameters by considering the problem of
predicting $\theta_{T} = (\theta_{1T}, \dots, \theta_{JT})'$ using the first
$T-1$ periods of data. Under a suitable stationarity condition, this will lead
to optimal hyperparameter selection for predicting $\theta_{T+1}$ with
$y_{j,-1}$ as well.

Consider the problem of estimating $\theta_{T}$ using observations from the
first $T-1$ periods. Let
$y_{j,-t} = (y_{j1}, \dots, y_{j, t-1}, y_{j,t+1}, \dots y_{j,T})'$ and
$y_{-t} = (y_{1,-t}', \dots, y_{J,-t}')'$ denote the vectors $y_{j}$ and $y$
with period-$t$ observations removed, respectively. The
class of estimators I consider is the posterior mean implied by the hierarchical
model,\vspace*{-5pt}
\begin{equation*}
  \E [\theta_{jT}\vert y_{-T}] = \Lambda_{T,-T}' (\Lambda_{-T} + \Sigma_{j,-T})^{-1}y_{j,-T}.\vspace*{-5pt}
\end{equation*}
\noindent Define the multiplicative factor as
$B(\Lambda,\Sigma_{-T}): = (\Lambda_{-T} + \Sigma_{-T})^{-1} \Lambda_{T,-T}$. The
performance criterion is the mean prediction error,
$\E \mathrm{PE}(\Lambda; T)$, where the prediction error is given as\vspace*{-5pt}
\begin{equation*}
 \mathrm{PE}(\Lambda; T) := \textstyle \frac{1}{J}  \sum _{j=1}^{J}(B(\Lambda,\Sigma_{j,-T})'y_{j,-T} - \theta_{jT})^{2}.\vspace*{-5pt}
\end{equation*}
Similar to the URE estimator, I derive an estimator of the prediction error and
choose $\Lambda$ by minimizing this. An unbiased estimator of the mean
prediction error is given as
 \begin{equation*}
\textstyle  \mathrm{UPE}(\Lambda) = \frac{1}{J}\sum_{j=1}^{J} \left( (B(\Lambda, \Sigma_{j,-T})'y_{j,-T} -y_{jT})^{2}  - \Sigma_{jT} + 2B(\Lambda, \Sigma_{j,-T})'\Sigma_{j,T,-T} \right).
\end{equation*}
Writing $\hat{\Lambda}^{\mathrm{UPE}}$ as the $\Lambda$ that minimizes
$\mathrm{UPE}(\Lambda)$, the proposed estimator for $\theta_{j,T+1}$ is
$B(\hat{\Lambda}^{\mathrm{UPE}}, \Sigma_{j,-T})'y_{j,-1}$.

\begin{remark}[Estimator of \citealp{chetty2014MeasuringImpactsTeachers}] While
  I focus on predicting $\theta_{T}$ with the observations from the first $T-1$
  periods, one can also consider predicting $\theta_{t}$ with observations
  excluding the period $t$ observation. If $\Sigma_{j} = \Sigma$ with $\Sigma$
  being diagonal, the $\Lambda$ that minimizes $\mathrm{UPE}(\Lambda)$ implies
  $B(\Lambda, \Sigma_{-t}) = \hat{\beta}^{\mathrm{\mathrm{OLS}},t}$, which is
  the OLS estimator of regressing $y_{jt}$ on $y_{j,-t}$. This leads to the
  estimator used in \cite{chetty2014MeasuringImpactsTeachers}.
\end{remark}

Since the goal is to forecast $\theta_{T+1}$ rather than $\theta_{T}$, it is
necessary to establish that $\mathrm{UPE}(\Lambda)$ is a good estimator of the
prediction error for the problem of predicting $\theta_{T+1}$, \vspace{-5pt}
\begin{equation*}
  \mathrm{PE}(\Lambda;T+1) = \textstyle \frac{1}{J} \sum_{j=1}^{J}(B(\Lambda, \Sigma_{j,-1})'y_{j,-1} - \theta_{j,T+1})^{2}. \vspace{-5pt}
\end{equation*} 
By the same argument made by Lemma \ref{lem:L1_oracle_generic}, if  \vspace{-5pt}
\begin{equation}\label{eq:UPE_conv}
  \sup_{\Lambda \in \mathcal{L}}\,\, \lvert \mathrm{UPE}(\Lambda) -
  \mathrm{PE}(\Lambda; T+1) \rvert \overset{L^{1}}{\to} 0,  \vspace{-5pt}
\end{equation}
then $B(\hat{\Lambda}^{\mathrm{UPE}}, \Sigma_{j,-T})'y_{j,-1}$ obtains the
oracle mean prediction error, which is the mean prediction error of
$B(\tilde{\Lambda}, \Sigma_{j,-T})'y_{j,-1}$ with $ \tilde{\Lambda} := \argmin_{\Lambda}\mathrm{PE}(\Lambda; T+1)$.

Due to the extrapolative nature of the method, a suitable stationarity
assumption is necessary. To formalize this notion of stationarity, I assume that
the pairs $\{ ((\theta_{j}', \theta_{j,T+1})', \Sigma_{j}) \}_{j=1}^{\infty}$
are drawn randomly from a joint density $f_{(\theta', \theta_{T+1})',
  \Sigma}$. Let $f_{\Sigma}$ denote the marginal density of $\Sigma_{j}$ and let
$\mathrm{supp}(f_{\Sigma})$ denote its support. The following assumption is a
modified version of Assumption \ref{assum:bounded} that accounts for the fact
that both $\theta_{j}$ and $\Sigma_{j}$ are now considered to be random draws.  \vspace{-5pt}

\begin{assumption}[Assumption \ref{assum:bounded} with random parameters\vspace{-.17in}] \label{assum:bounded_rand}
  \begin{align*}
    \mathrm{(i)}\,\, & \,\, y_{j} \vert \theta_{j}, \Sigma_{j} \overset{\mathrm{indep}}{\sim} (\theta_{j}, \Sigma_{j}), \hspace{4.9in}\\
    \mathrm{(ii)}\, & \,\, \textstyle \sup_{j} \E [\lVert
                      y_{j} \rVert^{4} \vert \theta_{j}, \Sigma_{j}] <
                      \infty
                      \text{, and} \hspace{4.9in}\\
    \mathrm{(iii)} & \,\, \mathrm{supp}(f_{\Sigma}) \subset \{\Sigma \in \mathcal{S}_{T}^{+} : s_{T}(\Sigma) > \underline{s}_{\Sigma} \} \text{ for some } \underline{s}_{\Sigma} > 0.
  \end{align*}\vspace*{-35pt}
\end{assumption}
To state the stationarity assumption, let $f_{\theta,\Sigma_{-T}}$ and
$f_{(\theta_{-1}', \theta_{T+1})',\Sigma_{-1}}$ denote the marginal densities
that correspond to $(\theta_{j},\Sigma_{j,-T})$ and
$((\theta_{j,-1}' , \theta_{j, T+1})',\Sigma_{j,-1})$, respectively. The
following assumption states that the distributions of
$(\theta_{j},\Sigma_{j,-T})$ and
$((\theta_{j,-1}' , \theta_{j, T+1})',\Sigma_{j,-1})$ are the same.

\begin{assumption}[Stationarity]  $f_{\theta,\Sigma_{-T}} = f_{(\theta_{-1}',\theta_{T+1})',\Sigma_{-1}}$.\label{assu:stationarity_main}
\end{assumption}
\noindent Note that this assumption does not imply that the estimates $y_{j}$
themselves are stationary, and no restriction is imposed on the joint
distribution of the mean and variance.

The following theorem shows that these assumptions ensure that
\eqref{eq:UPE_conv} holds almost surely, where the almost sure convergence is
with respect to the randomness of the sequence
$\{ ((\theta_{j}', \theta_{j,T+1})', \Sigma_{j}) \}_{j=1}^{\infty}$. An
implication is that $B(\hat{\Lambda}^{\mathrm{UPE}}, \Sigma_{j,-T})'y_{j,-1}$
obtains the oracle (asymptotic) prediction error.

\begin{theorem}\label{thm:forecasting} Under Assumptions
  \ref{assum:bounded_rand} and \ref{assu:stationarity_main},
  (\ref{eq:UPE_conv}) holds almost surely.
\end{theorem}

\section{An application to teacher value-added}
\label{sec:empirical-study}

I now apply the proposed methods to estimate TVA in the public
schools of New York City (NYC). I show that allowing value-added to vary over
time and using the URE estimators (and forecasts) lead to substantially
different empirical results compared to the conventional approach.

\subsection{Baseline model and data}
\label{sec:baseline-model-data}

I use a standard TVA model specified as the linear panel data
model introduced in \eqref{eq:model}, where $Y_{ijt}$ is the (standardized) test
score in either English Language Arts (ELA) or math, and $X_{ijt}$ is a vector
of student characteristics. The covariates include: the previous year’s test
score, gender, ethnicity, special education status (SWD), English language
learner status (ELL), and eligibility for free or reduced-price lunch
(FL).\footnote{The results are not sensitive to which covariates are included
  and/or interacted, as long as the previous year’s test scores are included.}
The teacher fixed effect $\theta_{jt}$ is indexed by time, allowing it to
vary over time. The error term $\varepsilon_{ijt}$ is assumed to be
i.i.d. across $i$, $j$, and $t$.
 
I use administrative data on all NYC public schools from academic years
2012–2013 to 2018–2019. As in \cite{bitler2019TeacherEffectsStudenta}, I
restrict the sample to 4th and 5th grade students. The analysis is based on ELA
scores, though results using math scores are similar. I further restrict the
sample to students whose ELA teachers are observed in all six years. The final
dataset includes $J = 1{,}185$ teachers across $T = 6$ years and 174{,}239
student-year observations. The coefficients are estimated using OLS with fixed
effects.%

The average number of students per teacher per year is approximately 24.5, with
a standard deviation of about 11.7. This substantial variation in class size
implies considerable heteroskedasticity in the least squares
estimates. Moreover, regressing $\hat{\theta}_{jt}$ on class size $n_{jt}$
reveals a significant positive relationship, suggesting a potential dependence
between the variance of the least squares estimator and the true fixed
effect. Also, recent work by \cite{gilraine2020NewMethodEstimatinga} has noted
that the true value-added is unlikely to follow a normal distribution. These
patterns indicate that conventional EB approaches are likely suboptimal in this
context.

\subsection{Estimation results and policy exercise}
\label{sec:estimation-results}

\begin{figure}[t]
\begin{subfigure}{.5\textwidth}
  \centering
  \includegraphics[width=\textwidth]{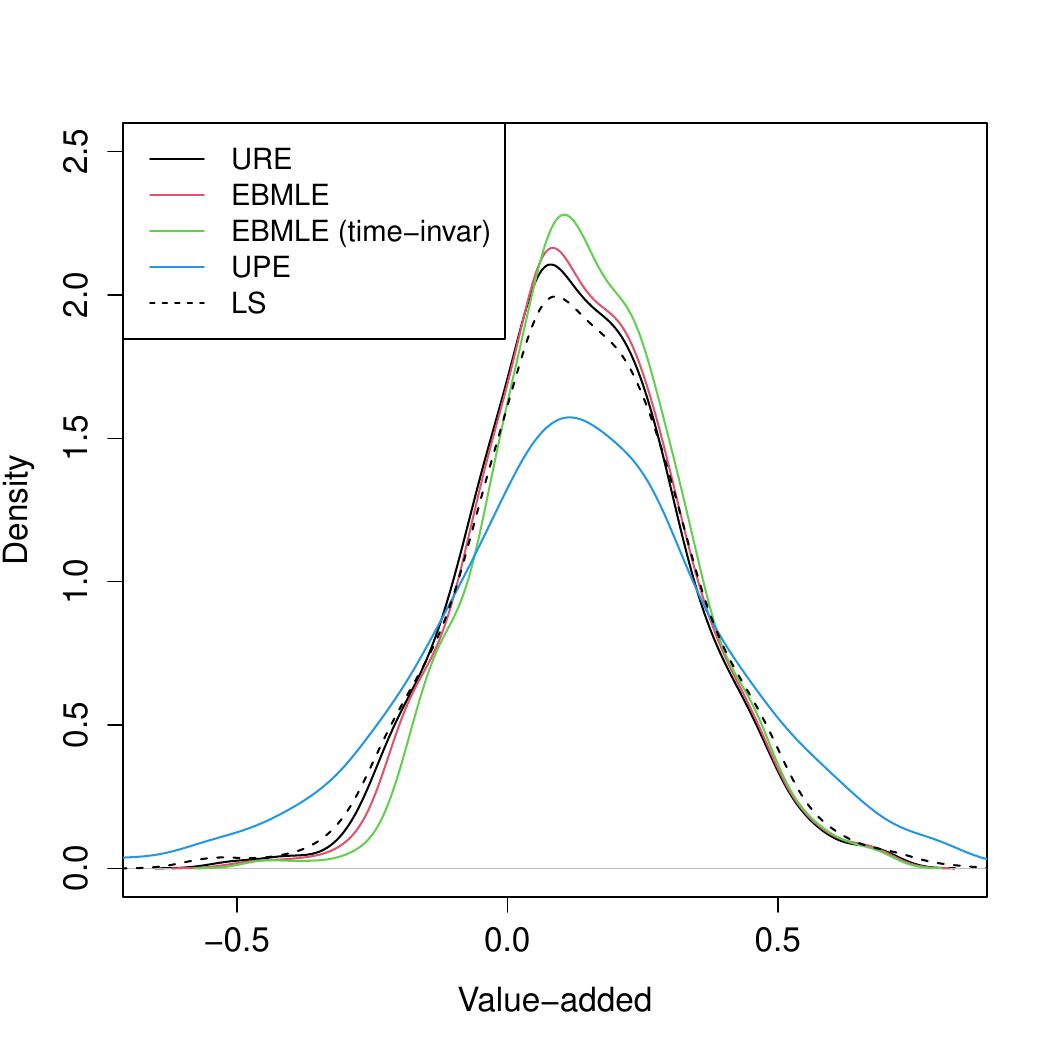}
  \caption{Density plots for shrinkage estimates}
  \label{fig:est_distn}
\end{subfigure}%
\begin{subfigure}{.5\textwidth}
  \centering
  \includegraphics[width=1\textwidth]{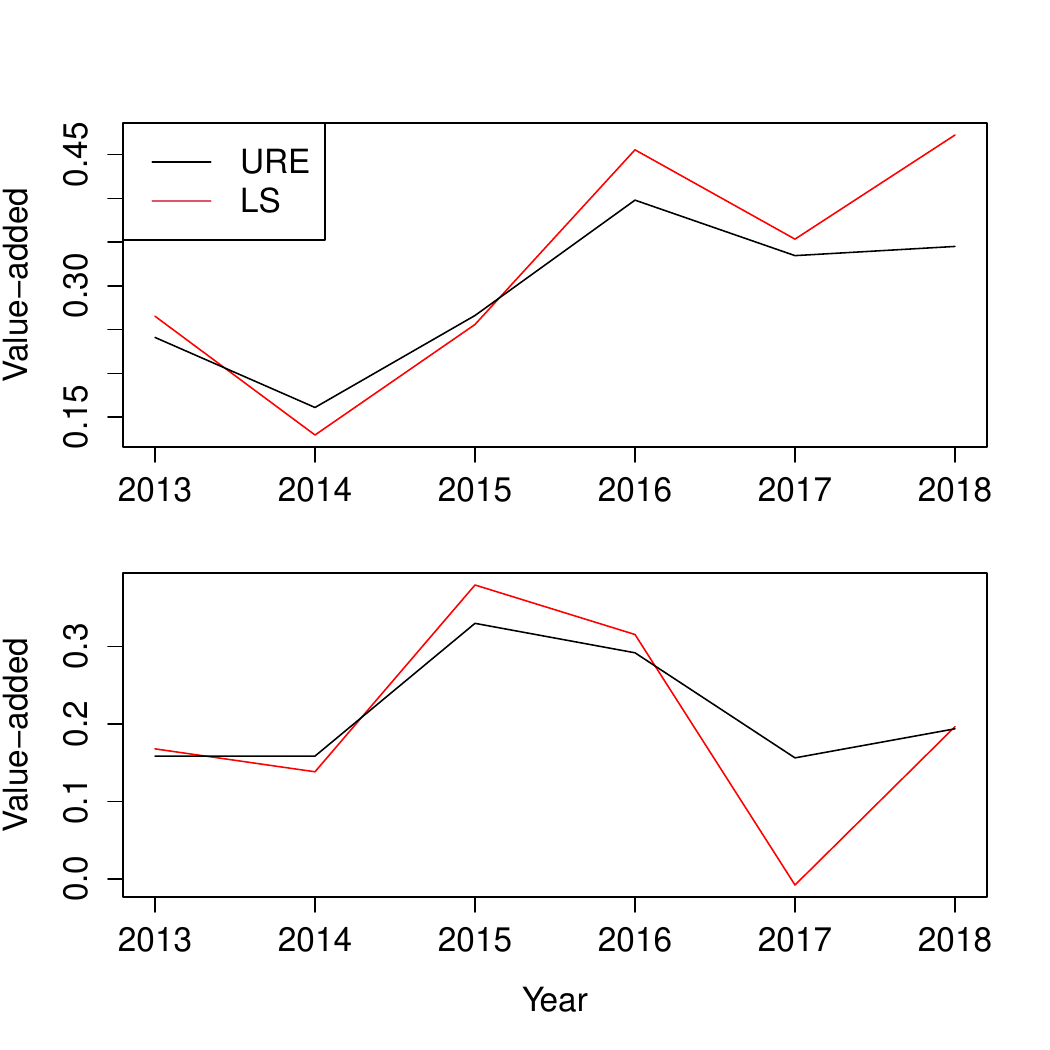}
  \caption{Shrinkage patterns %
  }
  \label{fig:shr_pattern}
  
\end{subfigure}\vspace{-10pt} \label{fig:shr_res}
\caption{Shrinkage results. The left panel shows the distribution of value-added
estimates across four estimators. The right panel shows average trajectories for
a group of teachers.}
\end{figure}

Figure \ref{fig:est_distn} shows the distribution of TVA
estimates under four different estimators: the conventional estimator (EBMLE
assuming time-invariant value-added; green), the EBMLE and URE estimators under
time-varying value-added (red and black), and the optimal UPE forecast
(blue).\footnote{For the EBMLE and URE estimators, I use
  $\mathcal{M} = \mathcal{M}_{\mathrm{g}}.$} For the time-varying estimators,
the average across time within each teacher is used for comparability. All
shrinkage-based estimators, except for the UPE forecast, yield distributions
more concentrated around the mode than the least squares estimator (black dashed
line), reflecting the effect of shrinkage. Notably, the density plots indicate
substantial differences between the conventional method and those allowing for
time variation.\footnote{The UPE forecasts are more dispersed than the other
  estimators, as they do not involve any temporal averaging. Under a
  time-invariant fixed effect model, forecasts and current estimates coincide,
  but here the divergence in distributions suggest that it may be misleading to
  use an average of past performance to predict future value-added.}

Figure \ref{fig:shr_pattern} illustrates the shrinkage pattern of the URE
estimator.\footnote{Due to confidentiality restrictions, estimates are averaged
  across a group of teachers.} Although the estimators are allowed to shrink
toward a general location ($\mathcal{M} = \mathcal{M}_{\mathrm{g}}$), the
optimal location turns out to be close to zero, so the shrinkage can be
interpreted as toward a horizontal line at zero. The URE estimator does not
shrink each point independently, but rather shrinks a smoothed version of the
trajectory, reflecting the structure of $\hat{\Lambda}^{\mathrm{URE}}$, which
has positive off-diagonal entries—consistent with positive serial correlation in
true value-added. In contrast, ignoring this correlation leads to over-shrinkage
by applying independent shrinkage at each time period. This highlights the
importance of allowing for serial dependence in the shrinkage procedure.

A standard policy simulation in the literature involves removing teachers in the
bottom 5\% of the value-added distribution and replacing them with average
teachers. I replicate this exercise, focusing on how the composition of the
bottom 5\% changes depending on the choice of estimator. Specifically, I compare
the sets of 60 teachers flagged for removal under three estimators: the
conventional time-invariant EBMLE, the URE estimator with time-varying
value-added, and the UPE forecast. Switching from the conventional estimator to
the URE changes the list by about 24\% (14 teachers), indicating that
alternative estimation strategies can significantly alter policy
outcomes.\footnote{In contrast, \cite{gilraine2020NewMethodEstimatinga} find
  that a nonparametric EB method (assuming time-invariant value-added) leads to
  minimal changes. This suggests that time variation is likely the main driver
  of these differences.}

Moreover, when the goal of the policy is to improve student outcomes in the
following year, forecasts for the next period's value-added are arguably more
informative than a summary of past performance. When the value-added is allowed
to vary with time, one can use the UPE forecasts in such context. On the other
hand, if one specifies value-added to be time-invariant, past and future
value-added are the same by definition, and thus will release the bottom 5\%
according to the conventional estimator. This consideration turns out have a
large effect, with only 25 teachers (approximately 42\%) being released under
both estimators.

I also conduct an out-of-sample policy exercise: teachers are ranked based on
estimates using data from the first five years, and the bottom 5\% are removed
under each estimator. I then evaluate performance by measuring the average
value-added in year six using the least squares estimator as a proxy for the
true value-added. Again, the composition of removed teachers differs
substantially, with only 60\% overlap between the conventional and forecast-based
methods. Importantly, the average value-added of the removed teachers is 20\%
lower when using the UPE forecasts. This suggests that the different composition
is in the correct direction, highlighting that the choice of estimator leads to
more effective policy. Consistent with this observation, the MSE of the UPE
forecasts is also 35\% lower than that of the conventional estimator, further
underscoring the importance of incorporating time variation and the optimality
of the proposed methods.
\bibliography{EB.bib}

\begin{thebibliography}{19}
\newcommand{\enquote}[1]{``#1''}
\expandafter\ifx\csname natexlab\endcsname\relax\def\natexlab#1{#1}\fi

\bibitem[\protect\citeauthoryear{Abaluck, Bravo, Hull, and Starc}{Abaluck
  et~al.}{2020}]{abaluck2020MortalityEffectsChoicea}
\textsc{Abaluck, J., M.~M.~C. Bravo, P.~Hull, and A.~Starc} (2020):
  \enquote{Mortality {{Effects}} and {{Choice Across Private Health Insurance
  Plans}},} Tech. Rep. w27578, {National Bureau of Economic Research},
  {Cambridge, MA}.

\bibitem[\protect\citeauthoryear{Andrews}{Andrews}{1992}]{andrews1992GenericUniformConvergence}
\textsc{Andrews, D.~W.} (1992): \enquote{Generic {{Uniform Convergence}},}
  \emph{Econometric Theory}, 8, 241--257.

\bibitem[\protect\citeauthoryear{Bitler, Corcoran, Domina, and Penner}{Bitler
  et~al.}{2019}]{bitler2019TeacherEffectsStudenta}
\textsc{Bitler, M., S.~Corcoran, T.~Domina, and E.~Penner} (2019):
  \enquote{Teacher {{Effects}} on {{Student Achievement}} and {{Height}}: {{A
  Cautionary Tale}},} Tech. Rep. w26480, {National Bureau of Economic
  Research}, {Cambridge, MA}.

\bibitem[\protect\citeauthoryear{Brown and Greenshtein}{Brown and
  Greenshtein}{2009}]{brown2009NonparametricEmpiricalBayes}
\textsc{Brown, L.~D. and E.~Greenshtein} (2009): \enquote{Nonparametric
  Empirical {{Bayes}} and Compound Decision Approaches to Estimation of a
  High-Dimensional Vector of Normal Means,} \emph{The Annals of Statistics},
  37, 1685--1704.

\bibitem[\protect\citeauthoryear{Brown, Mukherjee, and Weinstein}{Brown
  et~al.}{2018}]{brown2018EmpiricalBayesEstimates}
\textsc{Brown, L.~D., G.~Mukherjee, and A.~Weinstein} (2018):
  \enquote{Empirical {{Bayes}} Estimates for a Two-Way Cross-Classified Model,}
  \emph{The Annals of Statistics}, 46, 1693--1720.

\bibitem[\protect\citeauthoryear{Chen}{Chen}{2022}]{chen2022gaussian}
\textsc{Chen, J.} (2022): \enquote{Gaussian Heteroskedastic Empirical Bayes
  without Independence,} \emph{arXiv preprint arXiv:2212.14444}.

\bibitem[\protect\citeauthoryear{Chetty, Friedman, and Rockoff}{Chetty
  et~al.}{2014}]{chetty2014MeasuringImpactsTeachers}
\textsc{Chetty, R., J.~N. Friedman, and J.~E. Rockoff} (2014):
  \enquote{Measuring the {{Impacts}} of {{Teachers I}}: {{Evaluating Bias}} in
  {{Teacher Value}}-{{Added Estimates}},} \emph{American Economic Review}, 104,
  2593--2632.

\bibitem[\protect\citeauthoryear{Gilraine, Gu, and McMillan}{Gilraine
  et~al.}{2020}]{gilraine2020NewMethodEstimatinga}
\textsc{Gilraine, M., J.~Gu, and R.~McMillan} (2020): \enquote{A {{New Method}}
  for {{Estimating Teacher Value}}-{{Added}},} Tech. Rep. w27094, {National
  Bureau of Economic Research}, {Cambridge, MA}.

\bibitem[\protect\citeauthoryear{Horn and Johnson}{Horn and
  Johnson}{1990}]{horn1990MatrixAnalysis}
\textsc{Horn, R.~A. and C.~R. Johnson} (1990): \emph{Matrix {{Analysis}}},
  {Cambridge University Press}.

\bibitem[\protect\citeauthoryear{Jiang and Zhang}{Jiang and
  Zhang}{2009}]{jiang2009GeneralMaximumLikelihood}
\textsc{Jiang, W. and C.-H. Zhang} (2009): \enquote{General Maximum Likelihood
  Empirical {{Bayes}} Estimation of Normal Means,} \emph{The Annals of
  Statistics}, 37, 1647--1684.

\bibitem[\protect\citeauthoryear{Li}{Li}{1985}]{li1985SteinUnbiasedRisk}
\textsc{Li, K.-C.} (1985): \enquote{From {{Stein}}'s {{Unbiased Risk
  Estimates}} to the {{Method}} of {{Generalized Cross Validation}},} \emph{The
  Annals of Statistics}, 13, 1352--1377.

\bibitem[\protect\citeauthoryear{Mulhern}{Mulhern}{2023}]{mulhern2023beyond}
\textsc{Mulhern, C.} (2023): \enquote{Beyond teachers: Estimating individual
  school counselors’ effects on educational attainment,} \emph{American
  economic review}, 113, 2846--2893.

\bibitem[\protect\citeauthoryear{Okolewski and Rychlik}{Okolewski and
  Rychlik}{2001}]{okolewski2001SharpDistributionfreeBounds}
\textsc{Okolewski, A. and T.~Rychlik} (2001): \enquote{Sharp Distribution-Free
  Bounds on the Bias in Estimating Quantiles via Order Statistics,}
  \emph{Statistics \& Probability Letters}, 52, 207--213.

\bibitem[\protect\citeauthoryear{Robbins}{Robbins}{1964}]{robbins1964EmpiricalBayesApproach}
\textsc{Robbins, H.} (1964): \enquote{The {{Empirical Bayes Approach}} to
  {{Statistical Decision Problems}},} \emph{Annals of Mathematical Statistics},
  35, 1--20.

\bibitem[\protect\citeauthoryear{Rose, Schellenberg, and Shem-Tov}{Rose
  et~al.}{2022}]{rose2022effects}
\textsc{Rose, E.~K., J.~T. Schellenberg, and Y.~Shem-Tov} (2022): \enquote{The
  effects of teacher quality on adult criminal justice contact,} Tech. rep.,
  National Bureau of Economic Research.

\bibitem[\protect\citeauthoryear{Walters}{Walters}{2024}]{walters2024empirical}
\textsc{Walters, C.} (2024): \enquote{Empirical Bayes methods in labor
  economics,} in \emph{Handbook of Labor Economics}, Elsevier, vol.~5,
  183--260.

\bibitem[\protect\citeauthoryear{Wooldridge}{Wooldridge}{2010}]{wooldridge2010EconometricAnalysisCross}
\textsc{Wooldridge, J.~M.} (2010): \emph{Econometric {{Analysis}} of {{Cross
  Section}} and {{Panel Data}}}, {MIT Press}.

\bibitem[\protect\citeauthoryear{Xie, Kou, and Brown}{Xie
  et~al.}{2016}]{xie2016OptimalShrinkageEstimation}
\textsc{Xie, X., S.~C. Kou, and L.~Brown} (2016): \enquote{Optimal Shrinkage
  Estimation of Mean Parameters in Family of Distributions with Quadratic
  Variance,} \emph{The Annals of Statistics}, 44, 564--597.

\bibitem[\protect\citeauthoryear{Xie, Kou, and Brown}{Xie
  et~al.}{2012}]{xie2012SUREEstimatesHeteroscedastica}
\textsc{Xie, X., S.~C. Kou, and L.~D. Brown} (2012): \enquote{{{SURE
  Estimates}} for a {{Heteroscedastic Hierarchical Model}},} \emph{Journal of
  the American Statistical Association}, 107, 1465--1479.

\end{thebibliography}

\newpage

\begin{appendices}

\setlength{\abovedisplayskip}{8pt}
\setlength{\belowdisplayskip}{8pt}
\setlength{\abovedisplayshortskip}{8pt}
\setlength{\belowdisplayshortskip}{8pt}

\crefalias{section}{appsec}

\section{Optimality of $\hat{\theta}^{\mathrm{URE}, \mathrm{cov}}$}
\label{appsec:opt-cov}
This estimator $\hat{\theta}^{\mathrm{URE}, \mathrm{cov}}$ shrinks each
observation to a different location, $Z_{j}\gamma$, which depends on the
covariate. The key step in establishing optimality is to show
\begin{equation*}
    \left( \E  \textstyle \sup_{\gamma \in \Gamma}  \lVert \gamma \rVert^{2} \right)^{1/2} 
    \Big( \E  \textstyle \sup_{\Lambda \in \mathcal{S}_{T}^{+}} \left\lVert  \frac{1}{J} \sum_{j=1}^{J} Z_{j}' (\Lambda + \Sigma_{j})^{-1}
      \Sigma_{j}(y_{j}-\theta_{j} ) \right\rVert^{2} \Big)^{1/2} \to 0.
\end{equation*}
Again, the strategy is to show that the first term is bounded and the second
term converges to zero. Define $\varepsilon_{j} = y_{j} - \theta_{j}$. I make
the following assumptions.

\begin{assumption}[Covariates\vspace{-.17in}]\label{assu:cov}
  \begin{align*}
    \mathrm{(i)}\,\, & \,\, \textstyle \{(y_{j},Z_{j})\}_{j=1}^{J} \text{ is an
                       independent sample with } Z_{j}
                       \overset{\mathrm{i.i.d,}}{\sim} P_{Z}, \hspace{4.9in}\\
    \mathrm{(ii)}\, & \,\, \textstyle \sup_{j}s_{1}(Z_{j}'Z_{j}) < \infty
                       \text{ a.s.}, \hspace{4.9in}\\
    \mathrm{(iii)} & \,\, \E [\varepsilon_{j} \vert Z_{j}] = 0  \text{ and }
                      \var (\varepsilon_{j} \vert Z_{j}) =  \Sigma_{j},  \\
    \mathrm{(iv)}   &\,\, \mu_{Z,2} :=  \E  Z_{j}'Z_{j} \text{ is
                       nonsingular, and} \\
  \mathrm{(v)} \, & \,\, \textstyle \sup_j \E  \, [\lVert y_j \rVert^4 \vert Z_j] < \infty
                  \text{ a.s. }
  \end{align*}
\end{assumption}

A sufficient condition for (ii) is that there exists some constant
$\overline{C}_{Z} \in \mathbf{R}$ such that
$ \sup_{j,t}\lVert Z_{jt} \rVert < \overline{C}_{Z} < \infty$ almost surely,
which amounts to assuming that the covariates are uniformly bounded. The first
and second part of (iii) are exogeneity conditions for the first and second
moments of the noise term, with respect to the covariates. The full rank
condition given in (iv) is standard. The boundedness condition for the
conditional expectation given in (v) is a conditional version of Assumption
\ref{assum:bounded} (ii). Again, the boundedness conditions in (ii) and (v) can
be relaxed to a boundedness condition on the averages of the given
quantities.%

Note that there is no assumption that states a linear relationship between the
covariate matrix $Z_{j}$ and the true mean $\theta_{j}$ and/or $y_{j}$. Hence,
``misspecification'' is not a concern as long as the exogeneity
condition (ii) is met. Some specifications yield better risk properties than
others, but as long as time dummies are included in the covariates with $B$
being sufficiently large, any specification (choice of covariates) is guaranteed
to improve upon $\hat{\theta}^{\mathrm{URE}, \mathrm{g}}$ asymptotically.

Now, with some abuse of notation, I condition on a realization
$\{Z_{j}\}_{j=1}^{\infty}$ and treat the covariates as fixed. I assume that this
fixed sequence satisfies $\sup_{j} s_{1}(Z_{j}'Z_{j}) < \infty $,
$\frac{1}{J}\sum_{j=1}^{J}Z_{j}'Z_{j} \to \mu_{Z,2}$, and
$\sup_j \E  \, [\lVert y_j \rVert^4 \vert Z_j] < \infty$ which holds for
almost all realizations due to Assumption \ref{assu:cov}(ii), (iv), and
(v), and the strong law of large numbers. I directly impose these conditions on
the fixed covariates in the following theorem, with the understanding that such
conditions follow from Assumption \ref{assu:cov}. The following theorem shows
that the URE is uniformly close to the true loss function over
$(\gamma, \Lambda) \in \Gamma \times \mathcal{S}_{T}^{+}$ under this
implied assumption on the covariates, along with the maintained Assumption
\ref{assum:bounded}.

\begin{theorem}[Uniform convergence of $\mathrm{URE}^{\mathrm{cov}}(\gamma,
  \Lambda)$]\label{thm:URE_cov} Suppose that $\sup_{j}s_{1}(Z_{j}'Z_{j})
  <\infty$, $\lim_{J \to \infty}\frac{1}{J}\sum_{j=1}^{J}Z_{j}'Z_{j} =
  \mu_{Z,2}$, and
  Assumption \ref{assum:bounded} holds. Then,
  \begin{equation*}\label{eq:URE_loss_diff_cov}
    \sup_{\gamma \in \Gamma, \Lambda \in \mathcal{S}_{T}^{+}}  \left\lvert
      \mathrm{URE}^{\mathrm{cov}}(\gamma, \Lambda) - \ell(\theta, \hat{\theta}^{\mathrm{cov}}(\gamma,
      \Lambda)) \right\rvert  \overset{L^{1}}{\to} 0.
  \end{equation*}
\end{theorem}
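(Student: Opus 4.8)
The plan is to follow the template already used for $\widehat{\theta}^{\mathrm{URE},\mathrm{g}}$, exploiting that $\widehat{\theta}^{\mathrm{cov}}_j(\gamma,\Lambda)=\widehat{\theta}_j(\mu_j,\Lambda)$ and $\mathrm{URE}^{\mathrm{cov}}_j(\gamma,\Lambda)=\mathrm{URE}_j(\mu_j,\Lambda)$ with $\mu_j=Z_j\gamma$, so the decomposition \eqref{eq:URE_minus_loss_genmu} applies verbatim with this choice of $\mu_j$. Combining it with the triangle inequality gives
\begin{align*}
  & \sup_{\gamma\in\Gamma_J,\,\Lambda\in\mathcal{S}_T^{+}}\bigl|\mathrm{URE}^{\mathrm{cov}}(\gamma,\Lambda)-\ell(\theta,\widehat{\theta}^{\mathrm{cov}}(\gamma,\Lambda))\bigr| \\
  & \qquad \le \sup_{\Lambda\in\mathcal{S}_T^{+}}\bigl|\mathrm{URE}(0,\Lambda)-\ell(\theta,\widehat{\theta}(0,\Lambda))\bigr| + 2\sup_{\gamma,\Lambda}\Bigl|\tfrac1J\textstyle\sum_{j=1}^J\gamma'Z_j'(\Lambda+\Sigma_j)^{-1}\Sigma_j(y_j-\theta_j)\Bigr|.
\end{align*}
Since Assumption \ref{assum:bounded} is maintained, the first term on the right converges to $0$ in $L^1$ by Theorem \ref{thm:URE}, so everything reduces to controlling the cross term.

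For the cross term I apply the Cauchy--Schwarz inequality first in $\gamma$ and then in expectation, which reduces matters to the two-factor bound displayed immediately before the statement of the theorem: writing $\varepsilon_j:=y_j-\theta_j$, it is a product of $\bigl(\E\sup_{\gamma\in\Gamma_J}\lVert\gamma\rVert^2\bigr)^{1/2}$ and $\bigl(\E\sup_{\Lambda\in\mathcal{S}_T^{+}}\lVert\frac1J\sum_{j=1}^J Z_j'(\Lambda+\Sigma_j)^{-1}\Sigma_j\varepsilon_j\rVert^2\bigr)^{1/2}$. For the first factor I invoke the construction of $\Gamma_J$ from Appendix \ref{sec:estimator-details}: in the simplest implementation $\Gamma_J$ is a fixed Euclidean ball and the bound is immediate, while more generally $\sup_{\gamma\in\Gamma_J}\lVert\gamma\rVert$ is controlled by sample moments of $(y_j,Z_j)$ together with the inverse of $\frac1J\sum_j Z_j'Z_j$, which is eventually bounded because $\frac1J\sum_j Z_j'Z_j\to\mu_{Z,2}$ with $\mu_{Z,2}$ nonsingular; so $\E\sup_{\gamma\in\Gamma_J}\lVert\gamma\rVert^2=O(1)$ by an argument parallel to the treatment of $\mathcal{M}_J$ in Theorem \ref{thm:URE_center}. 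This step uses the full-rank design condition rather than any scale-family structure, which is why Assumption \ref{assu:scale_fam} is not needed here.

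It then remains to show $\E\sup_{\Lambda\in\mathcal{S}_T^{+}}\lVert\frac1J\sum_j Z_j'(\Lambda+\Sigma_j)^{-1}\Sigma_j\varepsilon_j\rVert^2\to 0$, the covariates being treated as a fixed sequence with $\sup_j\sigma_1(Z_j'Z_j)<\infty$ and $\frac1J\sum_j Z_j'Z_j\to\mu_{Z,2}$. Set $B_j(\Lambda):=Z_j'(\Lambda+\Sigma_j)^{-1}\Sigma_j$. From $\Sigma_j^{1/2}(\Lambda+\Sigma_j)^{-1}\Sigma_j^{1/2}\le I_T$ and $\sup_j\kappa(\Sigma_j)<\infty$ (an implication of Assumption \ref{assum:bounded}), the norm $\sigma_1((\Lambda+\Sigma_j)^{-1}\Sigma_j)$ is bounded uniformly in $(j,\Lambda)$, and together with the uniform bound on $\sigma_1(Z_j)=\sigma_1(Z_j'Z_j)^{1/2}$ this makes $\sigma_1(B_j(\Lambda))$ bounded uniformly in $(j,\Lambda)$. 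Conditional on the fixed covariates, $\{B_j(\Lambda)\varepsilon_j\}_j$ are independent and mean zero with $\E\lVert B_j(\Lambda)\varepsilon_j\rVert^2\le\sigma_1(B_j(\Lambda))^2\tr(\Sigma_j)$ uniformly bounded, so for each fixed $\Lambda$ the average is $O(J^{-1/2})$ in $L^2$. The passage to the supremum over the noncompact cone $\mathcal{S}_T^{+}$ is exactly the ULLN argument of Theorem \ref{thm:URE} (and of the $Z_j$-free analogue inside the proof of Theorem \ref{thm:URE_center}): since $\sup_j\kappa(\Sigma_j)<\infty$, the family $\Lambda\mapsto(\Lambda+\Sigma_j)^{-1}\Sigma_j$ is uniformly bounded, uniformly equicontinuous, and converges to $I_T$ as $\sigma_T(\Lambda)\to\infty$ uniformly in $j$, so the supremum may be restricted to a compact set $\{\sigma_1(\Lambda)\le M\}$ up to an error vanishing as $M\to\infty$, and on that compact set a finite-net plus union-bound argument combined with equicontinuity yields convergence in probability of the supremum; the crude uniform bound $\sup_{\Lambda}\lVert\frac1J\sum_j B_j(\Lambda)\varepsilon_j\rVert\le\frac{C}{J}\sum_j\lVert\varepsilon_j\rVert$, whose fourth moment is $O(1)$ by $\sup_j\E\lVert\varepsilon_j\rVert^4<\infty$ and Jensen's inequality, supplies the uniform integrability needed to upgrade this to $L^2$ convergence. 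Multiplying the bounded factor by the $o(1)$ factor, the cross term vanishes in $L^1$, which finishes the argument.

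I expect the genuinely delicate parts to be the two places where uniformity over the unbounded cone $\mathcal{S}_T^{+}$ is needed — the convergence of $\sup_\Lambda\lVert\frac1J\sum_j B_j(\Lambda)\varepsilon_j\rVert^2$ and, one level up, the uniform integrability that turns convergence in probability into $L^1$ convergence. However, both are inherited essentially verbatim from the proofs of Theorems \ref{thm:URE} and \ref{thm:URE_center}, with the $Z_j'$ prefactor contributing only an extra uniformly bounded factor; the one essentially new ingredient is the $O(1)$ bound on $\E\sup_{\gamma\in\Gamma_J}\lVert\gamma\rVert^2$, which is where the nonsingularity of $\mu_{Z,2}$ enters.
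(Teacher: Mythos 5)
Your proposal follows the paper's proof almost step for step: the same reduction via \eqref{eq:URE_minus_loss_genmu} and Theorem \ref{thm:URE}, the same two-factor Cauchy--Schwarz bound, the same split into (a) $\E\sup_{\gamma\in\Gamma_J}\lVert\gamma\rVert^{2}=O(1)$ (which the paper carries out by bounding $\E\lVert\widehat{\gamma}^{\mathrm{OLS}}\rVert^{2}$ explicitly, splitting the signal term from the noise term and using nonsingularity of $\mu_{Z,2}$, exactly as you indicate), and (b) $\E\sup_{\Lambda}\lVert\frac1J\sum_{j}Z_j'(\Lambda+\Sigma_j)^{-1}\Sigma_j\varepsilon_j\rVert^{2}\to 0$ via pointwise $L^{2}$ convergence, a ULLN, and a uniform integrability upgrade.

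The one place where your sketch departs from the paper and would not survive as written is the handling of the noncompact cone $\mathcal{S}_T^{+}$. You propose truncating to $\{\sigma_1(\Lambda)\le M\}$ on the grounds that $(\Lambda+\Sigma_j)^{-1}\Sigma_j$ ``converges to $I_T$ as $\sigma_T(\Lambda)\to\infty$ uniformly in $j$.'' Two problems: that factor converges to $0$, not $I_T$, when all eigenvalues of $\Lambda$ diverge; and, more importantly, the excluded region $\{\sigma_1(\Lambda)>M\}$ contains matrices with one huge eigenvalue and one zero eigenvalue, for which $(\Lambda+\Sigma_j)^{-1}\Sigma_j$ is nowhere near its limit in any direction, so the tail of the supremum is not controlled by a vanishing-as-$M\to\infty$ error. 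The paper avoids this entirely by the bijective reparametrization $\Lambda\mapsto(\underline{\sigma}_\Sigma I_T+\Lambda)^{-1}$, whose image is a bounded subset of the $T\times T$ matrices, and then verifies a Lipschitz (stochastic equicontinuity) condition there in the sense of Andrews (1992), with Lipschitz constant $\overline{\sigma}_Z\cdot\frac1J\sum_j\lVert\Sigma_j(y_j-\theta_j)\rVert=O_p(1)$. Since you explicitly defer to the ULLN machinery of Theorems \ref{thm:URE} and \ref{thm:URE_center}, substituting that reparametrization for your truncation heuristic closes the gap; everything else in your argument, including the crude $\frac{C}{J}\sum_j\lVert\varepsilon_j\rVert$ bound for uniform integrability, matches the paper.
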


\section{Proof of main results}
\label{sec:proof-main-theorems}

I first introduce some more notation. Define $\lambda_{k}(A)$ as the $k$th
largest eigenvalue of a square matrix $A$, so that $s_{k}(A) = \lambda_{k}(A)$
for all $k$ when $A$ is positive semidefinite. Let
$\kappa(A) = s_{1}(A)/s_{k}(A)$ be the condition number of any $k \times k$
matrix $A$. For two real symmetric matrices $A$ and $B$, I write $A \geq B$ to
denote that $A-B$ is positive semidefinite, with strict inequality meaning that
$A-B$ is positive definite. For any $d \in \mathbf{R}^{k}$, let
$\mathrm{diag}(d)$ denote the $k \times k$ diagonal matrix with diagonal
elements $d$.

Since the first term in (\ref{eq:URE_minus_loss_genmu}) is common to all the
estimators I consider, I first state a convergence result for this term.

\begin{theorem}[Uniform convergence of $\mathrm{URE}(0, \Lambda)$]\label{thm:URE} Suppose
  Assumption \ref{assum:bounded} holds. Then,
  \begin{equation}
    \label{eq:10}
    \sup_{\Lambda \in \mathcal{S}_{T}^{+}}  \big\lvert \mathrm{URE}(0,\Lambda) - \ell(\theta, \hat{\theta}(0,
      \Lambda)) \big\rvert  \overset{L^{1}}{\to} 0.
  \end{equation}
\end{theorem}

\subsection{Proof of Lemma \ref{lem:L1_oracle_generic}}
\label{appsec:proof-theorem-l1-gen}
  By definition of $\hat{\theta}^{\mathrm{URE}}(\mathcal{M}, \mathcal{L})$, I have
  $\mathrm{URE}(\hat{\theta}^{\mathrm{URE}}(\mathcal{M}, \mathcal{L}))) \leq
  \mathrm{URE}(\tilde{\theta}^{\mathrm{OL}}(\mathcal{M}, \mathcal{L})))$. This gives
  \begin{align*}
    &\ell(\theta, \hat{\theta}^{\mathrm{URE}}(\mathcal{M}, \mathcal{L}))- \ell(\theta,
      \tilde{\theta}^{\mathrm{OL}}(\mathcal{M}, \mathcal{L})) \\
    \leq & \left(  \ell(\theta, \hat{\theta}^{\mathrm{URE}}(\mathcal{M}, \mathcal{L})) -
           \mathrm{URE}(\hat{\theta}^{\mathrm{URE}}(\mathcal{M}, \mathcal{L}))  \right) + \left(  \mathrm{URE}(\tilde{\theta}^{\mathrm{OL}}(\mathcal{M}, \mathcal{L})))- \ell(\theta,
           \tilde{\theta}^{\mathrm{OL}}(\mathcal{M}, \mathcal{L})) \right) \\
    \leq & \, 2 \sup_{(\mu, \Lambda) \in
    \mathcal{M} \times \mathcal{L}} \,\,
  \lvert \mathrm{URE}(\mu, \Lambda) -  \ell(\theta,
    \hat{\theta}(\mu, \Lambda)) \rvert
  \end{align*}
  Nothing that the first line must be positive due to the definition of the
  oracle, taking expectations and then taking $\limsup_{J \to \infty}$ on all
  sides gives the conclusion.

\subsection{Proof of Theorem \ref{thm:URE}}
\label{appsec:proof-theorem-URE}
The difference between the URE and the loss is given as
\begin{equation*}
\textstyle  \mathrm{URE}(\mu, \Lambda) - \ell(\theta, \hat{\theta}(\mu, \Lambda))   =  \frac{1}{J}\sum_{j=1}^{J}\left( \mathrm{URE}_{j}(\mu, \Lambda) - (
      \hat{\theta}_{j}(\mu, \Lambda)-\theta_{j} )'(
      \hat{\theta}_{j}(\mu, \Lambda)-\theta_{j} ) \right).
\end{equation*}
Expanding the summand gives 
\begin{align}
  \begin{aligned} \label{eq:URE_minus_loss_expansion}
    &\mathrm{URE}_{j}(\mu,\Lambda) -(
    \hat{\theta}_{j}(\mu,\Lambda)-\theta_{j} )'(
    \hat{\theta}_{j}(\mu,\Lambda)-\theta_{j} ) \\
    = & y_{j}'y_{j} - \theta_{j}'\theta_{j}- \tr (\Sigma_{j})
    - 2\tr (\Lambda (\Lambda + \Sigma_{j})^{-1} (y_{j}y_{j}' -y_{j} \theta_{j}'-\Sigma_{j} ) ) \\
    & - 2\mu' (\Lambda + \Sigma_{j})^{-1} \Sigma_{j}(y_{j}-\theta_{j}).
  \end{aligned}
\end{align}
Taking $\mu = 0$, I obtain
\begin{align*}
  \begin{aligned}
    &\mathrm{URE}_{j}(\Lambda) - ( \hat{\theta}_{j}(\Lambda)-\theta_{j} )'(
    \hat{\theta}_{j}(\Lambda)-\theta_{j} ) \\
    = & y_{j}'y_{j} - \theta_{j}'\theta_{j}- \tr (\Sigma_{j}) -
    2\tr (\Lambda(\Lambda + \Sigma_{j})^{-1} (y_{j}y_{j}' -
    y_{j} \theta_{j}'-\Sigma_{j} ) ),
  \end{aligned}
\end{align*}
where, for simplicity, I write $\mathrm{URE}_{j}(\Lambda)$ as a shorthand for
$\mathrm{URE}_{j}(0,\Lambda)$, and likewise for $\mathrm{URE}(\Lambda)$ and
$\hat{\theta}(\Lambda)$. It follows that
\begin{align*}
\textstyle   \sup_{\Lambda} \left\lvert \mathrm{URE}(\Lambda) - \ell(\theta, \hat{\theta}( \Lambda)) \right\rvert  
  \leq & \textstyle \left\lvert  \frac{1}{J} \sum\nolimits_{j=1}^{J}  ( y_{j}'y_{j} -
         \theta_{j}'\theta_{j}- \tr (\Sigma_{j})) \right\rvert \\
  & + \textstyle\sup_{\Lambda}\left\lvert   \frac{1}{J} \sum\nolimits_{j=1}^{J} \tr (\Lambda(\Lambda + \Sigma_{j})^{-1}( y_{j}y_{j}' - y_{j}\theta_{j}'
    -   \Sigma_{j}) ) \right\rvert,
\end{align*}
where the inequality follows from the triangle inequality. I show that each of
the two terms in the last expression converges to zero in $L^{1}$.

For the first term, because
$ \E  y_{j}'y_{j} = \theta_{j}'\theta_{j} + \tr (\Sigma_{j})$ for
all $j \leq J$ and $y_{j}$'s are independent, I have
\begin{align*}
\textstyle  \E \left(  \frac{1}{J} \sum_{j=1}^{J}  ( y_{j}'y_{j} -
  \theta_{j}'\theta_{j}- \tr (\Sigma_{j})) \right)^{2}
  &= \textstyle \frac{1}{J^{2}} \sum_{j=1}^{J}\E ( y_{j}'y_{j} -
    \theta_{j}'\theta_{j}- \tr (\Sigma_{j}))^{2} \\
  &= \textstyle\frac{1}{J^{2}} \sum_{j=1}^{J}\var (y_{j}'y_{j}).
\end{align*}
Therefore, if
$\lim \frac{1}{J^{2}} \sum_{j=1}^{J}\var (y_{j}'y_{j}) = 0$, then this
term converges to zero in $L^{2}$ and thus in $L^{1}$. Assumption
\ref{assum:bounded}(ii) ensures that this is the case.

For the second term, note that
\begin{align*}
  & \textstyle \sup_{\Lambda}\left\lvert   \frac{1}{J} \sum_{j=1}^{J} \tr (\Lambda(\Lambda + \Sigma_{j})^{-1}( y_{j}y_{j}' - y_{j}\theta_{j}'
    -   \Sigma_{j}) ) \right\rvert \\
  \leq & \textstyle \left\lvert   \frac{1}{J} \sum_{j=1}^{J} ( y_{j}'y_{j} - \theta_{j}'y_{j}
         -   \tr (\Sigma_{j}) ) \right\rvert + \sup_{\Lambda}\left\lvert   \frac{1}{J} \sum_{j=1}^{J} \tr (\Sigma_{j}(\Lambda + \Sigma_{j})^{-1}( y_{j}y_{j}' - y_{j}\theta_{j}'
         -   \Sigma_{j}) ) \right\rvert%
     .
\end{align*}
Denote the first and second term of the right hand side as $\mathrm{(I)}_{J}$
and $\mathrm{(II)}_{J}$, respectively. To show that $\mathrm{(I)}_{J} \overset{L^{1}}{\to} 0$, I again show $L^{2}$
convergence. Because
$\E (y_{j}'y_{j}- \theta_{j}'y_{j}) = \tr (\Sigma_{j})$ for all
$j \leq J$ and $y_{j}$'s are independent, it follows that
\begin{align*}
 \textstyle \E  \left(  \frac{1}{J} \sum_{j=1}^{J} ( y_{j}'y_{j} - \theta_{j}'y_{j}
  -   \tr (\Sigma_{j}) ) \right)^{2}
  = &  \textstyle\frac{1}{J^{2}} \sum_{j=1}^{J} \E ( y_{j}'y_{j} - \theta_{j}'y_{j}
      -   \tr (\Sigma_{j}) )^{2} \\
  = & \textstyle\frac{1}{J^{2}} \sum_{j=1}^{J} \var ( y_{j}'y_{j} - \theta_{j}'y_{j}).
\end{align*}
Hence, it suffices to establish that
$\lim_{J \to \infty }\frac{1}{J^{2}}\sum_{j=1}^{J}\var (y_{j}'y_{j}-
\theta_{j}'y_{j}) = 0$. The summand is bounded by
\begin{equation*}
  \var (y_{j}'y_{j}- \theta_{j}'y_{j}) \leq 2\var (y_{j}'y_{j}) +
  2 \theta_{j}'\Sigma_{j}\theta_{j} \leq  2\var (y_{j}'y_{j}) +
  2 \tr (\Sigma_{j})\lVert \theta_{j} \rVert_{\infty}^{2}.
\end{equation*}
Hence, if
$ {\lim \sup}_{J \to \infty} \frac{1}{J}
\sum_{j=1}^{J}(\var (y_{j}'y_{j}) + \tr (\Sigma_{j})\lVert
\theta_{j} \rVert_{\infty}^{2}) < \infty $ it follows that
$\mathrm{(I)}_{J} \overset{L^{2}}{\to} 0.$ A sufficient condition for this to
hold is that $\sup_{j} \var (y_{j}'y_{j})$,
$\sup_{j}\tr (\Sigma_{j})$, and
$\sup_{j}\lVert \theta_{j} \rVert_{\infty}^{2}$ are all finite, which is true by
Assumption \ref{assum:bounded} (ii).

To show that $\mathrm{(II)}_{J} \overset{L^{1}}{\to} 0$, define the random
function ${G}_{J}(\Lambda)$ as
\begin{equation*}
 \textstyle {G}_{J}(\Lambda) = \frac{1}{J}\sum_{j=1}^{J} \tr (\Sigma_{j}(\Lambda + \Sigma_{j})^{-1}( y_{j}y_{j}' - y_{j}\theta_{j}'
  -   \Sigma_{j}) )
\end{equation*}
so that the aim is to show
$\sup_{\Lambda}\lvert {G}_{J}(\Lambda) \rvert \overset{L^{1}}{\to} 0.$ I use the
fact that convergence in probability and a uniform integrability condition imply
convergence in $L^{1}$. That is, I show
$\sup_{\Lambda}\lvert {G}_{J}(\Lambda) \rvert \overset{p}{\to} 0$ and that
$\left\{ \sup_{\Lambda}\lvert {G}_{J}(\Lambda) \rvert \right\}_{J \geq 1}$ is
uniformly integrable.

I show $\sup_{\Lambda}\lvert {G}_{J}(\Lambda) \rvert \overset{p}{\to} 0$ by
using the results given by \cite{andrews1992GenericUniformConvergence}.The
results therein require a totally bounded parameter space. However, the
parameter space in consideration, $\mathcal{S}_{T}^{+}$, does not satisfy this
requirement. This can be dealt with by an appropriate reparametrization. Let
$\underline{s}_{\Sigma} = \inf_{j} s_{T}(\Sigma_{j})$ denote the
infimum of the smallest eigenvalues of $\Sigma_{j}$'s for $ j \geq 1 $, which is
bounded away from zero by assumption. Consider the transformation defined by
$h(\Lambda) = (\underline{s}_{\Sigma}I_{T} + \Lambda)^{-1}$, and write the
image of such transformation as
$\tilde{\mathcal{L}}:= \{h(\Lambda) : \Lambda \in S^{+}_{T} \}$. Note that
$h: \mathcal{S}_{T}^{+} \to \tilde{\mathcal{L}}$ is one-to-one and onto, with its
inverse given as
$h^{-1}(\tilde{\Lambda}) = \tilde{\Lambda}^{-1}-
\underline{s}_{\Sigma} I_{T}.$ For
$\tilde{\Lambda} \in \tilde{\mathcal{L}}$, define
$\tilde{G}_{J} := G_{J}\circ h^{-1}$ so that
\begin{equation*}
  \sup_{\Lambda \in \mathcal{S}_{T}^{+}}\lvert {G}_{J}(\Lambda) \rvert =  \sup_{\Lambda
    \in \mathcal{S}_{T}^{+}}\lvert {G}_{J}(h^{-1}(h(\Lambda))) \rvert  = \sup_{\tilde{\Lambda}
    \in \tilde{\mathcal{L}}}\lvert {G}_{J}(h^{-1}(\tilde{\Lambda})) \rvert = \sup_{\tilde{\Lambda}
    \in \tilde{\mathcal{L}}}\lvert \tilde{G}_{J}(\tilde{\Lambda}) \rvert.
\end{equation*}

Hence, showing $\sup_{\Lambda}\lvert {G}_{J}(\Lambda) \rvert \overset{p}{\to} 0$
is equivalent to
$\sup_{\tilde{\Lambda} \in \tilde{\mathcal{L}}}\lvert
\tilde{G}_{J}(\tilde{\Lambda}) \rvert \overset{p}{\to} 0.$ Let
$\mathcal{S}_{T}$ denote the set of all real positive $T\times T$
matrices. While the choice of metric is irrelevant, equip $\mathcal{S}_{T}$ with
the metric $d$ induced by the Frobenius norm for concreteness. Note that
$\tilde{\mathcal{L}} \subset S_{T}$.  I show that the (reparametrized)
parameter space $\tilde{\mathcal{L}}$ is indeed totally bounded. For any
$\tilde{\Lambda} \in \tilde{\mathcal{L}}$, I have
$ 0 \leq \tilde{\Lambda} \leq \underline{s}_{\Sigma}^{-1}I_{T}$ so that
$s_{1}(\tilde{\Lambda}) \leq
\underline{s}_{\Sigma}^{-1}$. Moreover, since the largest singular value
equals the operator norm and all norms on $S_{T}$ are equivalent, this shows
that $\tilde{\mathcal{L}}$ is bounded, and thus totally bounded because
$\tilde{\mathcal{L}}$ can be seen as a subset of the Euclidean space with
dimension $T^{2}$.

It remains to show that a)
$\lvert \tilde{G}_{J}(\tilde{\Lambda}) \rvert \overset{p}{\to} 0$ for
all $\tilde{\Lambda} \in \tilde{\mathcal{L}}$ and b)
$\tilde{G}_{J}(\tilde{\Lambda})$ is stochastically equicontinuous. For
a), I can show $\lvert G_{J}(\Lambda) \rvert \overset{p}{\to} 0$ for all
$\Lambda \in \mathcal{S}_{T}^{+}$ instead because for any
$\tilde{\Lambda} \in \tilde{\mathcal{L}}$, there exists
$\Lambda \in \mathcal{S}_{T}^{+}$ such that
$G_{J}(\Lambda) = \tilde{G}_{J}(\tilde{\Lambda})$. Now, note that
\begin{equation*}
    \E  \tr (\Sigma_{j}(\Lambda + \Sigma_{j})^{-1}( y_{j}y_{j}' - y_{j}\theta_{j}'
    -   \Sigma_{j}) ) =  \tr (\Sigma_{j}(\Lambda + \Sigma_{j})^{-1}\E ( y_{j}y_{j}' - y_{j}\theta_{j}'
      -   \Sigma_{j}) ) =  0,
\end{equation*}
and $y_{j}$'s are independent. This gives
\begin{equation*}
 \textstyle \E   G_{J}(\Lambda)^{2}  = \frac{1}{J^{2}}\sum_{j=1}^{J}\E \tr (\Sigma_{j}(\Lambda + \Sigma_{j})^{-1}( y_{j}y_{j}' - y_{j}\theta_{j}'
  -   \Sigma_{j}) )^{2}
\end{equation*}
I give a bound on
$\lvert \tr (\Sigma_{j}(\Lambda + \Sigma_{j})^{-1}( y_{j}y_{j}' -
y_{j}\theta_{j}' -  \Sigma_{j})) \rvert.$ Let $UDU'$ denote the spectral decomposition of 
$\Sigma_{j}^{-1/2} \Lambda \Sigma_{j}^{-1/2}$ with
$ D = \mathrm{diag}(d_{1}, \dots, d_{T})$. Then, I have 
\begin{equation*}
  \Sigma_{j} (\Lambda +  \Sigma_{j})^{-1}= \Sigma_{j}^{1/2}U(I_{T} +D)^{-1}U'\Sigma_{j}^{-1/2}.
\end{equation*}
It follows that
\begin{align}
  \begin{aligned}\label{eq:tr_decomp}
    & \tr (\Sigma_{j}(\Lambda + \Sigma_{j})^{-1}( y_{j}y_{j}' -y_{j} \theta_{j}'
    -   \Sigma_{j})) \\
    = & \tr ( (I_{T} +D)^{-1}U'\Sigma_{j}^{-1/2}( y_{j}y_{j}' -\theta_{j}
    y_{j}' -  \Sigma_{j})\Sigma_{j}^{1/2}U).
  \end{aligned}
\end{align}
Write
$H_{j} = \Sigma_{j}^{-1/2}( y_{j}y_{j}' -y_{j} \theta_{j}' - 
\Sigma_{j})\Sigma_{j}^{1/2}$, and observe that
\begin{align}
  \begin{aligned}\label{eq:bound_on_tr}
   \textstyle  \left\lvert \tr ( (I_{T} +D)^{-1} U'\Sigma_{j}^{-1/2}( y_{j}y_{j}'
      -y_{j} \theta_{j}'
      -   \Sigma_{j})\Sigma_{j}^{1/2}U ) \right\rvert 
    = &  \textstyle \left\lvert \sum_{t=1}^{T}\frac{1}{1+d_{t}} (U'H_{j}U)_{tt} \right\rvert \\
    \leq &\textstyle \sum_{t=1}^{T} \left\lvert (U'H_{j}U)_{tt} \right\rvert,
  \end{aligned}
\end{align}
where the inequality holds because $0 \leq 1/(1+d_{t}) \leq 1$. Let $U_{t}$
denote the $t$th column of the orthogonal matrix $U$. I have
\begin{equation*}
  \left\lvert  (U'H_{j}U)_{tt}\right\rvert = \left\lvert U_{t}' H_{j} U_{t}  \right\rvert
  \leq  \left\lVert  H_{j} U_{t}  \right\rVert \leq \sup_{U \in \mathbb{R}^{T},\lVert U \rVert = 1} \lVert
  H_{j} U \rVert  = s_{1}(H_{j}),
\end{equation*}
where the first inequality follows from Cauchy-Schwarz, and the last equality
from the fact that the operator norm of a matrix is equal to its largest
singular value.

Combining these results gives
\begin{equation*}
 \textstyle \E   G_{J}(\Lambda)^{2}  \leq \frac{T^{2}}{J^{2}}
  \sum_{j=1}^{J} \E  \,s_{1}(H_{j})^{2}.
\end{equation*}
Now, to derive a bound for $ s_{1}(H_{j})$, observe that
\begin{align*}
  s_{1}(H_{j}) %
  \leq  & s_{1}(\Sigma_{j}^{-1/2})s_{1}( y_{j}y_{j}' -y_{j} \theta_{j}' - 
          \Sigma_{j})s_{1}(\Sigma_{j}^{1/2}) \\
  \leq & \kappa(\Sigma_{j})^{1/2} s_{1}( y_{j}y_{j}' -y_{j} \theta_{j}' - 
         \Sigma_{j}).
\end{align*}
Since the largest singular value of a matrix is bounded by its Frobenius norm,
it follows that
\begin{align*}
   s_{1}( y_{j}y_{j}' -y_{j} \theta_{j}' - 
    \Sigma_{j})^{2} 
  \leq & \tr (( y_{j}y_{j}' -y_{j} \theta_{j}' - 
         \Sigma_{j})'( y_{j}y_{j}' -y_{j} \theta_{j}' - 
         \Sigma_{j})) \\
  = & (y_{j}'y_{j})^{2} + \theta_{j}'y_{j} \theta_{j}'y_{j} +
      \sigma^{4}\tr (\Sigma_{j})^{2} - 2 y_{j}'y_{j} y_{j}'\theta_{j} - 2 
      y'_{j}\Sigma_{j}y_{j} + 2\theta_{j}'\Sigma_{j}y_{j}.
\end{align*}
Taking expectations yields
\begin{align*}
  &\E (y_{j}'y_{j})^{2} + \theta_{j}'\theta_{j}\E  y_{j}'y_{j} +
    \sigma^{4}\tr (\Sigma_{j})^{2} - 2\E  y_{j}'y_{j} y_{j}'\theta_{j} - 2 
    \E y'_{j}\Sigma_{j}y_{j} + 2\theta_{j}'\Sigma_{j}\E y_{j} \\
  = & \var (y_{j}'y_{j}) + 2(\theta_{j}'\theta_{j})^{2} + 3 
      \theta_{j}'\theta_{j}\tr (\Sigma_{j}) - 2 \theta_{j}'\E  (y_{j}
      y_{j}'y_{j}) \\
  \leq & \var (y_{j}'y_{j}) + 2\lVert \theta_{j} \rVert^{4} + 3 
         \lVert \theta_{j} \rVert^{2} \tr (\Sigma_{j}) + 2 \lVert \theta_{j}
         \rVert \E  \lVert y_{j} \rVert^{3}.
\end{align*}
This shows that if
\begin{equation}\label{eq:high_lev_condn_2}
 \textstyle \underset{J \to \infty}{\lim \sup}  \frac{1}{J} \sum_{j=1}^{J}  \kappa(\Sigma_{j}) \left( \var (y_{j}'y_{j}) + 2\lVert \theta_{j} \rVert^{4} + 3 
    \lVert \theta_{j} \rVert^{2} \tr (\Sigma_{j}) + 2 \lVert \theta_{j}
    \rVert \E  (\lVert y_{j} \rVert^{3}) \right) < \infty,
\end{equation}
then $\lvert G_{J}(\Lambda) \rvert \to 0$ in $L^{2}$, and thus in
probability. Hence, if $\sup_{j}s_{1}(\Sigma_{j})/s_{T}(\Sigma_{j})$,
$ \sup_{j} \var (y_{j}'y_{j})$, $\sup_{j}\lvert \theta_{j} \rvert$, and
$\sup_{j}\tr (\Sigma_{j})$ are bounded, the result holds. Note that this
true by Assumption \ref{assum:bounded}.

It remains to show that $\tilde{G}_{J}(\tilde{\Lambda})$ is
stochastically equicontinuous. I do this by showing that
$\tilde{G}_{J}(\tilde{\Lambda})$ satisfies a Lipschitz condition as in
Assumption SE-1 of \cite{andrews1992GenericUniformConvergence}. Specifically,
I show that
$\lvert \tilde{G}_{J}(\tilde{\Lambda}) -
\tilde{G}_{J}(\tilde{\Lambda}^{\dagger}) \rvert \leq B_{J}\lVert
\tilde{\Lambda} - \tilde{\Lambda}^{\dagger} \rVert $ for all
$\tilde{\Lambda}, \tilde{\Lambda}^{\dagger} \in
\tilde{\mathcal{L}} $ with $B_{J}= O_{p}(1)$.  Let
$\tilde{\Lambda}, \tilde{\Lambda}^{\dagger} \in
\tilde{\mathcal{L}}$ be arbitrarily taken. First, I show that
$\tilde{\mathcal{L}}$ is convex. Take any $\theta \in [0,1]$. Note that
$ \theta\tilde{\Lambda} + (1-\theta)\tilde{\Lambda}^{\dagger}$ is
nonsingular because $\tilde{\Lambda}$ and $\tilde{\Lambda}^{\dagger}$
are positive definite and the space of positive definite matrices is
convex. Then, for
$\Lambda_{\theta} = (\theta\tilde{\Lambda} +
(1-\theta)\tilde{\Lambda}^{\dagger})^{-1}- \underline{s}_{\Sigma}I_{T}$,
I have
$h(\Lambda_{\theta}) = \theta\tilde{\Lambda} +
(1-\theta)\tilde{\Lambda}^{\dagger},$ which shows that
$\theta\tilde{\Lambda} + (1-\theta)\tilde{\Lambda}^{\dagger} \in
\tilde{\mathcal{L}}$. The mean value theorem gives
\begin{equation*}
  \tilde{G}_J(\tilde{\Lambda}) - \tilde{G}_J(\tilde{\Lambda}^{\dagger})
  = \nabla \tilde{G}_J(\tilde{\Lambda}^{\theta}) \cdot
  \mathrm{vec}(\tilde{\Lambda} - \tilde{\Lambda}^{\dagger}),
\end{equation*}
where
$\nabla \tilde{G}_J(\tilde{\Lambda}):= \frac{\partial}{\partial
  \mathrm{vec}(\tilde{\Lambda})}\tilde{G}_J(\tilde{\Lambda})$, and
$\tilde{\Lambda}^{\theta} := \theta \tilde{\Lambda} +
(1-\theta)\tilde{\Lambda}^{\dagger}$ for some $\theta \in [0,1]$. This
implies, by Cauchy-Schwarz,
\begin{equation}\label{eq:Lip_condn_G}
  \lvert \tilde{G}_J(\tilde{\Lambda}) - \tilde{G}_J(\tilde{\Lambda}^{\dagger}) \rvert
  \leq \lVert \nabla \tilde{G}_J(\tilde{\Lambda}^{\theta}) \rVert 
  \lVert  \tilde{\Lambda} - \tilde{\Lambda}^{\dagger} \rVert,
\end{equation}
where I use the fact that the Frobenius norm of a matrix and the Euclidean norm
of the vectorized version of it are the same. Note that
$\lVert \nabla \tilde{G}_J(\tilde{\Lambda}) \rVert= \lVert
\frac{\partial}{\partial
  \tilde{\Lambda}}\tilde{G}_J(\tilde{\Lambda}) \rVert$ by definition
of the Frobenius norm.

By the formula for the derivative of a matrix inverse and the derivative of a
trace, and the chain rule for matrix derivatives, I have
\begin{equation*}
   \textstyle \frac{\partial}{\partial
    \tilde{\Lambda}} G_{J}(\tilde{\Lambda}) 
  = \textstyle \frac{1}{J}\sum_{j=1}^{J}
      \tilde{\Lambda}^{-1}(\tilde{\Lambda}^{-1}-\underline{s}_{\Sigma}I_{T}+\Sigma_{j})^{-1} \Sigma_{j} ( y_{j}y_{j}' - y_{j}\theta_{j}'
      -   \Sigma_{j})(\tilde{\Lambda}^{-1}-\underline{s}_{\Sigma}I_{T}+\Sigma_{j})^{-1}\tilde{\Lambda}^{-1}.
\end{equation*}
Write the summand in the second line as $g_{j}(\tilde{\Lambda})$. I first
derive a bound on $s_{1}(g_{j}(\tilde{\Lambda}))$, and use this to
bound
$\lVert \textstyle \frac{\partial}{\partial \tilde{\Lambda}}
G_{J}(\tilde{\Lambda}) \rVert$ by using the fact that
\begin{equation*}
\textstyle \left\lVert    \frac{\partial}{\partial
     \tilde{\Lambda}} G_{J}(\tilde{\Lambda}) \right\rVert  \leq  T^{1/2} s_{1}\left(  \frac{\partial}{\partial
      \tilde{\Lambda}} G_{J}(\tilde{\Lambda}) \right) \leq \frac{1}{J}\sum_{j=1}^{J}s_{1}(g_{j}(\tilde{\Lambda})).
\end{equation*}
Since the operator norm is submultiplicative, it follows that
\begin{equation*}
  s_{1}(g_{j}(\tilde{\Lambda})) \leq s_{1}(\tilde{\Lambda}^{-1}(\tilde{\Lambda}^{-1}-\underline{s}_{\Sigma}I_{T}+\Sigma_{j})^{-1})^{2}s_{1}(\Sigma_{j} ( y_{j}y_{j}' - y_{j}\theta_{j}'
  -   \Sigma_{j})).
\end{equation*}
I proceed by bounding the two singular values that appear on the right hand
side. For the first term, note that
\begin{align}
  \begin{aligned}
    s_{1}(\tilde{\Lambda}^{-1}(\tilde{\Lambda}^{-1}-\underline{s}_{\Sigma}I_{T}+\Sigma_{j})^{-1})^{2} 
    =  s_{1}((I+\tilde{\Lambda}^{1/2}(\Sigma_{j} -
    \underline{s}_{\Sigma}I_{T})\tilde{\Lambda}^{1/2})^{-2}) 
    \leq  1,\label{eq:bound_on_LambdtilAtil}
  \end{aligned}
\end{align}
where the inequality follows because
$\tilde{\Lambda}^{1/2}(\Sigma_{j} -
\underline{s}_{\Sigma}I_{T})\tilde{\Lambda}^{1/2}$ is positive
semidefinite so that
$0 \leq (I+\tilde{\Lambda}^{1/2}(\Sigma_{j} -
\underline{s}_{\Sigma}I_{T})\tilde{\Lambda}^{1/2})^{-2} \leq I_{T}$, and
$ A \leq B$ implies $s_{1}(A) \leq s_{1}(B)$ for any two positive
semidefinite matrices $A$ and $B$. A bound on
$s_{1}(\Sigma_{j} ( y_{j}y_{j}' - y_{j}\theta_{j}' -  \Sigma_{j}))$
is given by
\begin{equation*}
  s_{1}(\Sigma_{j} ( y_{j}y_{j}' - y_{j}\theta_{j}'  -   \Sigma_{j})) \leq s_{1}(\Sigma_{j})(y_{j}'y_{j} +
  (y_{j}'y_{j})^{1/2}(\theta_{j}'\theta_{j})^{1/2} +   s_{1}(\Sigma_{j})).
\end{equation*}
Combining these results, I obtain
\begin{align*}
\textstyle  \sup_{\tilde{\Lambda} \in \tilde{\mathcal{L}}} \, \left\lVert  \frac{\partial}{\partial
  \tilde{\Lambda}} G_{J}(\tilde{\Lambda}) \right\rVert %
  \leq & \textstyle
      \frac{1}{J} \sum_{j=1}^{J} s_{1}(\Sigma_{j})\left(y_{j}'y_{j} +
      \lVert y_{j} \rVert\lVert \theta_{j} \rVert - \E (y_{j}'y_{j} +
      \lVert y_{j} \rVert\lVert \theta_{j} \rVert)\right) \\
                                                              & \textstyle +\frac{1}{J} \sum_{j=1}^{J} \left( \E (y_{j}'y_{j} +
                                                                \lVert y_{j} \rVert\lVert \theta_{j} \rVert) +  s_{1}(\Sigma_{j}) \right).
\end{align*}
The first term on the right hand side is $o_{p}(1)$ because
\begin{equation*}
  \textstyle \sup_{j}\var (s_{1}(\Sigma_{j})(y_{j}'y_{j} + \lVert y_{j} \rVert \lVert
  \theta_{j} \rVert)) \leq 2 \textstyle\sup_{j}s_{1}^{2}(\Sigma_{j})(\E \lVert y_{j} \rVert^{4}
  + \lVert \theta_{j} \rVert^{2}\E \lVert y_{j} \rVert^{2}) < \infty.
\end{equation*}
The term in the last line is bounded as $J \to \infty$ because the summand is
bounded uniformly over $j$. This shows that
$B_{J} := \sup_{\tilde{\Lambda} \in \tilde{\mathcal{L}}} \lVert
\textstyle \frac{\partial}{\partial \tilde{\Lambda}}
G_{J}(\tilde{\Lambda}) \rVert = O_{p}(1)$. Combining this with
\eqref{eq:Lip_condn_G} gives
\begin{equation*}
  \lvert \tilde{G}_J(\tilde{\Lambda}) - \tilde{G}_J(\tilde{\Lambda}^{\dagger}) \rvert
  \leq  B_{J} 
  \lVert  \tilde{\Lambda} - \tilde{\Lambda}^{\dagger} \rVert,
\end{equation*}
for all $\Lambda, \tilde{\Lambda} \in \tilde{\mathcal{L}}$ and
$B_{J} = O_{p}(1)$, which establishes the desired Lipschitz condition. This
completes the proof for
$\sup_{\Lambda \in \mathcal{S}_{T}^{+}}\lvert {G}_{J}(\Lambda) \rvert \overset{p}{\to}
0$.

Now, to strengthen the convergence in probability to convergence in $L^{1}$, I
show that
$\left\{ \sup_{\Lambda} \lvert {G}_{J}(\Lambda) \rvert \right\}_{J \leq 1}$ is
uniformly integrable. A bound on
$\sup_{\Lambda} \lvert {G}_{J}(\Lambda) \rvert$ is given by
\begin{align*}
 \sup_{\Lambda} \lvert {G}_{J}(\Lambda) \rvert & =  \sup_{\Lambda} \left\lvert \textstyle \frac{1}{J}\sum_{j=1}^{J} \tr (\Sigma_{j}(\Lambda + \Sigma_{j})^{-1}( y_{j}y_{j}' - y_{j}\theta_{j}'
                                                             -   \Sigma_{j}) ) \right\rvert \\
                                                           & \leq \textstyle \frac{1}{J}\sum_{j=1}^{J}  \sup_{\Lambda} \left\lvert\tr (\Sigma_{j}(\Lambda + \Sigma_{j})^{-1}( y_{j}y_{j}' - y_{j}\theta_{j}'
                                                             -   \Sigma_{j}) ) \right\rvert \\
                                                           & \leq 
                                                             \textstyle\frac{T}{J}\sum_{j=1}^{J}\kappa(\Sigma_{j})(y_{j}'y_{j}
                                                             + \lVert \theta_{j} \rVert \lVert y_{j} \rVert + s_{1}(\Sigma_{j}))
\end{align*}
where the last inequality follows from \eqref{eq:tr_decomp} and
\eqref{eq:bound_on_tr}. Let $\overline{G}_{J}$ denote the expression in the last
line, and suppose that
${\lim \sup}_{{J \to \infty}}\E  \overline{G}_{J}^{2} < \infty $, which
I verify below. Then, I have
$\sup_{J} \E (\sup_{\Lambda} \lvert {G}_{J}(\Lambda) \rvert)^{^{2}} <
\infty$, from which the uniform integrability follows.  It remains only to show
that ${\lim \sup}_{{J \to \infty}}\E  \overline{G}_{J}^{2} < \infty
$. By Cauchy-Schwarz, it follows that
\begin{equation*}
\textstyle  \E  \,\overline{G}_{J}^{2}   \leq
  \frac{T^{2}}{J}\sum_{j=1}^{J}\E  \left( \kappa(\Sigma_{j})(y_{j}'y_{j}
  + \lVert \theta_{j} \rVert \lVert y_{j} \rVert + s_{1}(\Sigma_{j})) \right)^{2},
\end{equation*}
and the term in the summand is uniformly bounded over $j \geq 1$. This
establishes ${\lim \sup}_{{J \to \infty}}\E  \, \overline{G}_{J}^{2} < \infty $,
and thus that
$\left\{ \sup_{\Lambda} \lvert {G}_{J}(\Lambda) \rvert \right\}_{J \leq 1}$ is
uniformly integrable. This concludes the proof.

\subsection{Proof of Theorem \ref{thm:URE_opt_m_g} (i)}
\label{appsec:proof-theorem-URE_m}

The convergence result to be
established is
\begin{equation*}
  \sup_{\Lambda \in \mathcal{S}_{T}^{+}} \left\lvert
         \frac{1}{J} \sum\nolimits_{j=1}^{J} \overline{y}_{J}' (\Lambda + \Sigma_{j})^{-1}
    \Sigma_{j}(y_{j}-\theta_{j}) \right\rvert \overset{L^{1}}{\to} 0.
\end{equation*}
Two applications of the Cauchy-Schwarz inequality show that the expectation of
the left-hand side is bounded by
\begin{equation*}
  \left( \E   \lVert \overline{y}_{J} \rVert^{2} \right)^{1/2} 
    \Big( \E  \textstyle \sup_{\Lambda} \left\lVert  \frac{1}{J} \sum_{j=1}^{J}  (\Lambda + \Sigma_{j})^{-1}
      \Sigma_{j}(y_{j}-\theta_{j} ) \right\rVert^{2} \Big)^{1/2}.
\end{equation*}
The limit supremum, as $J \to \infty$, of the first term is bounded by
Assumption \ref{assum:bounded} (ii). The fact that the second term converges to
$0$ has been shown in the proof of Theorem \ref{thm:URE}, given in Appendix \ref{appsec:proof-theorem-URE}. 

\subsection{Proof of Theorem \ref{thm:URE_opt_m_g} (ii)}
\label{appsec:proof-theorem-URE_g}

All supremums over $\mu$ are understood to be taken over $\mathcal{B}$,
though for simplicity I write $\sup_{\mu}$. Observe that, by
\eqref{eq:URE_minus_loss_expansion},
\begin{align*}
  &\sup_{\mu, \Lambda} \big(
    \mathrm{URE}(\mu, \Lambda) - \ell(\theta, \hat{\theta}(\mu,
    \Lambda)) \big) \\
  \leq & \sup_{\Lambda} \big(
         \mathrm{URE}( \Lambda) - \ell(\theta, \hat{\theta}(\Lambda))
         \big)
       \textstyle  + \sup_{\mu, \Lambda} \big(  \frac{1}{J} \sum_{j=1}^{J} \mu' (\Lambda + \Sigma_{j})^{-1}
         \Sigma_{j}(y_{j}-\theta_{j}) \big).
\end{align*}
Since Theorem \ref{thm:URE} shows that the first term on the right-hand side
converges to zero in $L^{1}$, it now remains to show
\begin{equation}\label{eq:centering_approx}
\textstyle  \sup_{\mu,\Lambda} \left\lvert  \frac{1}{J} \sum_{j=1}^{J} \mu' (\Lambda + \Sigma_{j})^{-1}
    \Sigma_{j}(y_{j}-\theta_{j}) \right\rvert \overset{L^{1}}{\to} 0.
\end{equation}

By two applications of Cauchy-Schwarz, I have
\begin{align}
  \begin{aligned}
    & \textstyle \,\E  \sup_{\mu,\Lambda} \left\lvert  \frac{1}{J} \sum_{j=1}^{J} \mu' (\Lambda + \Sigma_{j})^{-1}
      \Sigma_{j}(y_{j}-\theta_{j}) \right\rvert \\
    \leq & \textstyle \left( \E  \textstyle \sup_{\mu} \lVert \mu \rVert^{2} \right)^{1/2} 
    \left( \E   \sup_{\Lambda} \left\lVert  \frac{1}{J} \sum_{j=1}^{J}  (\Lambda + \Sigma_{j})^{-1}
      \Sigma_{j}(y_{j}-\theta_{j} ) \right\rVert^{2} \right)^{1/2}.\label{eq:two_app_cs}
  \end{aligned}
\end{align}
Note that
${\limsup}_{J \to \infty} \E  \textstyle \sup_{\mu} \lVert \mu \rVert^{2} <
\infty$ by Assumption \ref{assu:bdd_quan}. It suffices to show
\begin{equation*}
\textstyle  \lim_{J \to \infty}  \E \sup_{\Lambda} \left\lVert  \frac{1}{J} \sum_{j=1}^{J}  (\Lambda + \Sigma_{j})^{-1}
    \Sigma_{j}(y_{j}-\theta_{j} ) \right\rVert^{2} = 0,
\end{equation*}
from which then \eqref{eq:centering_approx} will follow.

Write $H_{J}(\Lambda) :=  \lVert \frac{1}{J} \sum_{j=1}^{J} (\Lambda + \Sigma_{j})^{-1}
\Sigma_{j}(y_{j}-\theta_{j} ) \rVert$. As in the proof for Theorem \ref{thm:URE}, I
show (a) $\sup_{\Lambda} H_{J}(\Lambda) \overset{p}{\to} 0$ and (b) $\sup_{J} \E  \,(\sup_{\Lambda} H_{J}(\Lambda))^{2+\delta} < \infty$
for some $\delta > 0$. Since (b) is a sufficient condition for $\left\{
  \sup_{\Lambda}H_{J}(\Lambda)^{2} \right\}_{J \geq 1}$ being uniformly integrable, (a) and (b)
together imply $ \sup_{\Lambda }H_{J}(\Lambda) \overset{L^{2}}{\to} 0.$

I show $\sup_{\Lambda}H_{J}(\Lambda) \overset{p}{\to} 0$ by again using a ULLN
argument as in \cite{andrews1992GenericUniformConvergence}. First, to show
$H_{J}(\Lambda) \overset{p}{\to} 0$, it is enough to show $\E  \,H_{J}(\Lambda)^{2}
\to 0.$ Note that
\begin{align*}
 \textstyle  \E H_{J}(\Lambda)^{2} %
                               &=   \textstyle\frac{1}{J^{2}} \E   \left(\sum_{j=1}^{J} (y_{j}-\theta_{j} )'\Sigma_{j}(\Lambda + \Sigma_{j})^{-2}
                                 \Sigma_{j}(y_{j}-\theta_{j} )\right) \\
                               & \textstyle\leq  \frac{1}{J^{2}}  \sum_{j=1}^{J} \tr (\Sigma_{j}),
\end{align*}
where the inequality follows from von Neumann's trace inequality and the fact
that $s_{1}(\Sigma_{j}(\Lambda + \Sigma_{j})^{-2}
\Sigma_{j}) \leq 1. $ Moreover, by Assumption \ref{assum:bounded}, I have $\sup_{j}\tr (\Sigma_{j})\leq T
\sup_{j}s_{1}(\Sigma_{j}) < \infty$, which implies $\frac{1}{J^{2}}  \sum_{j=1}^{J}
\tr (\Sigma_{j}) \to 0$. This establishes that $H_{J}(\Lambda)$
converges to zero in $L^{2}$, and thus in probability.

To show that this
convergence is uniform over $\Lambda \in \mathcal{S}_{T}^{+}$, by a similar argument as in
the proof of Theorem \ref{thm:URE}, it suffices to show that
$\tilde{H}_{J} := H_{J}\circ h^{-1}$ satisfies a Lipschitz condition, i.e.,
\begin{equation}\label{eq:Lip_condn_cent}
  \lvert \tilde{H}_{J}(\tilde{\Lambda}) -
  \tilde{H}_{J}(\tilde{\Lambda}^{\dagger}) \rvert \leq B_{H,J} \lVert
  \tilde{\Lambda} - \tilde{\Lambda}^{\dagger} \rVert
\end{equation}
for all $\tilde{\Lambda}, \tilde{\Lambda}^{\dagger}\in
\tilde{\mathcal{L}}$, where $B_{H,J} = O_{p}(1)$. Define
$\tilde{A}_{j} = \tilde{\Lambda}^{-1} +( \Sigma_{j} -
\underline{s}_{\Sigma}I_{T})$ and $\tilde{A}_{j}^{\dagger}$ likewise with
$\tilde{\Lambda}$ replaced with $\tilde{\Lambda}^{\dagger}$. By the
triangle inequality (and its reverse), we have
\begin{align}
  \begin{aligned}
    \lvert \tilde{H}_{J}(\tilde{\Lambda}) -
    \tilde{H}_{J}(\tilde{\Lambda}^{\dagger}) \rvert%
    \leq & \textstyle \frac{1}{J}
    \sum_{j=1}^{J}s_{1}(\tilde{A}_{j}^{-1} - \tilde{A}_{j}^{\dagger
      -1} ) \lVert \Sigma_{j}(y_{j}-\theta_{j} ) \rVert,\label{eq:Lips_lambdatil}
  \end{aligned}
\end{align}
where the first inequality follows from the reverse triangle inequality and the
second by the triangle inequality and the definition of the operator
norm. Observe that
\begin{equation}
    \tilde{A}_{j}^{-1} - \tilde{A}^{\dagger -1}_{j} =%
    \tilde{A}^{\dagger-1}_{j}
    \tilde{\Lambda}^{-1}(\tilde{\Lambda} - \tilde{\Lambda}^{\dagger}
    )\tilde{\Lambda}^{\dagger -1}\tilde{A}_{j}^{-1},\label{eq:Lambda_til_decomp}
\end{equation}
which implies
\begin{align}
  \begin{aligned}
    s_{1}(\tilde{A}_{j}^{-1} - \tilde{A}^{\dagger -1}_{j}) %
    \leq & \textstyle ( \sup_{\tilde{\Lambda} \in \tilde{\mathcal{L}}}
    s_{1}((\tilde{\Lambda}^{-1} +( \Sigma_{j} -
    \underline{s}_{\Sigma}I_{T}))^{-1}\tilde{\Lambda}^{-1})^{2} ) s_{1}(\tilde{\Lambda}-
    \tilde{\Lambda}^{\dagger}).\label{eq:bound_on_Atil}
  \end{aligned}
\end{align}
The inequality follows from \eqref{eq:Lambda_til_decomp}, the fact that the
operator norm is submultiplicative and the fact that
$s_{1}(C) = s_{1}(C')$ for any matrix $C$. Furthermore, I have shown
in \eqref{eq:bound_on_LambdtilAtil} that
$s_{1}((\tilde{\Lambda}^{-1} +( \Sigma_{j} -
\underline{s}_{\Sigma}I_{T}))^{-1}\tilde{\Lambda}^{-1})^{2}$ is bounded
above by $1$. Hence, I obtain
\begin{equation*}
  s_{1}(\tilde{A}_{j}^{-1} - \tilde{A}^{\dagger -1}_{j}) \leq
  s_{1}(\tilde{\Lambda} - \tilde{\Lambda}^{\dagger}) \leq \lVert
  \tilde{\Lambda} - \tilde{\Lambda}^{\dagger} \rVert,
\end{equation*}
where the last inequality follows from the fact that the operator norm of a
matrix is less than or equal to its Frobenius norm.

Plugging this bound into
\eqref{eq:Lips_lambdatil}, it follows that
\begin{equation*}
  \lvert \tilde{H}_{J}(\tilde{\Lambda}) -
  \tilde{H}_{J}(\tilde{\Lambda}^{\dagger}) \rvert \leq \textstyle  \left(\frac{1}{J}
  \sum_{j=1}^{J} \lVert \Sigma_{j}(y_{j}-\theta_{j} ) \rVert\right) \lVert
  \tilde{\Lambda} - \tilde{\Lambda}^{\dagger} \rVert.
\end{equation*}
Therefore, it remains to show
$\frac{1}{J} \sum_{j=1}^{J} \lVert \Sigma_{j}(y_{j}-\theta_{j} ) \rVert = O_{p}(1)$
to establish \eqref{eq:Lip_condn_cent}. Observe that
\begin{equation*}
  \textstyle \textstyle \frac{1}{J} \sum_{j=1}^{J} \lVert \Sigma_{j}(y_{j}-\theta_{j} ) \rVert 
  =   \textstyle \frac{1}{J}
      \sum_{j=1}^{J} (\lVert \Sigma_{j}(y_{j}-\theta_{j} ) \rVert - \E  \lVert
      \Sigma_{j}(y_{j}-\theta_{j} ) \rVert ) + \frac{1}{J}
      \sum_{j=1}^{J} \E \lVert \Sigma_{j}(y_{j}-\theta_{j} ) \rVert.
\end{equation*}
Since
$ \sup_{j}\var (\lVert \Sigma_{j}(y_{j}-\theta_{j} ) \rVert) \leq \sup_{j} E
\lVert \Sigma_{j}(y_{j}-\theta_{j} ) \rVert^{2} = \sup_{j} \tr (\Sigma_{j})^{3}
< \infty $ the first term on the right hand side converges to zero in
probability by an application of Chebyshev's inequality. Also, because
$\sup_{j}\E \lVert \Sigma_{j}(y_{j}-\theta_{j} ) \lvert \leq \sup_{j}
s_{1}(\Sigma_{j})( \E \lVert y_{j} \rVert + \lVert \theta_{j} \rVert)
<\infty$ by Assumption \ref{assum:bounded}, the second term is $O(1)$. This
establishes
$\frac{1}{J} \sum_{j=1}^{J} \lVert \Sigma_{j}(y_{j}-\theta_{j} ) \rVert =
O_{p}(1)$, and thus $\sup_{\Lambda}H_{J}(\Lambda) \overset{p}{\to} 0$.

Now, to show that $\sup_{\Lambda}H_{J}(\Lambda)$ converges to zero in $L^{2}$,
it is enough to show that
$\left\{ \sup_{\Lambda}H_{J}(\Lambda)^{2} \right\}_{J\leq 1}$ is uniformly
integrable. A sufficient condition for this to hold is
\begin{equation*}
  \textstyle \sup_{J} \E \sup_{\Lambda}H_{J}(\Lambda)^{2+\delta} < \infty,
\end{equation*}
for some $\delta > 0$. First, I derive an upper bound of $H_{J}(\Lambda)$,
\begin{align*}
  H_{J}(\Lambda) %
  \leq \textstyle  \frac{1}{J} \sum_{j=1}^{J} s_{1}((\Lambda + \Sigma_{j})^{-1}
         \Sigma_{j}))\lVert y_{j}-\theta_{j} \rVert 
  \leq  \textstyle \frac{1}{J} \sum_{j=1}^{J}\lVert y_{j}-\theta_{j} \rVert,
\end{align*}
where the first inequality follows from the triangle inequality and the
definition of the operator norm, and the second inequality follows because
\begin{equation*}
  s_{1}((\Lambda + \Sigma_{j})^{-1} \Sigma_{j})^{2} = s_{1}(\Sigma_{j}(\Lambda + \Sigma_{j})^{-2}
  \Sigma_{j}) = s_{1}(I_{T}+\Sigma_{j}^{-1/2}\Lambda \Sigma_{j}^{-1/2})\leq 1.
\end{equation*}
Therefore, I have
\begin{equation}
\textstyle    \sup_{\Lambda} H_{J}(\Lambda)^{2+\delta} %
    \leq  \frac{1}{J} \sum_{j=1}^{J}\lVert y_{j}-\theta_{j}
    \rVert^{2+\delta}    \leq  \frac{1}{J} \sum_{j=1}^{J}2^{1+\delta}(\lVert
    y_{j}\rVert^{2+\delta} + \lVert\theta_{j} \rVert^{2+\delta}),\label{eq:unif_int_center}
\end{equation}
where the first inequality follows from Jensen's inequality, and the last
inequality follows from the triangle inequality and the fact that
$(a+b)^{p} \leq 2^{p-1}(a^{p}+b^{p})$ for any $a,b \geq 0$ and $p \geq
1$. Taking expectations shows that, for any $\delta \in [0, 2]$,
\begin{equation*}
  \textstyle \limsup_{J}  \E  \sup_{\Lambda} H_{J}(\Lambda)^{2+\delta} < \infty,
\end{equation*}
and thus $\sup_{J}  \E  \sup_{\Lambda} H_{J}(\Lambda)^{2+\delta} < \infty$.

\subsection{Proof of Theorem \ref{thm:URE_cov}}
\label{appsec:proof-theorem-URE_cov}

By the same calculation given in \eqref{eq:two_app_cs}, it is
enough to show
\begin{align*} 
& \textstyle \underset{J \to \infty}{\limsup} \, \E   \sup_{\gamma \in \Gamma} \lVert \gamma \rVert^{2} <
    \infty  \text{ and } \\ & \textstyle \lim_{J \to \infty}  \E   \sup_{\Lambda} \left\lVert  \frac{1}{J} \sum_{j=1}^{J}  Z_{j}'(\Lambda + \Sigma_{j})^{-1}
    \Sigma_{j}(y_{j}-\theta_{j} ) \right\rVert^{2} = 0,
\end{align*}
The first inequality is equivalent to showing
\begin{equation}\label{eq:cov_equiv_bound}
  \underset{J \to \infty}{\limsup} \, \textstyle \E  (\sum_{j=1}^{J}y_{j}'Z_{j})
  (\sum_{j=1}^{J}Z_{j}'Z_{j})^{-2}(\sum_{j=1}^{J}Z_{j}'y_{j}) < \infty.
\end{equation}
Simple algebra show that
\begin{align*}
  & \textstyle \E  (\sum_{j=1}^{J}y_{j}'Z_{j})
    (\sum_{j=1}^{J}Z_{j}'Z_{j})^{-2}(\sum_{j=1}^{J}Z_{j}'y_{j}) \\
  = & \textstyle (\sum_{j=1}^{J} \theta_{j}'Z_{j})
      (\sum_{j=1}^{J}Z_{j}'Z_{j})^{-2}(\sum_{j=1}^{J}Z_{j}'\theta_{j}) + \E  (\sum_{j=1}^{J}\varepsilon_{j}'Z_{j})
      (\sum_{j=1}^{J}Z_{j}'Z_{j})^{-2}(\sum_{j=1}^{J}Z_{j}' \varepsilon_{j}),
\end{align*}
where the last equality follows because the ``cross terms'' are zero due to the
conditional mean independence assumption. I show that the first and second term
of the last line is $O(1)$ and $o(1)$, respectively, which in turn will imply
\eqref{eq:cov_equiv_bound}. Note that
\begin{align*}
  \textstyle  \lVert (\sum_{j=1}^{J}
    \theta_{j}'Z_{j})(\sum_{j=1}^{J}Z_{j}'Z_{j})^{-2}(\sum_{j=1}^{J}Z_{j}'\theta_{j})
    \rVert%
  \leq  \textstyle   s_{1}((\frac{1}{J}\sum_{j=1}^{J}Z_{j}'Z_{j})^{-2})
         (\frac{1}{J}\sum_{j=1}^{J}s_{1}(Z_{j}) \lVert\theta_{j} \rVert)^{2},
\end{align*}
with
$s_{1}((\frac{1}{J}\sum_{j=1}^{J}Z_{j}'Z_{j})^{-2}) \to
s_{1}((\E Z_{j}'Z_{j})^{-2})$ and
$\limsup_{J\to \infty}\frac{1}{J}\sum_{j=1}^{J}s_{1}(Z_{j})
\lVert\theta_{j} \rVert < \infty$. This shows that
$\limsup_{J\to \infty}(\sum_{j=1}^{J} \theta_{j}'Z_{j})
(\sum_{j=1}^{J}Z_{j}'Z_{j})^{-2}(\sum_{j=1}^{J}Z_{j}'\theta_{j}) < \infty$.

It remains to show
$E (\sum_{j=1}^{J}\varepsilon_{j}'Z_{j})
(\sum_{j=1}^{J}Z_{j}'Z_{j})^{-2}(\sum_{j=1}^{J}Z_{j}' \varepsilon_{j}) \to 0.$
Because
\begin{align*}
  & \textstyle (\sum_{j=1}^{J}\varepsilon_{j}'Z_{j})
    (\sum_{j=1}^{J}Z_{j}'Z_{j})^{-2}(\sum_{j=1}^{J}Z_{j}' \varepsilon_{j})   %
  =  \textstyle  \tr ((\sum_{j=1}^{J}Z_{j}'Z_{j})^{-2}(\sum_{j=1}^{J}\sum_{\ell=1}^{J}Z_{j}'
      \varepsilon_{j}\varepsilon_{\ell}'Z_{\ell})),
\end{align*}
it follows that
\begin{align*}
  \textstyle \E  (\sum_{j=1}^{J}\varepsilon_{j}'Z_{j})
    (\sum_{j=1}^{J}Z_{j}'Z_{j})^{-2}(\sum_{j=1}^{J}Z_{j}' \varepsilon_{j}) 
  =  \textstyle \frac{1}{J}\tr ((\frac{1}{J}\sum_{j=1}^{J}Z_{j}'Z_{j})^{-2}( 
      \frac{1}{J}\sum_{j=1}^{J}Z_{j}'\Sigma_{j}Z_{j})).
\end{align*}
Again, note that
$(\frac{1}{J}\sum_{j=1}^{J}Z_{j}'Z_{j})^{-2} \to (\mu_{Z,2})^{-2}, $ and
\begin{equation*}
\textstyle   \limsup_{J \to \infty}  \left\lVert
  \frac{1}{J}\sum_{j=1}^{J}Z_{j}'\Sigma_{j}Z_{j} \right\rVert \leq \limsup_{J \to \infty}
  \frac{1}{J}\sum_{j=1}^{J}s_{1}(Z_{j})^{2}s_{1}(\Sigma_{j}) < \infty.
\end{equation*}
This shows that
$\frac{1}{J}\tr ((\frac{1}{J}\sum_{j=1}^{J}Z_{j}'Z_{j})^{-2}( 
\frac{1}{J}\sum_{j=1}^{J}Z_{j}'\Sigma_{j}Z_{j})) \to 0$, which concludes the proof
for \eqref{eq:cov_equiv_bound}.

To show
$\lim_{J \to \infty} \E  \textstyle \sup_{\Lambda} \lVert \frac{1}{J}
\sum_{j=1}^{J} Z_{j}'(\Lambda + \Sigma_{j})^{-1} \Sigma_{j}(y_{j}-\theta_{j} )
\rVert^{2} = 0,$ I follow the lines of argument given in the proof of Theorem
\ref{thm:URE_opt_m_g} (ii) carefully. The main difference is that now the summand is
multiplied by $Z_{j}'$. Write
$H_{Z,J}(\Lambda) := \lVert \frac{1}{J} \sum_{j=1}^{J} Z_{j}'(\Lambda +
\Sigma_{j})^{-1} \Sigma_{j}(y_{j}-\theta_{j} ) \rVert$. First, I show
$EH_{Z,J}(\Lambda)^{2} \to 0,$ which implies
$H_{Z,J}(\Lambda) \overset{p}{\to} 0$. Write
$\overline{\sigma}_{Z} := \sup_{j}s_{1}(Z_{j})$, and note that
$\sup_{j}s_{1}(Z_{j}Z_{j}') = \sup_{j}s_{1}(Z_{j}'Z_{j}) =
\overline{\sigma}_{Z}^{2}$. I have
\begin{align*}
  EH_{Z,J}(\Lambda)^{2} %
                        &=  \textstyle \frac{1}{J^{2}} \E   (\sum_{j=1}^{J} (y_{j}-\theta_{j}
                          )'\Sigma_{j}(\Lambda + \Sigma_{j})^{-1}Z_{j}Z_{j}' (\Lambda + \Sigma_{j})^{-1}
                          \Sigma_{j}(y_{j}-\theta_{j} )) \\
                        & \leq \textstyle \frac{1}{J^{2}}  \sum_{j=1}^{J} s_{1}(\Sigma_{j}(\Lambda + \Sigma_{j})^{-1}Z_{j}Z_{j}' (\Lambda + \Sigma_{j})^{-1}
                          \Sigma_{j}) \tr (\Sigma_{j}) \\
                        & \leq \textstyle \frac{1}{J^{2}}  \sum_{j=1}^{J} \overline{\sigma}^{2}_{Z} \tr (\Sigma_{j}),
\end{align*}
where the first inequality follows from von Neumann's trace inequality and the
last equality from the fact that the operator norm is submultiplicative and the
bound $s_{1}(\Sigma_{j}(\Lambda + \Sigma_{j})^{-2} \Sigma_{j}) \leq 1. $
Since $\sup_{j}\tr (\Sigma_{j})\leq T \sup_{j}s_{1}(\Sigma_{j}) < \infty$,
I conclude that
$\frac{1}{J^{2}} \sum_{j=1}^{J} \overline{\sigma}^{2}_{Z}\tr (\Sigma_{j}) \to
0$. This establishes that $H_{Z,J}(\Lambda)$ converges to zero in $L^{2}$, and
thus in probability.

To show that this
convergence is uniform over $\Lambda \in \mathcal{S}_{T}^{+}$, by a similar argument as in
the proof of Theorem \ref{thm:URE}, it suffices to show that
$\tilde{H}_{Z,J} := H_{Z,J}\circ h^{-1}$ satisfies a Lipschitz condition, i.e.,
\begin{equation}\label{eq:Lip_condn_cov}
  \lvert \tilde{H}_{Z,J}(\tilde{\Lambda}) -
  \tilde{H}_{Z,J}(\tilde{\Lambda}^{\dagger}) \rvert \leq B^{Z}_{H,J} \lVert
  \tilde{\Lambda} - \tilde{\Lambda}^{\dagger} \rVert
\end{equation}
for all $\tilde{\Lambda}, \tilde{\Lambda}^{\dagger}\in
\tilde{\mathcal{L}}$, where $B^{Z}_{H,J} = O_{p}(1)$. Define
$\tilde{A}_{j} = \tilde{\Lambda}^{-1} +( \Sigma_{j} -
\underline{s}_{\Sigma}I_{T})$ and $\tilde{A}_{j}^{\dagger}$ likewise with
$\tilde{\Lambda}$ replaced with $\tilde{\Lambda}^{\dagger}$. Observe
that by the triangle inequality (and its reverse),
\begin{align}
  \begin{aligned}
     \lvert \tilde{H}_{Z,J}(\tilde{\Lambda}) -
    \tilde{H}_{Z,J}(\tilde{\Lambda}^{\dagger}) \rvert%
    \textstyle \leq   \frac{1}{J}
    \sum_{j=1}^{J}s_{1}(Z_{j})s_{1}(\tilde{A}_{j}^{-1} -
    \tilde{A}_{j}^{\dagger -1} ) \left\lVert \Sigma_{j}(y_{j}-\theta_{j} )
    \right\rVert,\label{eq:Lips_lambdatil_cov}
  \end{aligned}
\end{align}
where the first inequality follows from the reverse triangle inequality and the
second by the triangle inequality and the definition of the operator norm.

In the proof of Theorem \ref{thm:URE_opt_m_g} (ii), I showed that
\begin{equation*}
  s_{1}(\tilde{A}_{j}^{-1} - \tilde{A}^{\dagger -1}_{j}) \leq \lVert
  \tilde{\Lambda} - \tilde{\Lambda}^{\dagger} \rVert.
\end{equation*}
Plugging this bound into
\eqref{eq:Lips_lambdatil_cov}, I obtain
\begin{equation*}
  \lvert \tilde{H}_{Z,J}(\tilde{\Lambda}) -
  \tilde{H}_{Z,J}(\tilde{\Lambda}^{\dagger}) \rvert \leq \textstyle (\frac{1}{J}\sum_{j=1}^{J} s_{1}(Z_{j}) \lVert \Sigma_{j}(y_{j}-\theta_{j} ) \rVert) \lVert
  \tilde{\Lambda} - \tilde{\Lambda}^{\dagger} \rVert.
\end{equation*}
Furthermore, I have $  \sum_{j=1}^{J} s_{1}(Z_{j}) \lVert \Sigma_{j}(y_{j}-\theta_{j} ) \rVert \leq \overline{\sigma}_{Z} \sum_{j=1}^{J}  \lVert \Sigma_{j}(y_{j}-\theta_{j} ) \rVert,$
and I have already shown $\frac{1}{J} \sum_{j=1}^{J} \lVert
\Sigma_{j}(y_{j}-\theta_{j} ) \rVert = O_{p}(1)$ in the proof of Theorem
\ref{thm:URE_opt_m_g} (ii). This establishes \eqref{eq:Lip_condn_cov}, and thus
$\sup_{\Lambda}H_{Z,J}(\Lambda) \overset{p}{\to} 0$.

Now, to show that $\sup_{\Lambda}H_{Z,J}(\Lambda)$ converges to zero in $L^{2}$,
it is enough to show that $\left\{ \sup_{\Lambda}H_{Z,J}(\Lambda)^{2}  \right\}_{J\leq
  1}$ is uniformly integrable. A sufficient condition for this is
\begin{equation*}
  \textstyle \sup_{J} \E \sup_{\Lambda}H_{Z,J}(\Lambda)^{2+\delta} < \infty,
\end{equation*}
for some $\delta > 0$. An upper bound of $H_{Z,J}(\Lambda)$ is given by
\begin{align*}
  H_{Z,J}(\Lambda) %
  \leq  \textstyle  \frac{1}{J} \sum_{j=1}^{J}s_{1}(Z_{j}) s_{1}((\Lambda + \Sigma_{j})^{-1}
         \Sigma_{j}))\lVert y_{j}-\theta_{j} \rVert 
  \leq  \textstyle \overline{\sigma}_{Z} \frac{1}{J} \sum_{j=1}^{J}\lVert y_{j}-\theta_{j} \rVert,
\end{align*}
where the first inequality follows from the triangle inequality and the
definition of the operator norm, and the second inequality follows because
$ s_{1}((\Lambda + \Sigma_{j})^{-1} \Sigma_{j}) \leq 1.$ Therefore, following \eqref{eq:unif_int_center}, I have
\begin{align*}
\textstyle    \sup_{\Lambda} H_{Z,J}(\Lambda)^{2+\delta} \leq \overline{\sigma}_{Z}^{2+\delta}\frac{1}{J} \sum_{j=1}^{J}2^{1+\delta}(\lVert y_{j}\rVert^{2+\delta} + \lVert\theta_{j}
  \rVert^{2+\delta}),
\end{align*}
Taking expectations, I obtain
\begin{equation*}
  \textstyle \limsup_{J}  \E  \sup_{\Lambda} H_{J}(\Lambda)^{2+\delta} < \infty,
\end{equation*}
for any $\delta \in [0, 2]$, and thus $\sup_{J}  \E  \sup_{\Lambda} H_{Z,J}(\Lambda)^{2+\delta} < \infty$. This
concludes the proof for $\sup_{\Lambda} H_{Z,J}(\Lambda) \overset{L^{2}}{\to} 0$.

\subsection{Proof of Theorem \ref{thm:forecasting}}
\label{appsec:proof-theorem-forecasting}

I first give details on the derivation of the UPE. Note that
\begin{align*}
  & \E (B(\Lambda, \Sigma_{j,-T})'y_{j,-T} - \theta_{jT})^{2} \\
  = & \E (B(\Lambda, \Sigma_{j,-T})'y_{j,-T} -y_{jT})^{2} + \E (y_{jT} -
      \theta_{jT})^{2}  - 2 \E [(y_{jT} -B(\Lambda, \Sigma_{j,-T})'y_{j,-T}  )(y_{jT} - \theta_{jT})].
\end{align*}
The cross term can be simplified to $\Sigma_{j,T} - B(\Lambda, \Sigma_{j,-T})'\Sigma_{j,T,-T}$.
Hence, it follows that
\begin{align*}
   \E (B(\Lambda, \Sigma_{j,-T})'y_{j,-T} - \theta_{jT})^{2} = \E (B(\Lambda, \Sigma_{j,-T})'y_{j,-T}
      -y_{jT})^{2}  - \Sigma_{jT} + 2B(\Lambda, \Sigma_{j,-T})'\Sigma_{j,T,-T},
\end{align*}
which shows the UPE is indeed unbiased.

Now, I prove Theorem \ref{thm:forecasting}. Consider the following bound,
\begin{align}\label{eq:decomp_ure_loss_fc}
  \begin{aligned} 
    & \textstyle \left\lvert \frac{1}{J} \sum_{j=1}^{J}\left( (B(\Lambda,
        \Sigma_{j,-T})'y_{j,-T} -y_{jT})^{2} -\Sigma_{j,T} \right) - \frac{1}{J}
      \sum_{j=1}^{J}(B(\Lambda,
      \Sigma_{j,-1})'y_{j,-1} - \theta_{j,T+1})^{2} \right\rvert\\
    \leq & \textstyle \left\lvert \frac{1}{J} \sum_{j=1}^{J}\left( (B(\Lambda,
        \Sigma_{j,-T})'y_{j,-T} -y_{jT})^{2} -\Sigma_{j,T} \right) - \frac{1}{J}
      \sum_{j=1}^{J}(B(\Lambda,
      \Sigma_{j,-T})'y_{j,-T} - \theta_{j,T})^{2} \right\rvert \\
    &  + \textstyle \left\lvert \frac{1}{J} \sum_{j=1}^{J}(B(\Lambda,
      \Sigma_{j,-T})'y_{j,-T} - \theta_{j,T})^{2} - \frac{1}{J}
      \sum_{j=1}^{J}(B(\Lambda,\Sigma_{j,-1})'y_{j,-1} - \theta_{j,T+1})^{2}
    \right\rvert,
  \end{aligned}
\end{align}
which is by the triangle inequality.

Further algebra gives
\begin{align*}
  &  \textstyle \frac{1}{J} \sum_{j=1}^{J}(B(\Lambda,\Sigma_{j,-T})'y_{j,-T} -
    \theta_{j,T})^{2} \\
  = &  \textstyle \frac{1}{J} \sum_{j=1}^{J} ((B(\Lambda,\Sigma_{j,-T})'y_{j,-T} -
      y_{j,T})^{2} + (y_{jT} - \theta_{j,T})^{2})  \\
  &  \textstyle - 2 \frac{1}{J} \sum_{j=1}^{J} (y_{j,T} - B(\Lambda,\Sigma_{j,-T})'y_{j,-T})(y_{jT} - \theta_{j,T}).
\end{align*}
The cross term can be decomposed as
\begin{align*}
  & \textstyle \frac{1}{J} \sum_{j=1}^{J} (y_{j,T} - B(\Lambda,\Sigma_{j,-T})'y_{j,-T})(y_{jT} - \theta_{j,T}) \\
  = & \textstyle \frac{1}{J} \sum_{j=1}^{J} (y_{jT} -\theta_{jT})^{2} - \frac{1}{J} \sum_{j=1}^{J}B(\Lambda,\Sigma_{j,-T})'(y_{j,-T} - \theta_{j,-T})(y_{jT} -
      \theta_{jT}) \\
  & + \textstyle \frac{1}{J} \sum_{j=1}^{J}  (\theta_{jT} -   B(\Lambda,\Sigma_{j,-T})'\theta_{j,-T})(y_{jT} -
    \theta_{jT}).      
\end{align*}
Plugging this into the first term of right-hand side in
\eqref{eq:decomp_ure_loss_fc}, it follows that
\begin{align*}
  & \textstyle \left\lvert \frac{1}{J} \sum_{j=1}^{J}\left( (B(\Lambda,
    \Sigma_{j,-T})'y_{j,-T} -y_{jT})^{2} -\Sigma_{j,T} \right) - \frac{1}{J}
    \sum_{j=1}^{J}(B(\Lambda,
    \Sigma_{j,-T})'y_{j,-T} - \theta_{j,T})^{2} \right\rvert \\
  \leq & \textstyle \left\lvert \frac{1}{J} \sum_{j=1}^{J}\left((y_{jT}
         -\theta_{jT})^{2}  -\Sigma_{j,T} \right) \right\rvert %
    \textstyle + \left\lvert \frac{2}{J} \sum_{j=1}^{J}B(\Lambda,\Sigma_{j,-T})'((y_{j,-T} - \theta_{j,-T})(y_{jT} -
    \theta_{jT}) -\Sigma_{j,T,-T}) \right\rvert \\
  &\textstyle + \left\lvert  \frac{2}{J} \sum_{j=1}^{J}  \theta_{jT}(y_{jT} -
    \theta_{jT}) \right\rvert + \left\lvert  \frac{2}{J} \sum_{j=1}^{J}   B(\Lambda,\Sigma_{j,-T})'\theta_{j,-T}(y_{jT} -
    \theta_{jT}) \right\rvert \\
  := & \mathrm{(I)}_{J} + \mathrm{(II)}_{J} + \mathrm{(III)}_{J} + \mathrm{(IV)}_{J}.
\end{align*}

The aim is to show that each of the four terms in the last line converges to $0$
in $L^{1}$, uniformly over $\Lambda \in \mathcal{L}$. In fact, I show uniformity
over
$ (\Lambda_{T,-T}, \Lambda_{-T}) \in
\overline{\mathcal{L}}:=\overline{\mathcal{L}}_{T,-T} \times
\overline{\mathcal{L}}_{-T},$ where
\begin{align*}
  &\overline{\mathcal{L}}_{T,-T} = \{\Lambda_{T,-T} \in \mathbf{R}^{T-1}: \lVert
    \Lambda_{T,-T} \rVert \leq K_{T,-T}\}, \text{ and} \\
  &  \overline{\mathcal{L}}_{-T} = \{\Lambda_{-T} \in S_{T-1}^{+}: \lVert
    \Lambda_{-T} \rVert \leq K_{-T}\}.
\end{align*}
Here, $K_{T,-T}$ and $K_{-T}$ are positive numbers large enough so that
$\left\{ \Lambda_{T,-T}: \Lambda \in \mathcal{L} \right\} \subset
\overline{\mathcal{L}}_{T,-T}$ and
$\left\{ \Lambda_{-T}: \Lambda \in \mathcal{L} \right\} \subset
\overline{\mathcal{L}}_{-T}$, which exist due to the fact that $\mathcal{L}$ is
bounded. Note that $\Lambda \in \mathcal{L}$ implies
$(\Lambda_{T,-T}, \Lambda_{T}) \in \overline{\mathcal{L}}$, and thus
establishing convergence uniformly over the latter is sufficient. Note that
\begin{equation}
    \sup_{(\Lambda_{T,-T},\Lambda_{T}) \in \overline{\mathcal{L}}} %
    \leq  K_{T,-T} s_{T}^{-1}(\Sigma_{j}),\label{eq:bound_BLam}
\end{equation}
where the inequality follows because the relationship between eigenvalues of a
matrix and the eigenvalues of its principal submatrices (see, for example,
Theorem 4.3.15 of \cite{horn1990MatrixAnalysis}). In some of the derivations
later on, it is useful to make clear that $B(\Lambda,\Sigma_{j,-T})$ depends on
$\Lambda$ only through $(\Lambda_{T,-T}, \Lambda_{-T})$. When this fact has be
highlighted, I write
$B(\Lambda_{T,-T}, \Lambda_{-T},\Sigma_{j,-T}) := B(\Lambda,\Sigma_{j,-T}).$
Now, I condition all random quantities on a sequence
$\{((\theta_{j}',\theta_{j,T+1})', \Sigma_{j})\}_{j=1}^{\infty}$. Note that by
Assumption \ref{assum:bounded_rand}, now I can assume that Assumption
\ref{assum:bounded} holds.

To show that $\mathrm{(I)}_{J}$ converges to zero in $L^{2}$, and thus in
$L^{1}$, note that
\begin{equation*}
\textstyle   \E   \left\lvert \frac{1}{J} \sum_{j=1}^{J}\left((y_{jT}
    -\theta_{jT})^{2}  -\Sigma_{j,T} \right) \right\rvert^{2} %
  \leq   \frac{1}{J^{2}} \sum_{j=1}^{J}8(\E y^{4}_{jT}
         +\theta_{jT}^{4}).
\end{equation*}
The summand in the last line is uniformly bounded over $j,$ which establishes
the convergence.

Similarly, $\mathrm{(III)}_{J}\overset{L^{2}}{\to} 0$ can be easily shown by
noting that
\begin{equation*}
\textstyle  \E \left\lvert  \frac{2}{J} \sum_{j=1}^{J}  \theta_{jT}(y_{jT} -
    \theta_{jT}) \right\rvert^{2}  \leq  \frac{4}{J^{2}} \sum_{j=1}^{J}\theta_{jT}^{2}\Sigma_{jT},
\end{equation*}
and the summand of the right-hand side is bounded uniformly over $j.$

To show that
$\sup_{\overline{\mathcal{L}}}\mathrm{(II)}_{J} \overset{L^{1}}{\to} 0$ and
$\sup_{\overline{\mathcal{L}}}\mathrm{(IV)}_{J} \overset{L^{1}}{\to} 0 $, I again
use a result by \cite{andrews1992GenericUniformConvergence}, which will
establish convergence in probability, and then show a uniform integrability
condition to show that convergence holds in $L^{1}$ as well. Here, I write
$\sup_{\overline{\mathcal{L}}}$ as a shorthand for
$\sup_{(\Lambda_{T,-T}, \Lambda_{-T}) \in \overline{\mathcal{L}}}$. I start with
$ \mathrm{(II)}_{J}$. For pointwise convergence (in $L^{2}$), note that
\begin{align*}
  & \textstyle   \E  \left\lvert \frac{2}{J} \sum_{j=1}^{J}B(\Lambda,\Sigma_{j,-T})'((y_{j,-T} - \theta_{j,-T})(y_{jT} -
    \theta_{jT}) -\Sigma_{j,T,-T}) \right\rvert^{2} \\
  \leq &   \textstyle  \frac{4}{J^{2}} \sum_{j=1}^{J}K_{T,-T}^{2} s_{T}^{-2}(\Sigma_{j})\tr (\var ((y_{j,-T} -
         \theta_{j,-T})(y_{jT} -\theta_{jT}))),
\end{align*}
where the inequality follows from von Neumann's trace inequality and the
fact that $s_{1}(xx')= s_{1}(x'x) = \lVert x \rVert^{2}$ for any
$x \in \mathrm{R}^{T-1}$, and the last inequality from
\eqref{eq:bound_BLam}. Moreover, by Cauchy-Schwarz
\begin{equation*}
   \tr (\var ((y_{j,-T} -
    \theta_{j,-T})(y_{jT} -\theta_{jT}))) 
  \leq   \textstyle
          \sum_{t=1}^{T-1}(\E (y_{jt}-\theta_{jt})^{4}(y_{jT}-\theta_{jT})^{4})^{1/2},
\end{equation*}
The term on the right hand side is bounded uniformly over $j$, which establishes
$\mathrm{(II)_{J}}\overset{L^{2}}{\to} 0.$ It remains to establish a Lipschitz
condition. Define
\begin{equation*}
    G_{J}(\Lambda_{T,-T}, \Lambda_{-T}) = \textstyle \frac{2}{J}  \sum_{j=1}^{J}B(\Lambda,\Sigma_{j,-T})'((y_{j,-T} - \theta_{j,-T})(y_{jT} -
  \theta_{jT}) -\Sigma_{j,T,-T}).
\end{equation*}
I show that $G_{J}(\Lambda_{T,-T}, \Lambda_{-T})$ is Lipschitz in
$\Lambda_{T,-T}$ and $\Lambda_{-T}$, respectively, with Lipschitz constants
bounded in probability that do not depend on the other parameter held fixed,
which will establish that $G_{J}(\Lambda_{T,-T}, \Lambda_{-T})$ is Lipschitz
with respect to $(\Lambda_{T,-T}, \Lambda_{-T})$.  Note that, for any
$\Lambda_{T,-T}, \tilde{\Lambda}_{T,-T} \in \overline{\mathcal{L}}_{T,-T}$,
\begin{align}
  \begin{aligned}\label{eq:B_Lip_T-T}
    \lVert B(\Lambda_{T,-T},\Lambda_{-T},\Sigma_{j,-T}) -
    B(\tilde{\Lambda}_{T,-T},\Lambda_{-T},\Sigma_{j,-T}) \rVert %
    \leq  \underline{s}_{\Sigma}^{-1} \lVert \Lambda_{T,-T} -
    \tilde{\Lambda}_{T,-T} \rVert.
  \end{aligned}
\end{align}
Also, for any
$\Lambda_{-T}, \tilde{\Lambda}_{-T} \in \overline{\mathcal{L}}_{-T}$, I
have
\begin{align*}
  \begin{aligned}
     & \lVert B(\Lambda_{T,-T},\Lambda_{-T},\Sigma_{j,-T}) -
    B({\Lambda}_{T,-T},\tilde{\Lambda}_{-T},\Sigma_{j,-T}) \rVert  \\
    \leq& K_{T,-T} s_{1}((\Sigma_{j,-T} + \Lambda_{-T})^{-1} - (\Sigma_{j,-T} +
    \tilde{\Lambda}_{-T})^{-1})
  \end{aligned}
\end{align*}
To derive a bound for
$s_{1}((\Sigma_{j,-T} + \Lambda_{-T})^{-1} - (\Sigma_{j,-T} +
\tilde{\Lambda}_{-T})^{-1})$, note that
\begin{align*}
  (\Sigma_{j,-T} + \Lambda_{-T})^{-1} - (\Sigma_{j,-T} +
    \tilde{\Lambda}_{-T})^{-1} 
  \leq (\Sigma_{j,-T} +
      \tilde{\Lambda}_{-T})^{-1} (\tilde{\Lambda}_{-T}- \Lambda_{-T})(\Sigma_{j,-T} +
      \Lambda_{-T})^{-1}.
\end{align*}
This implies
$ s_{1}((\Sigma_{j,-T} + \Lambda_{-T})^{-1} - (\Sigma_{j,-T} +
\tilde{\Lambda}_{-T})^{-1}) \leq \underline{s}_{\Sigma}^{-2} \lVert
\Lambda_{-T} - \tilde{\Lambda}_{-T} \rVert,$ which in turn implies the
following Lipschitz condition,
\begin{equation} \label{eq:B_Lip_-T} \lVert B(\Lambda_{T,-T},\Lambda_{-T},
  \Sigma_{j,-T}) - B({\Lambda}_{T,-T},\tilde{\Lambda}_{-T},\Sigma_{j,-T}) \rVert \leq
  \underline{s}_{\Sigma}^{-2} \lVert \Lambda_{-T} - \tilde{\Lambda}_{-T}
  \rVert.
\end{equation}
Now, combining \eqref{eq:B_Lip_T-T} and \eqref{eq:B_Lip_-T}, we have for any
$(\Lambda_{T,-T}, \Lambda_{-T}), (\tilde{\Lambda}_{T,-T},
\tilde{\Lambda}_{-T}) \in \overline{\mathcal{L}}$,
\begin{align}
  \label{eq:B_Lip}
  \begin{aligned}
    & \lVert B(\Lambda_{T,-T},\Lambda_{-T},\Sigma_{j,-T}) -
    B(\tilde{\Lambda}_{T,-T},\tilde{\Lambda}_{-T},\Sigma_{j,-T}) \rVert  \\
    \leq & (\underline{s}_{\Sigma}^{-1} \vee \underline{s}_{\Sigma}^{-2})
    (\lVert \Lambda_{T,-T} - \tilde{\Lambda}_{T,-T} \rVert + \lVert
    \Lambda_{-T} - \tilde{\Lambda}_{-T} \rVert).
  \end{aligned}
\end{align}
Because
$\lVert (\Lambda_{T,-T}, \Lambda_{-T}) \rVert := \lVert \Lambda_{T,-T} \rVert +
\lVert \Lambda_{-T} \rVert $ defines a norm on the product space
$\overline{\mathcal{L}}$, this shows that $B(\cdot,\cdot,\Sigma_{j,-T})$ is Lipshitz
on $\overline{\mathcal{L}}$.

It follows that
\begin{align*}
  & \lvert G_{J}(\Lambda_{T,-T}, \Lambda_{-T}) -
    G_{J}(\tilde{\Lambda}_{T,-T}, \tilde{\Lambda}_{-T}) \rvert \\
  \leq &\textstyle  ((\underline{s}_{\Sigma}^{-1} \vee \underline{s}_{\Sigma}^{-2})\frac{2}{J} \sum_{j=1}^{J} \lVert(y_{j,-T} -
         \theta_{j,-T})(y_{jT} - \theta_{jT}) -\Sigma_{j,T,-T} \rVert) \lVert
         (\Lambda_{T,-T}, \Lambda_{-T}) - (\tilde{\Lambda}_{T,-T}, \tilde{\Lambda}_{-T})\rVert.
\end{align*}
Hence, now it suffices to show
\begin{equation}\label{eq:B_Lip_con_bounded}
   \textstyle  \frac{1}{J} \sum_{j=1}^{J} \lVert(y_{j,-T} -
  \theta_{j,-T})(y_{jT} - \theta_{jT}) -\Sigma_{j,T,-T} \rVert = O_{p}(1).
\end{equation}
A bound for the left-hand side is given by
\begin{align*}
    & \textstyle \frac{1}{J} \sum_{j=1}^{J} (\lVert(y_{j,-T} -
      \theta_{j,-T})(y_{jT} - \theta_{jT}) \rVert - \E \lVert(y_{j,-T} -
      \theta_{j,-T})(y_{jT} - \theta_{jT}) \rVert) \\
   & \textstyle + \frac{1}{J}\sum_{j=1}^{J}( \E  \lVert(y_{j,-T} -
    \theta_{j,-T})(y_{jT} - \theta_{jT}) \rVert +  \lVert\Sigma_{j,T,-T} \rVert) 
  =: \mathrm{(A)}_{J} + \mathrm{(B)}_{J}.
\end{align*}
I show that $\mathrm{(A)}_{J} = o_{p}(1)$ and $\mathrm{(B)}_{J}= O(1)$, from
which \eqref{eq:B_Lip_con_bounded} will follow. 

To show $\mathrm{(A)}_{J} \overset{p}{\to} 0$, it suffices to show that the
variance of the summand is bounded over $j$, since then it converges to
zero in $L^{2}$. Observe that
\begin{align}
  \begin{aligned}
    & \var (\lVert(y_{j,-T} -  \theta_{j,-T})(y_{jT} - \theta_{jT}) \rVert) \\
    \leq &  4 \textstyle \sum_{t=1}^{T-1} ((\E \lvert y_{jT} \rvert^{4}\E \lvert
    y_{j,t} \rvert^{4})^{1/2} + \lvert \theta_{jT} \rvert^{2}\E \lvert y_{j,t}
    \rvert^{2} + \lvert \theta_{j,t} \rvert^{2}\E \lvert y_{jT} \rvert^{2} +
    \lvert \theta_{jT} \rvert^{2}\lvert \theta_{j,t} \rvert^{2}),\label{eq:11}
  \end{aligned}
\end{align}
where the inequality follows by Cauchy-Schwarz. The expression in the last line
is bounded uniformly over $j$, and thus the variance term is as well. By
Jensen's inequality, this also establishes
$ \limsup_{J \to \infty}\mathrm{(B)}_{J} < \infty$. This concludes the proof for
$ \sup_{\overline{\mathcal{L}}}\lvert G_{J}(\Lambda_{T,-T}, \Lambda_{-T}) \rvert
\overset{p}{\to} 0.$

Now, I show that the convergence is in fact in $L^{1}$ by establishing uniform
integrability of
$\sup_{\overline{\mathcal{L}}}\lvert G_{J}(\Lambda_{T,-T}, \Lambda_{-T})
\rvert$. To this end, I verify a sufficient condition,
\begin{equation}\label{eq:suff_condn_UI_B_GJ}
\textstyle   \sup_{j}
  \E  \left( \sup_{\overline{\mathcal{L}}}\lvert G_{J}(\Lambda_{T,-T},
    \Lambda_{-T}) \rvert  \right)^{2}
  < \infty.
\end{equation}
First, I bound $\lvert G_{J}(\Lambda_{T,-T}, \Lambda_{-T}) \rvert$. By the
triangle inequality, Cauchy-Schwarz and \eqref{eq:bound_BLam}, we have
\begin{align*}
   \lvert G_{J}(\Lambda_{T,-T}, \Lambda_{-T}) \rvert%
  \leq  \textstyle K_{T,-T} \underline{s}^{-1}_{M} \frac{2}{J} \sum_{j=1}^{J}\lVert (y_{j,-T} - \theta_{j,-T})(y_{jT} -
         \theta_{jT}) -\Sigma_{j,T,-T}) \rVert,
\end{align*}
Hence, it follows that
\begin{equation*}
   \textstyle \E  \sup_{\overline{\mathcal{L}}}\lvert G_{J}(\Lambda_{T,-T}, \Lambda_{-T})
    \rvert^{2} 
  \leq \textstyle K_{T,-T}^{2} \underline{s}^{-2}_{M}  \frac{8}{J} \sum_{j=1}^{J}(\E \lVert (y_{j,-T} - \theta_{j,-T})(y_{jT} -
         \theta_{jT})\rVert^{2} + \lVert\Sigma_{j,T,-T} \rVert^{2}),
\end{equation*}
Since I have shown that the summand in the last line is bounded over $j$ in
\eqref{eq:11}, we have \eqref{eq:suff_condn_UI_B_GJ}. We conclude that
$ \sup_{\overline{\mathcal{L}}}\lvert G_{J}(\Lambda_{T,-T}, \Lambda_{-T}) \rvert
\overset{L^{1}}{\to} 0.$

I follow these same steps for $\mathrm{(IV)}_{J}$. First, define
\begin{equation*}
  H_{J}(\Lambda_{T,-T},\Lambda_{-T},\Sigma_{j,-T}) := \textstyle \frac{2}{J} \sum_{j=1}^{J}   B(\Lambda,\Sigma_{j,-T})'\theta_{j,-T}(y_{jT} -
  \theta_{jT}).
\end{equation*}
For pointwise convergence, note
that by Cauchy-Schwarz and \eqref{eq:bound_BLam}, we have
\begin{align*}
  \var (B(\Lambda,\Sigma_{j,-T})'\theta_{j,-T}y_{jT}) %
  \leq & K_{T,-T}^{2} \underline{s}^{-2}_{M} \lVert \theta_{j,-T} \rVert^{2}
         \Sigma_{jT},
\end{align*}
 The right hand side is bounded over $j$,
and thus $H_{J}(\Lambda_{T,-T},\Lambda_{-T},\Sigma_{j,-T}) \overset{L^{2}}{\to} 0.$

Now, I show that $H_{J}(\Lambda_{T,-T},\Lambda_{-T},\Sigma_{j,-T})$
satisfies a Lipschitz condition. I have
\begin{align*}
  & \lvert H_{J}(\Lambda_{T,-T}, \Lambda_{-T}) -
    H_{J}(\tilde{\Lambda}_{T,-T}, \tilde{\Lambda}_{-T}) \rvert \\
  \leq & \textstyle \frac{2}{J} \sum_{j=1}^{J}\lVert B(\Lambda,\Sigma_{j,-T}) -
         B(\tilde{\Lambda},\Sigma_{j,-T}) \rVert \lVert \theta_{j,-T} \rVert \lvert y_{jT} - \theta_{jT} \rvert \\
  \leq & \textstyle ((\underline{s}_{\Sigma}^{-1} \vee \underline{s}_{\Sigma}^{-2})\frac{2}{J} \sum_{j=1}^{J} \lVert \theta_{j,-T} \rVert \lvert y_{jT} - \theta_{jT} \rvert ,
\end{align*}
where the second inequality is by Cauchy-Schwarz and the third inequality
follows from \eqref{eq:B_Lip}. The fact that
$\frac{2}{J} \sum_{j=1}^{J} \lVert \theta_{j,-T} \rVert \lvert y_{jT} -
\theta_{jT} \rvert = O_{p}(1)$ follows from similar, but simpler, steps we have
taken to show \eqref{eq:B_Lip_con_bounded}. This implies
$\sup_{\mathcal{L}}\lvert H_{J}(\Lambda_{T,-T}, \Lambda_{-T}) \rvert
\overset{p}{\to} 0$. Again, following the same arguments we have used to show
\eqref{eq:suff_condn_UI_B_GJ}, we can easily show that
$\sup_{\mathcal{L}}\lvert H_{J}(\Lambda_{T,-T}, \Lambda_{-T}) \rvert$ is
uniformly integrable, from which it follows that
$\sup_{\mathcal{L}}\lvert H_{J}(\Lambda_{T,-T}, \Lambda_{-T}) \rvert
\overset{L^{1}}{\to} 0$. This concludes the proof for the first term of the
right-hand side of \eqref{eq:decomp_ure_loss_fc} converging to zero in $L^{1}$.

For the second term of the
right-hand side of \eqref{eq:decomp_ure_loss_fc}, note that
\begin{align*}
  &  (B(\Lambda,\Sigma_{j,-T})'y_{j,-T} - \theta_{j,T})^{2}  \\
  = & (B(\Lambda,\Sigma_{j,-T})'y_{j,-T} - B(\Lambda,\Sigma_{j,-T})'\theta_{j,-T})^{2} 
      +(B(\Lambda,\Sigma_{j,-T})'\theta_{j,-T}  - \theta_{j,T})^{2} \\
  &+ 2 (B(\Lambda,\Sigma_{j,-T})'y_{j,-T} - B(\Lambda,\Sigma_{j,-T})'\theta_{j,-T})(B(\Lambda,\Sigma_{j,-T})'\theta_{j,-T}  - \theta_{j,T}).
\end{align*}
Furthermore, I have
\begin{align*}
   \E  (B(\Lambda,\Sigma_{j,-T})'y_{j,-T} - B(\Lambda,\Sigma_{j,-T})'\theta_{j,-T})^{2} %
  = B(\Lambda,\Sigma_{j,-T})'\Sigma_{j,-T}B(\Lambda,\Sigma_{j,-T}).
\end{align*}
Hence, it follows that
\begin{align*}
  &\textstyle  \left\lvert \frac{1}{J} \sum_{j=1}^{J}(B(\Lambda,
    \Sigma_{j,-T})'y_{j,-T} - \theta_{j,T})^{2} - \frac{1}{J}
    \sum_{j=1}^{J}(B(\Lambda,\Sigma_{j,-1})'y_{j,-1} - \theta_{j,T+1})^{2}
    \right\rvert \\
  \leq & \textstyle \left\lvert \frac{1}{J}\sum_{j=1}^{J}\left(  (B(\Lambda,\Sigma_{j,-T})'(y_{j,-T} -
         \theta_{j,-T}))^{2} - B(\Lambda,\Sigma_{j,-T})'\Sigma_{j,-T}B(\Lambda,\Sigma_{j,-T}) \right) \right\rvert \\
  & \textstyle+ 2 \left\lvert \frac{1}{J} \sum_{j=1}^{J} 
    (B(\Lambda,\Sigma_{j,-T})'(y_{j,-T} - \theta_{j,-T}))(B(\Lambda,\Sigma_{j,-T})'\theta_{j,-T}  - \theta_{j,T})
    \right\rvert \\ 
  &  \textstyle + \left\lvert \frac{1}{J}\sum_{j=1}^{J} \left( (B(\Lambda,\Sigma_{j,-1})'(y_{j,-1} -
    \theta_{j,-1}))^{2} - B(\Lambda,\Sigma_{j,-1})'\Sigma_{j,-1}B(\Lambda,\Sigma_{j,-1}) \right) \right\rvert \\
  &\textstyle +2 \left\lvert \frac{1}{J} \sum_{j=1}^{J} 
    (B(\Lambda,\Sigma_{j,-1})'(y_{j,-1} - \theta_{j,-1}))(B(\Lambda,\Sigma_{j,-1})'\theta_{j,-1}  -
    \theta_{j,-1})  \right\rvert \\
  & \textstyle+ \left\lvert \frac{1}{J}\sum_{j=1}^{J}\left( B(\Lambda,\Sigma_{j,-T})'\Sigma_{j,-T}B(\Lambda,
    \Sigma_{j,-T}) - B(\Lambda,\Sigma_{j,-1})'\Sigma_{j,-1}B(\Lambda,\Sigma_{j,-1}) \right) \right\rvert\\
  & \textstyle + \left\lvert \frac{1}{J}\sum_{j=1}^{J}\left( (B(\Lambda,
    \Sigma_{j,-T})'\theta_{j,-T}  - \theta_{j,T})^{2} - (B(\Lambda,
    \Sigma_{j,-1})'\theta_{j,-1}  - \theta_{j,T+1})^{2} \right)  \right\rvert \\
  =& \mathrm{(I)}_{J} + \mathrm{(II)}_{J} +\mathrm{(III)}_{J} + \mathrm{(IV)}_{J} +
     \mathrm{(V)}_{J} + \mathrm{(VI)}_{J}.
\end{align*}

I show that each of the six terms converges to zero uniformly over $\mathcal{L}$
in the $L^{1}$ sense. The proof for the first four terms are extremely
similar. Hence, I provide a proof for only $\mathrm{(I)_{J}}$, and a sketch for
the other three terms. Note that the terms $\mathrm{(V)}_{J}$ and
$\mathrm{(VI)}_{J}$ are nonrandom. %

Note that the summand in $\mathrm{(I)}_{J}$ can be written as
\begin{equation*}
  \tr  (B(\Lambda,\Sigma_{j,-T})B(\Lambda,\Sigma_{j,-T})'((y_{j,-T} -
  \theta_{j,-T})(y_{j,-T} -
  \theta_{j,-T})' -\Sigma_{j,-T} )),
\end{equation*}
which has mean zero. Hence, if the expectation of the square of this term is
bounded over $j$, then $\mathrm{(I)}_{J} \overset{L^{2}}{\to} 0$. We have
\begin{align*}
  &\lvert  \tr  (B(\Lambda,\Sigma_{j,-T})B(\Lambda,\Sigma_{j,-T})'((y_{j,-T} -
    \theta_{j,-T})(y_{j,-T} -
    \theta_{j,-T})' -\Sigma_{j,-T} )) \rvert \\
  \leq & \lvert  \tr  (B(\Lambda,\Sigma_{j,-T})B(\Lambda,\Sigma_{j,-T})'(y_{j,-T} -
         \theta_{j,-T})(y_{j,-T} -
         \theta_{j,-T})') \rvert \\
  & + \lvert \tr  (B(\Lambda,\Sigma_{j,-T})B(\Lambda,\Sigma_{j,-T})'\Sigma_{j,-T} )
    \rvert \\
  \leq & \lVert B(\Lambda,\Sigma_{j,-T}) \rVert^{2}( \lVert y_{j,-T} - \theta_{j,-T}
         \rVert^{2} + \tr (\Sigma_{j,-T})),
\end{align*}
where the last inequality follows from von Neumann's trace inequality and the
equivalence between the largest singular value of the outer product of a vector
and its squared $L^{2}$ norm. It follows that
\begin{align*}
  & \E  \, \tr  (B(\Lambda,\Sigma_{j,-T})B(\Lambda,\Sigma_{j,-T})'((y_{j,-T} -
    \theta_{j,-T})(y_{j,-T} -
    \theta_{j,-T})' -\Sigma_{j,-T} ))^{2} \\
  \leq & \E  \, \lVert B(\Lambda,\Sigma_{j,-T}) \rVert^{4}( \lVert y_{j,-T} - \theta_{j,-T}
         \rVert^{2} + \tr (\Sigma_{j,-T}))^{2} \\
  \leq & \E  \, K_{T,-T}^{4}\underline{s}_{\Sigma}^{-4}( 8\lVert y_{j,-T}\rVert^{4} + 8\lVert\theta_{j,-T}
         \rVert^{4} + \tr (\Sigma_{j,-T})^{2} + 4(\lVert y_{j,-T}\rVert^{2} + \lVert\theta_{j,-T} \rVert^{2})\tr (\Sigma_{j,-T})),
\end{align*}
where the term in the last line is bounded over $j$. This shows that $\mathrm{(I)}_{J} \overset{L^{2}}{\to}
0$.

Now, to obtain a uniform convergence result, write
\begin{align*}
  & G_{I,J}(\Lambda_{T,-T}, \Lambda_{-T}) \\
  = & \textstyle \frac{1}{J} \sum_{j=1}^{J} \tr  (B(\Lambda,\Sigma_{j,-T})B(\Lambda,\Sigma_{j,-T})'((y_{j,-T} -
      \theta_{j,-T})(y_{j,-T} -
      \theta_{j,-T})' -\Sigma_{j,-T} )).
\end{align*}

For any two $x,\tilde{x} \in \mathrm{R}^{T-1}$, we have
\begin{equation*}
  \lVert xx' - \tilde{x}\tilde{x}' \rVert \leq \lVert x - \tilde{x}
  \rVert ( \lVert x \rVert +
  \lVert \tilde{x} \rVert),
\end{equation*}
where the inequality holds by adding and subtracting $x\tilde{x}'$, applying
the triangle inequality, and then Cauchy-Schwarz. This, combined with
\eqref{eq:bound_BLam} and \eqref{eq:B_Lip}, gives
\begin{align}
  \begin{aligned}\label{eq:B2_Lip}
    & \lVert  B(\Lambda,\Sigma_{j,-T})B(\Lambda,\Sigma_{j,-T})' - B(\tilde{\Lambda},
    \Sigma_{j,-T})B(\tilde{\Lambda},\Sigma_{j,-T})' \rVert \\
    \leq & 2 \underline{s}_{\Sigma}^{-1}K_{T,-T} \lVert B(\Lambda,\Sigma_{j,-T}) -
    B(\tilde{\Lambda},\Sigma_{j,-T}) \rVert \\
    \leq & 2 \underline{s}_{\Sigma}^{-1}(\underline{s}_{\Sigma}^{-1} \vee
    \underline{s}_{\Sigma}^{-2}) K_{T,-T}
    \lVert (\Lambda_{T,-T}, \Lambda_{-T}) -  (\tilde{\Lambda}_{T,-T}, \tilde{\Lambda}_{-T}) \rVert,
  \end{aligned}
\end{align}
which shows that $B(\Lambda,\Sigma_{j,-T})B(\Lambda,\Sigma_{j,-T})$ is Lipschitz. This
will translate into a Lipschitz condition on $G_{I,J}(\Lambda_{T,-T},
\Lambda_{-T}).$ For simplicity, I write
\begin{equation*}
  B^{2}(\Lambda,\Sigma_{j,-T}) = B(\Lambda,\Sigma_{j,-T})B(\Lambda,\Sigma_{j,-T})'.
\end{equation*}
Observe that
\begin{align*}
  & \lvert G_{I,J}(\Lambda_{T,-T}, \Lambda_{-T}) -
    G_{I,J}(\tilde{\Lambda}_{T,-T}, \tilde{\Lambda}_{-T}) \rvert \\
  \leq & \textstyle \left \lvert \frac{1}{J} \sum_{j=1}^{J} \tr  ((B^{2}(\Lambda,\Sigma_{j,-T})-B^{2}(\tilde{\Lambda},\Sigma_{j,-T}))'(y_{j,-T} -
         \theta_{j,-T})(y_{j,-T} -
         \theta_{j,-T})' ) \right\rvert \\
  &  + \textstyle\left\lvert \frac{1}{J} \sum_{j=1}^{J} \tr  ((B^{2}(\Lambda,
    \Sigma_{j,-T})-B^{2}(\tilde{\Lambda},\Sigma_{j,-T}))'\Sigma_{j,-T} ) \right\rvert \\
  \leq &  \textstyle \frac{1}{J} \sum_{j=1}^{J}s_{1}(B^{2}(\Lambda,\Sigma_{j,-T})-B^{2}(\tilde{\Lambda},\Sigma_{j,-T}))\tr ((y_{j,-T} -
         \theta_{j,-T})(y_{j,-T} -
         \theta_{j,-T})' )  \\
  &  +\textstyle \frac{1}{J} \sum_{j=1}^{J} s_{1}(B^{2}(\Lambda,\Sigma_{j,-T})-B^{2}(\tilde{\Lambda},\Sigma_{j,-T}))\tr (\Sigma_{j,-T}),
\end{align*}
where the second inequality follows from the triangle inequality, von
Neumann's trace inequality, and the fact that the sum of the eigenvalues of
asymmetric matrix equals its trace. Now, using the fact that the operator norm
is bounded by the Frobenius norm, we obtain
\begin{align*}
  & \lvert G_{I,J}(\Lambda_{T,-T}, \Lambda_{-T}) -
    G_{I,J}(\tilde{\Lambda}_{T,-T}, \tilde{\Lambda}_{-T}) \rvert \\
  = &  2 \underline{s}_{\Sigma}^{-1}(\underline{s}_{\Sigma}^{-1} \vee
      \underline{s}_{\Sigma}^{-2}) K_{T,-T} B_{J}
      \lVert (\Lambda_{T,-T}, \Lambda_{-T}) -  (\tilde{\Lambda}_{T,-T}, \tilde{\Lambda}_{-T}) \rVert,
\end{align*}
where
$B_{J} = \frac{1}{J} \sum_{j=1}^{J}\tr ((y_{j,-T} -
\theta_{j,-T})(y_{j,-T} - \theta_{j,-T})' +\Sigma_{j,-T})$, which is $O_{p}(1)$ by
the law of large numbers. This establishes that
$\sup_{\overline{\mathcal{L}}}\lvert G_{I,J}(\Lambda_{T,-T}, \Lambda_{-T}) \rvert
\overset{p}{\to} 0.$ Again, the mode of convergence can be strengthened to
$L^{1}$ by verifying a uniform integrability conditions. To this end, note that
the summand in the definition of $G_{I,J}(\Lambda_{T,-T}, \Lambda_{-T})$ can be
bounded by 
\begin{align*}
  & \lvert \tr  (B(\Lambda,\Sigma_{j,-T})B(\Lambda,\Sigma_{j,-T})'((y_{j,-T} -
    \theta_{j,-T})(y_{j,-T} -
    \theta_{j,-T})' -\Sigma_{j,-T} )) \rvert \\
  \leq & K_{T,-T}^{2} \underline{s}_{\Sigma}^{-2}\tr ((y_{j,-T} -
         \theta_{j,-T})(y_{j,-T} -
         \theta_{j,-T})' +\Sigma_{j,-T} )),
\end{align*}
which follows by the same steps used when showing the Lipschitz condition. Since
the expectation of the square of the right-hand side is bounded uniformly over
$j$, it follows that $\sup_{j}\E \sup_{\overline{\mathcal{L}}}\lvert G_{I,J}(\Lambda_{T,-T},
\Lambda_{-T}) \rvert^{2} < \infty$. This concludes the proof for
$\mathrm{(I)}_{J}$, and the exact same steps with ``$-T$ replaced with $-1$''
also shows that $\mathrm{(III)}_{J}$ converges to zero uniformly over
$\overline{\mathcal{L}}$ in $L^{1}$.

For $\mathrm{(II)}_{J}$, note that
\begin{align*}
  & \left\lvert \textstyle \frac{1}{J} \sum_{j=1}^{J} 
    (B(\Lambda,\Sigma_{j,-T})'(y_{j,-T} - \theta_{j,-T}))(B(\Lambda,\Sigma_{j,-T})'\theta_{j,-T}  - \theta_{j,T})
    \right\rvert \\
  \leq & \textstyle\left\lvert \frac{1}{J} \sum_{j=1}^{J} 
         \theta_{j,-T}' B(\Lambda,\Sigma_{j,-T})B(\Lambda,\Sigma_{j,-T})'(y_{j,-T} -
         \theta_{j,-T}) \right \rvert \\
  & + \textstyle\left\lvert \frac{1}{J} \sum_{j=1}^{J} 
    \theta_{j,T}B(\Lambda,\Sigma_{j,-T})'(y_{j,-T} - \theta_{j,-T})
    \right\rvert.
\end{align*}
Note that the summand of the first term on the right-hand side can be written as
\begin{equation*}
  \tr ( B(\Lambda,\Sigma_{j,-T})B(\Lambda,\Sigma_{j,-T})'(y_{j,-T} -
  \theta_{j,-T})\theta_{j,-T}'),
\end{equation*}
which is very similar to the summand of $\mathrm{(I)}_{J}$. The same steps used
there go through without any added difficulty. The second term is even simpler,
and extremely similar to $\mathrm{(IV)}_{J}$ above in the decomposition of the
first term on the right-hand side of \eqref{eq:decomp_ure_loss_fc}. The same
lines of argument used to establish convergence of such term can be used here to
show the desired convergence result. Note that none of the convergence results
depend on the choice of sequence $\{((\theta_{j}', \theta_{j,T+1})',
\Sigma_{j})\}_{j=1}^{\infty}$ under Assumption \ref{assum:bounded_rand}.

Now, it remains to show that $\mathrm{(V)}_{J}$ and $\mathrm{(VI)}_{J}$
converges to zero uniformly over $\overline{\mathcal{L}}$, for almost all
sequences $\{((\theta_{j}', \theta_{j,T+1})',
\Sigma_{j})\}_{j=1}^{\infty}$. Here, it is convenient to treat
$\{((\theta_{j}', \theta_{j,T+1})', \Sigma_{j})\}_{j=1}^{\infty}$. I denote by
$\E _{f}$ the expectation with respect to the random sequence
$\{((\theta_{j}', \theta_{j,T+1})', \Sigma_{j})\}_{j=1}^{\infty}$. All almost
sure assertions in the remainder of the proof is with respect to the randomness
of $\{((\theta_{j}', \theta_{j,T+1})', \Sigma_{j})\}_{j=1}^{\infty}$. It follows
that
\begin{align*}
  &\textstyle \left\lvert  \frac{1}{J}\sum_{j=1}^{J}\left( B(\Lambda,\Sigma_{j,-T})'\Sigma_{j,-T}B(\Lambda,
    \Sigma_{j,-T}) - B(\Lambda,\Sigma_{j,-1})'\Sigma_{j,-1}B(\Lambda,\Sigma_{j,-1}) \right)
    \right\rvert \\
  \leq & \textstyle\left\lvert  \frac{1}{J}\sum_{j=1}^{J}\left( B(\Lambda,\Sigma_{j,-T})'\Sigma_{j,-T}B(\Lambda,
         \Sigma_{j,-T}) -\E _{f}B(\Lambda,\Sigma_{j,-T})'\Sigma_{j,-T}B(\Lambda,
         \Sigma_{j,-T}) \right) \right\rvert \\
  &+\textstyle \left\lvert   \frac{1}{J}\sum_{j=1}^{J}\left( B(\Lambda,
    \Sigma_{j,-1})'\Sigma_{j,-1}B(\Lambda,\Sigma_{j,-1}) -  \E _{f} B(\Lambda,
    \Sigma_{j,-1})'\Sigma_{j,-1}B(\Lambda,\Sigma_{j,-1}) \right)
    \right\rvert,
\end{align*}
where in the inequality I use Assumption \ref{assu:stationarity_main} and use the
fact that the expectations are equal. I show that the first term on the
right-hand side converges to $0$ almost surely, uniformly over
$\overline{\mathcal{L}}$. The same result can be shown for the second term using
the exact same argument. Define
\begin{align*}
  & G_{\mathrm{V}, J}(\Lambda_{T,-T}, \Lambda_{-T}) \\
  = & \textstyle \frac{1}{J}\sum_{j=1}^{J}\left( B(\Lambda,\Sigma_{j,-T})'\Sigma_{j,-T}B(\Lambda,
      \Sigma_{j,-T}) -\E _{f}B(\Lambda,\Sigma_{j,-T})'\Sigma_{j,-T}B(\Lambda,
      \Sigma_{j,-T}) \right).
\end{align*}
Note that since $\E _{f} s_{1}(\Sigma_{j})$ exists, we have
\begin{equation*}
  \E _{f}B(\Lambda,\Sigma_{j,-T})'\Sigma_{j,-T}B(\Lambda,
  \Sigma_{j,-T}) \leq  K_{T,-T}^{-2}\underline{s}_{\Sigma}^{-2}\E _{f}s_{1}(\Sigma_{j}) < \infty.
\end{equation*}
Hence, by the strong law of large numbers, we have $G_{\mathrm{V},
  J}(\Lambda_{T,-T}, \Lambda_{-T}) \to 0$ almost surely. For uniformity over
$\overline{\mathcal{L}}$, again I verify a Lipschitz condition for
$G_{\mathrm{V}, J}(\Lambda_{T,-T}, \Lambda_{-T})$. We have
\begin{align*}
  &\textstyle \left\lvert G_{\mathrm{V}, J}(\Lambda_{T,-T}, \Lambda_{-T}) - G_{\mathrm{V},J} (\tilde{\Lambda}_{T,-T}, \tilde{\Lambda}_{-T}) \right\rvert \\
  \leq & \textstyle \frac{1}{J}\sum_{j=1}^{J} \lvert \tr ((B^{2}(\Lambda,\Sigma_{j,-T})- B^{2}(\tilde{\Lambda},\Sigma_{j,-T}))\Sigma_{j,-T}) \rvert \\
  & +\textstyle \frac{1}{J}\sum_{j=1}^{J}  \E _{f}\lvert \tr ((B^{2}(\Lambda,
    \Sigma_{j,-T}) - B^{2}(\tilde{\Lambda},\Sigma_{j,-T}))\Sigma_{j,-T})\rvert \\
  \leq  & \textstyle 2 \underline{s}_{\Sigma}^{-1}(\underline{s}_{\Sigma}^{-1} \vee
          \underline{s}_{\Sigma}^{-2}) K_{T,-T} B_{J}
          \lVert (\Lambda_{T,-T}, \Lambda_{-T}) -  (\tilde{\Lambda}_{T,-T}, \tilde{\Lambda}_{-T}) \rVert 
\end{align*}
where the first inequality follows from multiple applications of the triangle
inequality, and the second inequality follows with
$B_{J} = \frac{1}{J}\sum_{j=1}^{J}(\tr (\Sigma_{j,-T}) + E
\tr (\Sigma_{j,-T}))$ from von Neumann's trace inequality and
\eqref{eq:B2_Lip}. Let $\overset{a.s.}{\to}$ denote almost sure convergence with
respect to the density $f_{(\theta',\theta_{T+1})',M}$. By the strong law of
large numbers, it follows that
$B_{J} \overset{a.s.}{\to} 2\E _{f} \tr (\Sigma_{j,-t}) $. Hence, by Lemma 1 of
\cite{andrews1992GenericUniformConvergence}, I conclude that
$\sup_{\overline{\mathcal{L}}}\lvert G_{\mathrm{V}, J}(\Lambda_{T,-T},
\Lambda_{-T}) \rvert \overset{a.s.}{\to} 0$. 

For $\mathrm{(VI)_{J}}$, the triangle inequality gives
\begin{align*}
  & \textstyle \left\lvert \frac{1}{J}\sum_{j=1}^{J}\left( (B(\Lambda,
    \Sigma_{j,-T})'\theta_{j,-T}  - \theta_{j,T})^{2} -  (B(\Lambda,
    \Sigma_{j,-1})'\theta_{j,-1}  - \theta_{j,T+1})^{2} \right)  \right\rvert \\
  \leq & \textstyle \left\lvert \frac{1}{J}\sum_{j=1}^{J}\left( (B(\Lambda,
         \Sigma_{j,-T})'\theta_{j,-T}  - \theta_{j,T})^{2} -  \E _{f}(B(\Lambda,
         \Sigma_{j,-T})'\theta_{j,-T}  - \theta_{j,T})^{2} \right)  \right\rvert \\
  & +\textstyle \left\lvert \frac{1}{J}\sum_{j=1}^{J}\left(   (B(\Lambda,
    \Sigma_{j,-1})'\theta_{j,-1}  - \theta_{j,T+1})^{2}- \E _{f}(B(\Lambda,
    \Sigma_{j,-1})'\theta_{j,-1}  - \theta_{j,T+1})^{2} \right)  \right\rvert.
\end{align*}
Again, I show that the desired convergence result only for the first term since
the result for the second term will follow from the exact same steps. Define
\begin{align*}
  &G_{\mathrm{VI},J}(\Lambda_{T,-T}, \Lambda_{-T}) \\
  = & \textstyle \frac{1}{J}\sum_{j=1}^{J}\left( (B(\Lambda,
      \Sigma_{j,-T})'\theta_{j,-T}  - \theta_{j,T})^{2} -  \E _{f}(B(\Lambda,
      \Sigma_{j,-T})'\theta_{j,-T}  - \theta_{j,T})^{2} \right).
\end{align*}
To show $G_{\mathrm{VI},J}(\Lambda_{T,-T}, \Lambda_{-T}) \overset{a.s.}{\to} 0$,
note that
\begin{align*}
  & \E _{f}(B(\Lambda,\Sigma_{j,-T})'\theta_{j,-T}  - \theta_{j,T})^{2} \\
  \leq & 2 \E _{f}\tr  (B^{2}(\Lambda,
         \Sigma_{j,-T})\theta_{j,-T}\theta_{j,-T}') + \E _{f}\theta_{jT}^{2} \\
  \leq &\textstyle 2 K_{T,-T}^{2}\underline{s}_{\Sigma}^{-2} \sum_{t=1}^{T-1}
         \E _{f}\theta_{jt}^{2} + \E _{f}\theta_{jT}^{2} < \infty,
\end{align*}
where the second inequality follows because
\begin{equation*}
  \tr  (B^{2}(\Lambda,
  \Sigma_{j,-T})\theta_{j,-T}\theta_{j,-T}') \leq s_{1}(B^{2}(\Lambda,
  \Sigma_{j,-T})) \tr (\theta_{j,-T}\theta_{j,-T}')
\end{equation*}
due to von Neumann's trace inequality. Hence, by the strong law of large
numbers, we have $G_{\mathrm{VI},J}(\Lambda_{T,-T}, \Lambda_{-T}) \overset{a.s.}{\to} 0$.

Once again, I verify a Lipschitz condition to show that this convergence is in
fact uniform over $\overline{\mathcal{L}}$. %
By \eqref{eq:B_Lip} and \eqref{eq:B2_Lip},
\begin{align*}
  &\lvert  (B(\Lambda,\Sigma_{j,-T})'\theta_{j,-T}  - \theta_{j,T})^{2}
    - (B(\tilde{\Lambda},\Sigma_{j,-T})'\theta_{j,-T}  - \theta_{j,T})^{2} \rvert \\
  = & \lvert \tr  ((B^{2}(\Lambda,
      \Sigma_{j,-T}) - B^{2}(\tilde{\Lambda},
      \Sigma_{j,-T}))\theta_{j,-T}\theta_{j,-T}') \rvert \\
  & +2 \lvert (B(\Lambda,\Sigma_{j,-T}) - B(\Lambda,
    \Sigma_{j,-T}))'\theta_{j,-T}\theta_{jT} \rvert \\
  \leq & \textstyle 2 \underline{s}_{\Sigma}^{-1}(\underline{s}_{\Sigma}^{-1} \vee
         \underline{s}_{\Sigma}^{-2}) K_{T,-T} \lVert \theta_{j,-T} \rVert^{2}
         \lVert (\Lambda_{T,-T}, \Lambda_{-T}) -
         (\tilde{\Lambda}_{T,-T}, \tilde{\Lambda}_{-T}) \rVert \\
  & + 2 (\underline{s}_{\Sigma}^{-1} \vee
    \underline{s}_{\Sigma}^{-2})\lVert \theta_{j,-T}\theta_{jT} \rVert \lVert (\Lambda_{T,-T}, \Lambda_{-T}) -
    (\tilde{\Lambda}_{T,-T}, \tilde{\Lambda}_{-T}) \rVert \\
  := & B_{j}\lVert (\Lambda_{T,-T}, \Lambda_{-T}) -
       (\tilde{\Lambda}_{T,-T}, \tilde{\Lambda}_{-T}) \rVert
\end{align*}
Likewise, we have
\begin{align*}
  &\lvert  \E _{f}(B(\Lambda,\Sigma_{j,-T})'\theta_{j,-T}  - \theta_{j,T})^{2}
    - \E _{f}(B(\tilde{\Lambda},\Sigma_{j,-T})'\theta_{j,-T}  - \theta_{j,T})^{2} \rvert \\
  \leq & \textstyle 2 \underline{s}_{\Sigma}^{-1}(\underline{s}_{\Sigma}^{-1} \vee
         \underline{s}_{\Sigma}^{-2}) K_{T,-T} \E _{f}\lVert \theta_{j,-T} \rVert^{2}
         \lVert (\Lambda_{T,-T}, \Lambda_{-T}) -
         (\tilde{\Lambda}_{T,-T}, \tilde{\Lambda}_{-T}) \rVert \\
  & + 2 (\underline{s}_{\Sigma}^{-1} \vee
    \underline{s}_{\Sigma}^{-2})\lVert \E _{f}\theta_{j,-T}\theta_{jT} \rVert \lVert (\Lambda_{T,-T}, \Lambda_{-T}) -
    (\tilde{\Lambda}_{T,-T}, \tilde{\Lambda}_{-T}) \rVert \\
  := & B_{{E}_{f}}  \lVert (\Lambda_{T,-T}, \Lambda_{-T}) -
       (\tilde{\Lambda}_{T,-T}, \tilde{\Lambda}_{-T}) \rVert.
\end{align*}
Combining the two inequalities, we have
\begin{align*}
  & \lvert G_{\mathrm{VI},J}(\Lambda_{T,-T}, \Lambda_{-T}) -
    G_{\mathrm{VI},J}(\tilde{\Lambda}_{T,-T}, \tilde{\Lambda}_{-T}) \rvert \\
  \leq & \textstyle \left( \frac{1}{J}\sum_{j=1}^{J}B_{j} + B_{E_{f}} \right) \lVert (\Lambda_{T,-T}, \Lambda_{-T}) -
         (\tilde{\Lambda}_{T,-T}, \tilde{\Lambda}_{-T}) \rVert
\end{align*}
Hence, it suffices to show that $\frac{1}{J}\sum_{j=1}^{J}B_{j}
\overset{a.s.}{\to} B $ for some fixed $B$. By the strong law of large numbers,
we have
\begin{align*}
  & \textstyle \frac{1}{J}\sum_{j=1}^{J} \lVert \theta_{j,-T} \rVert^{2} 
    \overset{a.s.}{\to} \E \lVert \theta_{j,-T} \rVert^{2} <\infty \\
  & \textstyle \frac{1}{J}\sum_{j=1}^{J} \lVert \theta_{j,-T}\theta_{jT} \rVert 
    \overset{a.s.}{\to} \E \lVert \theta_{j,-T}\theta_{jT} \rVert \leq
    (\E \theta_{j,T}^{2}\E \lVert \theta_{j,-T} \rVert^{2})^{\frac{1}{2}} <\infty,
\end{align*}
which establishes that $\frac{1}{J}\sum_{j=1}^{J}B_{j}$ indeed converges almost
surely to a finite value, which concludes the proof.

\section{Sufficient Condition for Assumption \ref{assu:bdd_quan}}
\label{sec:suff-cond-assumpt}

To show that the term
$\E  \textstyle \sup_{\mu \in \mathcal{M}_{j}} \lVert \mu \rVert^{2} $ is
bounded, note that
\begin{equation*}
\E  \textstyle \sup_{\mu \in \mathcal{M}_{j}} \lVert \mu \rVert^{2} =
\sum_{t=1}^{T} \E  q_{1-\tau}(\{ y_{jt}^{2} \}_{j=1}^{J})
\end{equation*}
by the definition of $\mathcal{B}$. Hence, it suffices to show
$\E  q_{1- \tau}(\{y_{jt}^{2}\}_{j=1}^{J}) = O(1)$ for each $t \leq
T$. To control the sample quantile behavior of $\{y_{jt}^{2}\}_{j=1}^{J}$, I
impose an additional condition. Write $\varepsilon_{jt} := y_{jt} - \theta_{jt}$
so that $\E\varepsilon_{jt} = 0$ and $\E\varepsilon_{jt}^{2} = \sigma^{2}_{jt}$,
where $\sigma^{2}_{jt}$ denotes the $t$th diagonal entry of $\Sigma_{j}$. Note
that $\overline{\sigma}^{2}_{t} := \sup_{j}\sigma^{2}_{jt} < \infty$ by
Assumption \ref{assum:bounded}. I assume that the distribution of
$\varepsilon_{jt}$ belongs to a scale family with finite fourth moments.

\begin{assumption}[Scale family]\label{assu:scale_fam}
  For each $t$, $ \varepsilon_{jt}/\sigma_{jt}\overset{i.i.d.}{\sim} F_{t} $ for
  $j=1, \dots, J$, where $F_{t}$ is a distribution function with finite fourth
  moments.
\end{assumption}
Note that the assumption is notably weaker than requiring that the noise vectors
$\varepsilon_{j}$ for $j=1, \dots, J$ belong to a multivariate scale family,
which restricts the joint distribution across $t$ in a much more stringent
way. Here, I instead require that the error terms belong to a scale family only
for each period. It can be shown that, by Assumption \ref{assum:bounded}, the
problem of bounding $\E  q_{1- \tau}(\{y_{jt}^{2}\}_{j=1}^{J})$ boils
down to the problem of bounding
$\E  q_{1-
  \tau}(\{(\varepsilon_{jt}/\sigma_{jt})^{2}\}_{j=1}^{J})$. Then, by Assumption
\ref{assu:scale_fam}, this simplifies to bounding the mean of the sample
quantile of an i.i.d. sample. I use a result given by
\cite{okolewski2001SharpDistributionfreeBounds} to derive a bound on this
quantity without having to further impose conditions on the distribution
$F_{t}$.

\begin{example}[Teacher value-added]
  In the teacher value-added example, Assumption \ref{assu:scale_fam} is
  satisfied as long as the idiosyncratic error terms are i.i.d across $j$ and
  have finite fourth moment. Hence, this assumption is almost always satisfied
  in teacher value-added models, or in linear panel data models in general.
\end{example}

\begin{lemma}
  Suppose Assumptions \ref{assum:bounded} and \ref{assu:scale_fam} hold. Then,
  Assumption \ref{assu:bdd_quan} holds. 
\end{lemma}

\begin{proof}
Observe that
\begin{align*}
  q_{1-\tau}^{2}(\{\lvert y_{jt}\rvert \}_{j=1}^{J})
  =  \, q_{1-\tau}(\{ y_{jt}^{2}\}_{j=1}^{J}) 
  \leq & \, q_{1-\tau}(\{2\theta_{jt}^{2}  + 2 \varepsilon_{jt}^{2}\}_{j=1}^{J}) \\
  \leq & \, 2 q_{ 1-\tau/2}(\{ \theta_{jt}^{2} \}_{j=1}^{J}) + 2 q_{1-\tau/2}(\{ \varepsilon_{jt}^{2}\}_{j=1}^{J}),
\end{align*}
where the last inequality follows from a property of a quantile that the
$1-\tau $ quantile of the sum of two random variables are bounded by the sum of
the $1 - \tau/2$ quantiles of those two random variables. It follows that
$ q_{1-\tau/2}(\{ \theta_{jt}^{2} \}_{j=1}^{J}) < \sup_{j}\theta_{jt}^{2} <
\infty$, and thus it suffices to show that
$ \limsup_{J \to \infty} Eq_{1-\tau/2}(\{ \varepsilon_{jt}^{2}\}_{j=1}^{J}) <
\infty$. I have
\begin{equation*}
  q_{1-\tau/2}(\{\varepsilon_{jt}^{2}\}_{j=1}^{J}) 
  =  q_{1-\tau/2}(\{\sigma_{jt}^{2}\eta_{jt}^{2}\}_{j=1}^{J}) 
  \leq   q_{1-\tau/2}(\{\overline{\sigma}_{t}^{2}\eta_{jt}^{2}\}_{j=1}^{J}) 
  =   \overline{\sigma}_{t}^{2} q_{1-\tau/2}(\{\eta_{jt}^{2}\}_{j=1}^{J}),
\end{equation*}
where the first equality holds by Assumption \ref{assu:scale_fam} and the
inequality holds because replacing $\sigma_{jt}^{2}$ by
$\overline{\sigma}_{t}^{2}$ makes all the sample points larger, and thus the
sample quantile larger. Define
$\underline{\tau} =2 (1 - \lceil{J (1-\tau/2)}\rceil / J)$, which is the largest
$\underline{\tau} \leq \tau$ such that $J(1-\underline{\tau}/2)$ is an
integer. By a result on the bias of sample quantiles given by
\cite{okolewski2001SharpDistributionfreeBounds}, it follows that
\begin{equation*}
  \sup_{J} \E  \, q_{1-\underline{\tau}/2}(\{\eta_{jt}^{2}\}_{j=1}^{J}) \leq \left(
    \frac{\var (\eta_{jt}^{2})}{(1-\underline{\tau} / 2)\underline{\tau}/2} \right)^{1/2} +
  F_{t}^{-1}(1-\underline{\tau}/2) <\infty,
\end{equation*}
and
$q_{1-{\tau}/2}(\{\eta_{jt}^{2}\}_{j=1}^{J}) \leq
q_{1-\underline{\tau}/2}(\{\eta_{jt}^{2}\}_{j=1}^{J})$ because
$\underline{\tau} \leq \tau$. This establishes that
$ \E  \sup_{\mu} \lVert \mu \rVert^{2} < \infty$, which concludes the
proof.
\end{proof}

\end{appendices}
\end{document}